\newcommand{\RR}{\mathbb{R}}
\newcommand{\ds}{\displaystyle}
\DeclareMathOperator{\argmax}{argmax}
\DeclareMathOperator{\argsup}{argsup}
\newcommand{\bbm}{\begin{bmatrix}}
\newcommand{\bpm}{\begin{pmatrix}}
\newcommand{\ebm}{\end{bmatrix}}
\newcommand{\epm}{\end{pmatrix}}
 \newcommand{\dsdel}[2]{\displaystyle\frac{\partial #1}{\partial #2}}
\newcommand{\ddt}[1]{\frac{d #1}{dt}}
\newcommand{\dsddx}[2]{\displaystyle\frac{d #1}{d #2}}
\newcommand{\dsddt}[1]{\displaystyle\frac{d #1}{dt}}
\renewcommand{\abstractname}{Abstract}
\numberwithin{equation}{section}
\numberwithin{figure}{section}
\title{A PDE Model for Protocell Evolution and the Origin of Chromosomes via Multilevel Selection}
\author[1,2]{Daniel B. Cooney}
\author[3]{Fernando W. Rossine}
\author[4]{Dylan H. Morris}
\author[3]{Simon A. Levin}
\affil[1]{\small{Department of Mathematics, University of Pennsylvania, Philadelphia, PA, USA}}
\affil[2]{\small{Center for Mathematical Biology, University of Pennsylvania, Philadelphia, PA, USA}}
\affil[3]{Department of Ecology and Evolutionary Biology, Princeton University, Princeton, NJ, USA}
\affil[4]{Department of Ecology and Evolutionary Biology, University of California, Los Angeles, CA, USA}
\begin{document}

\renewcommand{\baselinestretch}{1.1}
\newtheorem{definition}{Definition}[section]
\newtheorem{theorem}{Theorem}[section]
\newtheorem{lemma}[theorem]{Lemma}
\newtheorem{corollary}[theorem]{Corollary}
\newtheorem{claim}[theorem]{Claim}
\newtheorem{fact}[theorem]{Fact}
\newtheorem{proposition}{Proposition}[section]
\newtheorem{remark}{Remark}[section]
\newtheorem{example}{Example}[section]

\newcommand{\qedsymb}{\mbox{ }~\hfill~{\rule{2mm}{2mm}}}
\newenvironment{proof1}{\begin{trivlist}
\item[\hspace{\labelsep}{\bf\noindent Proof: }]
}{\qedsymb\end{trivlist}}

\newenvironment{hackyproof}{\begin{trivlist}
\item[\hspace{\labelsep}{\bf\noindent Proof: }]
}{
\end{trivlist}}

\maketitle

\begin{abstract}
    The evolution of complex cellular life involved two major transitions: the encapsulation of self-replicating genetic entities into cellular units and the aggregation of individual genes into a collectively replicating genome. In this paper, we formulate a minimal model of the evolution of proto-chromosomes within protocells. We model a simple protocell composed of two types of genes: a ``fast gene'' with an advantage for gene-level self-replication and a ``slow gene'' that replicates more slowly at the gene level, but which confers an advantage for protocell-level reproduction. Protocell-level replication capacity depends on cellular composition of fast and slow genes. We use a partial differential equation to describe how the composition of genes within protocells evolves over time under within-cell and between-cell competition. We find that the gene-level advantage of fast replicators casts a long shadow on the multilevel dynamics of protocell evolution: no level of between-protocell competition can produce coexistence of the fast and slow replicators when the two genes are equally needed for protocell-level reproduction. By introducing a ``dimer replicator'' consisting of a linked pair of the slow and fast genes, we show analytically that coexistence between the two genes can be promoted in pairwise multilevel competition between fast and dimer replicators, and provide numerical evidence for coexistence in trimorphic competition between fast, slow, and dimer replicators. Our results suggest that dimerization, or the formation of a simple chromosome-like dimer replicator, can help to overcome the shadow of lower-level selection and work in concert with deterministic multilevel selection to allow for the coexistence of two genes that are complementary at the protocell-level but compete at the level of individual gene-level replication.
\end{abstract}

{\hypersetup{linkbordercolor=black, linkcolor = black}
\begin{spacing}{0.01}
\renewcommand{\baselinestretch}{0.1}\normalsize
\tableofcontents
\addtocontents{toc}{\protect\setcounter{tocdepth}{2}}
\end{spacing}
\singlespacing

\section{Introduction}

\subsection{Protocell Evolution and the Evolution of Chromosomes}

The development of a genome with sufficient size to encode for complex biological function was a major step toward the evolution of cellular life. This major evolution transition required overcoming a substantial hurdle known as Eigen's paradox: the error-rate of self-replicating genetic templates is too high to support large genomes without a self-correcting mechanism, but a relatively substantial genetic sequence is required to encode for an enzyme capable of performing such a task \cite{eigen1971selforganization}. How could individual genetic replicators ``cooperate'' (serving complementary roles to produce complex functions) while competing with each other for their own individual-level replication?

Eigen and Schuster \cite{eigen1977principle,eigen1979abstract,eigen1979realistic,eigen1980hypercycles} proposed that coexistence of complementary genes could be produced by a hypercycle: a collection of replicators with a cyclic interdependency for catalyzing their reproduction. Others, including Szathmary, proposed that cooperative coexistence could be maintained by the formation of protocells---collections of genes encased in a lipid membrane---which could localize the benefits conferred by a given replicator to nearby copies of a complementary replicator \cite{szathmary1987group,bresch1980hypercycles,szostak2001synthesizing,chen2005rna}. These two hypotheses are themselves potentially complementary; Michod \cite{michod1983population} showed that although encapsulation is not strictly needed to sutain a hypercycle, spatial structuring plays an important role in maintaining the coexistence of complementary genes. Further work  by Hogeweg, Takeuchi, and coauthors has explored a variety of comparisons between the mechanisms of spatial self-organization and pre-existing protocell structure in allowing coexistence of complementary genetic templates \cite{hogeweg2003multilevel,takeuchi2009multilevel,takeuchi2012evolutionary}.

The evolution of protocells is a particularly interesting problem in the framework of major evolution transitions \cite{szathmary1995major,szathmary2015toward}; natural selection acts both through competition among genes within protocells for replication and also through replication competition among the protocells themselves. Bresch and coauthors \cite{bresch1980hypercycles,niesert1981origin} introduced a ``package model'' of protocells to explore the role that a protocell-structure could play in maintaining coexistence of a hypercycle of RNA replicators. Packages divide upon reaching a certain density of replicators; this produces a new level of selection (package-level selection). Szathmary and Demeter \cite{szathmary1987group,grey1995re} introduced a stochastic corrector model of gene-gene cooperation. They found that coexistence between two complementary genes could be maintained via compartmentalization into self-replicating protocells even if one of the two genes had an advantage for individual-level replication. Recent experimental work using serial transfer methods has shown that compartmentalization into cell-sized water-in-oil droplets can promote the coexistence of host and parasitic RNA replicators in cases when well-mixed competition produces exclusion of host replicators by parasites \cite{bansho2016host}. Further theoretical work on the evolution of protocells has considered nested models for template coexistence both in finite populations \cite{alves2000group,fontanari2006coexistence,silvestre2005template} and in the PDE limit of large populations of protocells each containing many genetic templates \cite{fontanari2013solvable,fontanari2014effect,fontanari2014nonlinear}, as well as a model showing how information-sharing between protocells can facilitate the acquisition of sufficient genetic information necessary for the evolution of complex biological function \cite{sinai2018primordial}.

Another major evolutionary transition was the emergence of chromosomes: individual self-replicating genes were joined to form multi-gene polymers that replicate together. While the evolution of linkage of complementary genes has been explored in the context of free-living replicators \cite{levin2020social,levin2017evolution}, the origin of chromosomes is typically explored in the context of the transition between protocells with unlinked genes and decentralized gene-level reproduction to cells whose genes are combined into chromosomes, and an even later evolution of mitotic mechanisms to synchronize gene-level and cell-level replication \cite{boyden1953comparative, gabriel1960primitive}. Gabriel \cite{gabriel1960primitive} proposed a simple model for the evolution of chromosomes: supposing that a protocell requires a given number of necessary genes, one can calculate the probability that at least one necessary gene is lost during cell division. Two ways to reduce this probability of stochastic loss are to increase the copy number of genes (polyploidy) and to link genes into chromosomes. Gabriel concluded that chromosomal gene linkage was the more economical solution to protecting necessary genetic information from stochastic loss \cite{gabriel1960primitive}.

Stochastic gene loss was then considered in a dynamical multilevel setting by Maynard Smith and Szathmary \cite{smith1993origin}. They applied a ``stochastic corrector'' model to study the evolution of protocells that feature complementary independent template genes and proto-chromosomes consisting of a copy of each gene. Assuming that the proto-chromosomes replicate at half the speed of the independent templates but confer a collective advantage to templates in their protocell, Maynard Smith and Szathmary showed that proto-chromosomes could emerge under multilevel selection by preventing the stochastic loss of complementary genes. Szathmary and Maynard Smith also proposed a molecular mechanism for the origin of genetic linkage \cite{szathmary1993evolution}, and further work on the evolution of genomes has explored the evolution of division-of-labor of early replicators into roles corresponding to enzymes and genomes \cite{boza2014evolution,takeuchi2017origin,takeuchi2019origin} and has shown that the existence of chromosomes can facilitate the evolution of specialist enzymes facilitating more complex cell function \cite{szilagyi2012early,szilagyi2020evolution}.  

In both the package model and the stochastic corrector model, protocell-level selection is most capable of achieving coexistence of unlinked complementary genes for intermediate numbers of replicators per protocell \cite{niesert1981origin,szathmary1987group,grey1995re}. For small copy numbers, the random segregation of templates during protocell fission can result in stochastic loss of one of the necessary genes. For large copy numbers, the individual advantage of parasitic replicators dominates the dynamics and results in competitive exclusion of templates that are less viable under gene-level selection. Maynard Smith and Szathmary showed that dimerization---the formation of proto-chromosomes---can evolve because it allows protocells to overcome template stochastic loss when gene copy numbers are low. 
In this paper, we consider the problem of maintaining complementary genes in protocells when gene copy number is high and within-cell dynamics are quasi-deterministic. We show that protecell-level selection can still promote the formation of proto-chromosomes.  In the large copy number case, proto-chromosomes evolve because they permit protocells to overcome the barrier to gene coexistence caused by within-cell replication competition among genes, competition that can be parasitic from the perspective of the cell.

\subsection{Past Work on Modeling Multilevel Selection}

Recently, the tension between evolutionary forces operating at multiple levels of selection has been explored through a variety of stochastic and deterministic models for group-structured populations using the framework of nested birth-death processes. Luo introduced a finite population ball-and-urn model for multilevel selection featuring two types of individuals: defectors with a faster rate of individual-level replication rate and cooperators who confer a benefit to their group's collective replication rate \cite{luo2014unifying,van2014simple}. Luo and Mattingly considered a PDE description of this model in the limit of infinitely many groups and infinite group size, studying the long-time behavior of the PDE, and characterizing whether the population would converge to full-defection or a steady state density supporting cooperation depending on the relative strength of within-group and between-group selection \cite{luo2017scaling}. Extensions of the Luo-Mattingly model have been used to study fixation probabilities in the finite population setting \cite{mcloone2018stochasticity}, quasi-stationary states in a diffusive PDE scaling limit \cite{velleret2019two,velleret2020individual}, and the evolutionary dynamics of host-parasite competition \cite{pokalyuk2019diversity,pokalyuk2019maintenance}. Related PDE models from Simon and coauthors have also incorporated additional mechanisms including group-level carrying capacities and fission/fussion dynamics to further explore multilevel selection and the evolution of cooperation \cite{simon2010dynamical,simon2012numerical,simon2013towards,simon2016group,henriques2019acculturation}.

The Luo-Mattingly model was also extended to incorporate within-group and between-group competition that depends on the payoffs received by playing two-strategy games within groups \cite{cooney2019replicator,cooney2019assortment,cooney2020analysis}. 
For games in which an intermediate level of cooperation maximized collective payoff, it was shown that the individual-level selection cast a long shadow on the dynamics of multilevel selection: the average payoff of the population was limited by the payoff of the all-cooperator group, and the optimal collective payoff could not be achieved even in the limit of infinitely strong between-group competition. This shadow of lower-level selection is particularly extreme when the strategies interpreted as cooperation and defection are actually equally desirable at the group level. In this case in which a fifty-fifty mix of cooperators and defectors was optimal at the group level, it was shown that any slight selective advantage for defectors resulted in defectors taking over the whole population regardless of the strength of between-group competition \cite{cooney2019replicator}. This work was further generalized to incorporate any continuously differentiable within-group and between-group replication functions, and it was shown that a similar shadow of lower-level selection occured in this broader class of models \cite{cooney2021long}.

Fontanari and Serva used a PDE model for multilevel selection were used by  to study the question of template coexistence in protocell evolution, adapting a degenerate parabolic equation introduced by Kimura \cite{kimura1984evolution,kimura1986diffusion,ogura1987stationary,ogura1987stationary2} to study the evolution of complementary genes in a protocell-structured population. Assuming that one gene has an individual-level selective advantage but that the two genes are perfect complements for protocell-level competition, Fontanari and Serva find parameter regimes in terms of the strength of selection at the two levels, genetic drift, and migration such that the two genes can coexist in the long term. In this paper, we consider a similar question of template coexistence in models that focus on the deterministic evolutionary forces of within-group and between-group competition. On a related note, Fontanari has also studied the evolution of chromosomes in protocells in a finite population Moran process model with protocell fissioning \cite{fontanari2012genetic}, and explored the fixation probability of chromosomes using approaches that have been applied to finite population models of the evolution of cooperation via multilevel selection \cite{traulsen2005stochastic,traulsen2006evolution,traulsen2008analytical,bottcher2016promotion}.

\subsection{Outline of the Paper}

In this paper, we model selective dynamics both for replication of genes within protocells and the replication of entire protocells. We will consider two types of genes: one with a faster fixed replication rate within cells (fast replicators) and another with a slower replication rate (slow replicators). We consider protocell-level reproduction rates that can either favor as many slow genes as possible (analogous to the Luo-Mattingly model \cite{luo2017scaling}) or that most favor protocells with a mix of fast and slow replicators (similar to the Fontanari-Serva model  \cite{fontanari2013solvable}). These gene-level and protocell-level birth rates can be used to formulate a PDE multilevel selection model of the form studied by Cooney and Mori \cite{cooney2021long}, and we can apply existing results to characterize the long-time genetic composition of the population of protocells. We find that the multilevel dynamics of this protocell model display a shadow of lower-level selection, and, in the case in which fast and slow replicators are equally necessary for protocell-level replication, any slight gene-level advantage for the fast replicators prevents any coexistence of the two replicators under our model of multilevel selection. This inspires us to introduce a third replicator type, a dimer consisting of a linked fast and slow gene, and we show that inclusion of slow-fast dimers can help protocells overcome the shadow of lower-level selection and promote the coexistence of the complementary fast and slow genes.

The remainder of the paper is organized in the following manner. In Section \ref{sec:protocell}, we present our baseline model for multilevel competition in a population of protocells featuring fast and slow replicators, review the relevant backround for analyzing PDE models of mutlilevel competition between two replicators, and characterize the longtime behavior of this baseline model. In Section \ref{sec:trimorphicformulation}, we extend the baseline protocell model to include dimer replicators and describe our PDE model for protocell evolution featuring all three types of replicators and the corresponding gene-level and protocell-level replication rates in our extended model. In Section \ref{sec:simplexedgedynamics}, we consider restrictions of the full trimorphic model to understand pairwise competition between fast and dimer replicators and competition between slow and dimer replicators, using our exisiting PDE framework to show analytically how the presence of dimers can help to promote coexistence of the fast and slow gene. In Section \ref{sec:trimorphicnumerics}, we provide numerical results for the full competition between slow, fast, and dimer replicators, showing how the introduction of dimers can allow for the coexistence of the fast and slow genes for cases in which no coexitence was possible under the pairwise competition of fast and slow replicators. Section \ref{sec:discussion} provides a discussion of our results and directions for future work for models of dimerization via multilevel selection. The derivation of the PDE models from stochastic two-level birth-death processes are provided in Section \ref{sec:derivation},  and the finite volume scheme to study numerical solutions of our multilevel PDE models are provided in Section \ref{sec:finitevolume}.

\section{Baseline Model for Protocell Evolution: Fast and Slow Replicators}  \label{sec:protocell}

In this section, we introduce a baseline PDE model of multilevel selection in a population of protocells featuring fast and slow replicators. In Section \ref{sec:PDEprotocellformulation}, we provide the gene-level replication rates of the two replicators and formulate how the composition of replicators impacts the rate of protocell-level replication. In that subsection, we also present our PDE for multilevel protocell composition, but postpone the derivation of this equation from an individual-based process to Section \ref{sec:derivationbaseline}. In Section \ref{sec:existingresults}, we recall existing results for a class of PDE models for multilevel selection that includes our baseline protocell model as a special case. Finally, in Section \ref{sec:protocellongtime}, we apply the results from Section \ref{sec:existingresults} to study the long-time behavior of our protocell model and show how the shadow of lower-level selection can impede the possibility of achieving coexistence of complementary genes via mulltilevel selection.

\subsection{Formulation for PDE Model of Protocell Evolution}
\label{sec:PDEprotocellformulation}
As a baseline model for protocell evolution, we consider protocells that contain two types of individuals, fast and slow replicators. Evolutionary competition acts at two levels of selection: individual replicators compete within protocells for gene-level replication and protocells also compete for collective replication based upon their composition of replicators. We can describe the evolutionary dynamics at each level of selection using a nested birth-death process, in which both gene-level / within-protocell and between-protocell replication events are modeled by continuous-time Moran processes. 

We assume that within-protocell competition is frequency-independent, with slow replicators producing copies of themselves at rate $1 + w_I b_S$ and fast replicators producing copies of themselves at rate $1 + w_I b_F$, where $b_F > b_S$ and $w_I$ describes a measure of the strength of gene-level selection relative to the background rate $1$ of neutral birth events. We assume that every birth is paired with a death of a randomly chosen replicator, so the number of replicators is unchanged due to gene-level competition. As a special case inspired by the Luo-Mattingly and Fontanari-Serva models, we will consider a family of selective birth rates given by $b_S = 1$ and $b_F = 1 + s$, where where $s > 0$ is the individual-level advantage for fast-replicators. %
For between-protocell competition, we assume that a protocell composed of a fraction $x$ of slow replicators and a fraction $1-x$ fast replicators will produce a copy of itself at a rate $\Lambda \left( 1 + w_G G_{FS}(x)\right)$, where $\Lambda$ is a relative rate of gene-level and protocell-level birth events, $w_G$ describes the strength of selection for protocell-level replication events, and $G_{FS}(x)$ encodes the dependence of protocell-level replication rate on cellular gene composition. We further assume that offspring protocells replace a randomly chosen protocell in the population, and therefore the number of protocells in the population is constant over time. This protocell-level reproduction function $G_{FS}(x)$, which we will sometimes call the protocell-level fitness, is given by the following quadratic function 
\begin{equation} \label{eq:Gofxfdeta}
G_{FS}(x) =  x \left( 1 - \eta x \right),
\end{equation}
where $\eta \in [0,1]$ is a parameter describing the complementarity of fast and slow genes for protocell-level reproduction. The purpose of introducing this family of protocell-level reproduction rates is to consider the range of possible protocell-level selective effects ranging from promoting as many slow replicators as possible (when $\eta = 0$ and $G_{FS}(x) = x$) to most favoring a fifty-fifty mix of fast and slow replicators (when $\eta = 1$ and $G_{FS}(x) = x (1-x)$). In particular, the formula for $G(x)$ interpolates between the protocell-level replication functions studied by Luo and Mattingly for $\eta = 0$ \cite{luo2014unifying,luo2017scaling} %
and by Fontanari and Serva for $\eta = 1$ \cite{fontanari2014effect,fontanari2014nonlinear,fontanari2013solvable}. 

For various values of the complementarity parameter $\eta$, we see that the composition maximizing the protocell-level reproduction rate $x^{*}_{FS} := \argmax_{x \in [0,1]} G(x)$ has the piecewise characterization
\begin{equation} \label{eq:xstarFS}
    x^*_{FS} = \left\{
    \begin{array}{cr}
    1 &: \eta < \frac{1}{2} \vspace{2mm} \\ 
    \ds\frac{1}{2 \eta} &: \eta \geq \frac{1}{2}
    \end{array}
    \right. .
\end{equation}
This tells us that protocell-level replication is maximized by the all-slow composition when $\eta \leq \frac{1}{2}$, while a mix of fast and slow replicators is most favored for $\eta > \frac{1}{2}$. We illustrate the different possible collective reproduction rates $G_{FS}(x)$ in Figure \ref{fig:Gfsfunction}; slow and fast replicators display a greater degree of complementarity as the parameter $\eta$ increases from $0$ to $1$. Note that the reproduction rate of the all-slow protocell $G_{FS}(1) = 1 - \eta$ is a decreasing function of $\eta$ and that all-slow and all-fast protocells both achieve the value $G_{FS}(0) = G_{FS} = 0$ when $\eta = 1$ (where the genes are perfect complements, and protocell-level fitness is maximized by a fifty-fifty mix of fast and slow genes).

\begin{figure}[ht]
    \centering
    \includegraphics[width = 0.6\textwidth]{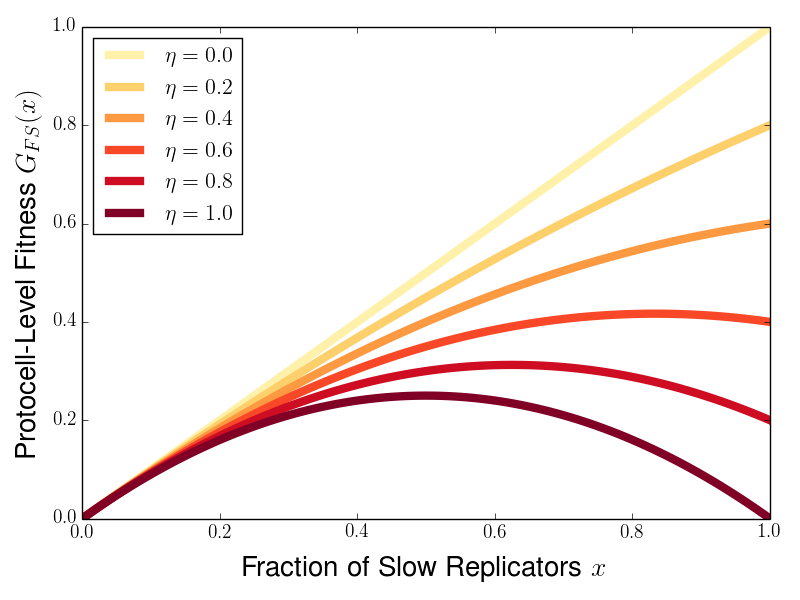}
    \caption{Protocell-level reproduction rates $G_{FS}(x)$ for various levels of the complementarity parameter $\eta$. When $\eta = 0$, protocell-level fitness is a linear increasing function of the fraction of slow replicators, while protocell-level fitness is maximized by a mix of slow and fast replicators for $\eta > \frac{1}{2}$. When $\eta = 1$, slow and fast replicators are perfect complements at the group level, with the full-fast $x=0$ and full-slow $x=1$ compositions minimizing the collective reproduction rate and the fifty-fifty mix $x=\frac{1}{2}$ maximizing collective reproduction.}
    \label{fig:Gfsfunction}
\end{figure}
In the limit of infinitely many protocells, each containing infinitely many genes, we describe the composition of the protocell-structured population by $f(t,x)$, the density of protocells having fraction $x$ slow replicators at time $t$. To take this limit, we use an approach introduced by Luo and coauthors \cite{luo2014unifying,van2014simple,luo2017scaling} and later applied to evolutionary games \cite{cooney2019replicator}. This yields following PDE for $f(t,x)$ given the two-level birth-death process described above:
\begin{equation} \label{eq:protocellPDEG}
    \dsdel{f(t,x)}{t} = \underbrace{\dsdel{}{x} \left[s x (1-x) f(t,x) \right]}_{\textnormal{Gene-Level Competition}} + 
\lambda \underbrace{ f(t,x) \left[G_{FS}(x) - \int_0^1 G_{FS}(y) f(t,y) dy \right]}_{\textnormal{Between-Protocell Competition}},
\end{equation}
where $\lambda := \frac{\Lambda w_G}{w_I}$ describes the relative strength of gene-level and protocell-level competition. The advection term $\left[ s x (1-x) f(t,x) \right]_x$ describes the effect of protocell-level competition favoring fast replicators, while the nonlocal term $\lambda f(t,x) \left[ G_{FS}(x) - \int_0^1 G_{FS}(y) f(t,y) dy\right]$ describes the effect of between-protocell competition favoring protocells with high collective reproduction rate $G_{FS}(x)$.
Using the expression for the protocell-level reproduction rate from Equation \eqref{eq:Gofxfdeta}, we can also write out our PDE model for multilevel protocell competition more explicitly as 
\begin{equation} \label{eq:protocell} \dsdel{f(t,x)}{t} =  \dsdel{}{x} \left[s x(1-x) f(t,x) \right] + \lambda f(t,x) \left[x - \eta x^2 - \langle G(\cdot) \rangle_{f(t,x)} \right],\end{equation}
where $\langle G(\cdot) \rangle_{f(t,x)} := \int_0^1 G(y) f(t,y) dy = \int_0^1 y \left( 2 - \eta y\right) f(t,y) dy$ denotes the average protocell-level replication rate across the whole population at time $t$.  

This model for protocell evolution described by Equation \eqref{eq:protocell} is a special case of a class of PDE models for multilevel selection studied by Cooney and Mori \cite{cooney2021long}. In Section \ref{sec:existingresults}, we summarize the main properties of solutions and long-time behaviors for this broader class of PDE models. In Section \ref{sec:protocell}, we present the application of these general results to characterize the long-time behavior of solutions to Equation \eqref{eq:protocellPDEG} for the protocell model and to understand implications for the coexistence of slow and fast replicators via multilevel selection.

\subsection{Existing Results for PDE Replicator Equations Describing Multilevel Selection} \label{sec:existingresults}

Using our terminology of fast and slow replicators, we now introduce a broader class of models for two-level selection with two types of individuals that allows for a variety of gene-level and protocell-level replication rates \cite{cooney2021long}. We consider gene-level birth rates of $1 + w_I \pi_F(x)$ and $1 + w_I \pi_S(x)$ for fast and slow replicators, and a protocell-level reproduction rate $\Lambda \left( 1 + w_G G(x)\right)$ for any continuously differentiable functions $\pi_F(x)$, $\pi_S(x)$, and $G(x)$ of the fraction of slow replicators $x$. After writing $\pi(x) = \pi_F(x) - \pi_S(x)$ to represent the within-group selection advantage of fast replicators over slow replicators in an $x$-slow group, the two-level dynamics in the large population limit for this class of models is described by the PDE
\begin{equation} \label{eq:generalPDEreplicator}
    \dsdel{f(t,x)}{t} = \dsdel{}{x} \left[x \left(1 -x\right) \pi(x) f(t,x) \right] + \lambda f(t,x) \left[G(x) - \int_0^1 G(y) f(t,y) dy \right],
\end{equation}
where $\lambda := \frac{\Lambda w_G}{w_I}$ characterizes the relative intensity of within-group and between-group competition. The characteristic curves of this model correspond to the replicator dynamics for within-group selection given by
\begin{equation} \label{eq:characteristicsgeneral}
    \dsddt{x(t)} = - x \left(1 - x\right) \pi(x) = - x \left(1 - x\right) \left( \pi_F(x) - \pi_S(x) \right) \: \:, \: \: x(0) = x_0.
\end{equation}
Competition between protocells is governed by the second term on the right-hand side of Equation \eqref{eq:generalPDEreplicator}, and protocells with composition $x$ increase in frequency if their protocell-level replication rate $G(x)$ exceeds the average replication rate $\langle G(\cdot) \rangle_{f(t,x)} := \int_0^1 G(y) f(t,y) dy$ across the population of protocells. Combining the effects of gene-level and protocell-level competition, we can think of Equation \eqref{eq:generalPDEreplicator} as a replicator equation for the evolution of the protocell population under multilevel selection, encoding the nested deterministic competition taking place at each level. 

Generalizing both the protocell model from Equation \eqref{eq:protocell} and models for multilevel selection in evolutionary games, we can place relatively weak assumptions on the functions $G(x)$ and $\pi(x)$ to characterize a variety of models featuring a tug-of-war between what is advantageous for the individual (here a gene) and what is collectively beneficial for the group (here a protocell). For models based upon the Prisoners' Dilemma from evolutionary game theory where $x$ represents the fraction of cooperative individuals and $1-x$ the fraction of defecting (or cheating) individuals, the assumptions $\pi(x) > 0$ and $G(1) > G(0)$ encode the properties that defectors have an individual advantage over cooperators under individual-level competition while a group of cooperators has a collective advantage over a group of defectors under group-level competition. For the protocell model from Equation \eqref{eq:protocell} with complementarity parameter $\eta < 1$, we have that $\pi(x) \equiv s > 0$ and $G(1) - G(0) = 1 - \eta > 0$, so the protocell model satisfies the same assumptions on $\pi(x)$ and $G(x)$ that are used in the Prisoners' Dilemma scenario by Cooney and Mori \cite{cooney2021long}. When $\eta = 1$ in the protocell model (and, correspondingly, fast and slow replicators are perfect complements for protocell-level competition), we instead have that $G(1) = G(0) = 0$, and the full-slow protocell does not have a collective advantage over the full-fast protocell. 

To analyze the dynamics of our model of multilevel protocell competition, it can be helpful to consider a weak formulation of Equation \eqref{eq:protocell} to allow for the possibility of concentration of the population upon equilibria of the within-group dynamics. Multiplying both sides of Equation \eqref{eq:protocell} by a $C^1([0,1])$ test-function $v(x)$ and integrating with respect to $x$ from $0$ to $1$, we obtain after integrating the advection term by parts the following weak version of the multilevel protocell dynamics
\begin{equation} \label{eq:weakdensity}
    \dsddt{} \int_0^1 v(x) f(t,x) dx = - \int_0^1 v'(x) x (1-x) \pi(x) f(t,x) dx + \lambda \int_0^1 f(t,x) \left[ G(x) - \int_0^1 G(y) f(t,y) dy \right] dx. 
\end{equation}
We then say that a probability density $f(t,x)$ is a weak solution to Equation \eqref{eq:generalPDEreplicator} if Equation \eqref{eq:weakdensity} holds for every possible test function $v \in C^1([0,1])$. This formulation allows us to deal with initial densities $f(0,x)$ that are not sufficiently differentiable to satisfy Equation  \eqref{eq:generalPDEreplicator} in the strong sense.

In fact, we can further weaken our notion of a solution to the multilevel dynamics by considering a measure $\mu_t(dx)$ describing the distribution of the fraction of slow replicators within a population of protocells. Then we can consider a measure-valued formulation of Equation \eqref{eq:generalPDEreplicator} in which, for any $C^1([0,1])$ test-function $v(x)$, the measure $\mu_t(dx)$ satisfies
\begin{equation} \label{eq:generalPDmeasurevalued}
  \dsddt{} \int_0^1 v(x) \mu_t(dx) = - \int_0^1 v'(x) x(1-x) \pi(x) \mu_t(dx) + \lambda \int_0^1 \left[ G(x) - \int_0^1 G(y) \mu_t(dy) \right] \mu_t(dx)  
\end{equation}
This description is particularly convenient because it allows us show that the delta-functions $\delta(x)$ and $\delta(1-x)$ concentrated at the all-fast and all-slow equilibria are steady state solutions to Equation \eqref{eq:generalPDmeasurevalued} \cite{cooney2019replicator}. To study the dynamics of Equation \eqref{eq:generalPDmeasurevalued}, we must supply the equation with an initial probability measure $\mu_0(dx)$. 

The long-time behavior of the multilevel dynamics depends on a property of the tail of the initial measure $\mu_0(dx)$ near the full-slow equilibrium called the H{\"o}lder exponent and defined as follows.

\begin{definition} \label{def:holderexponent}
A measure $\mu_t(dx)$ has H{\"o}lder exponent $\theta_t \geq 0$ near $x = 1$ if
\begin{equation} \label{eq:holderexponent}
    \theta_t = \inf\left\{\Theta \geq 0  \: \: \bigg| \: \: \ds\lim_{x \to 0} \frac{\mu_t\left(\left[1-x,1\right] \right)}{x^{\Theta}} > 0  \right\}.
\end{equation}
If a measure $\mu_t(dx)$ has such a H{\"o}lder exponent near $x=1$, then it has an associated H{\"o}lder exponent $C_{\theta_t} \in \RR_{\geq 0} \cup \{\infty\}$ satisfying
\begin{equation} \label{eq:holderlimit}
    \ds\lim_{x \to 0} \frac{\mu_t\left[1-x,1\right]}{x^{\theta_t}} = C_{\theta_t}.
\end{equation}
\end{definition}

\begin{remark}
This definition of the Hölder exponent for the measure $\mu(dx)$ near $x=1$ is related to the notion of pointwise H{\"o}lder continuity that characterizes the local regularity of functions. To see this, we introduce the cumulative distribution function $F(x) = \int_0^x \mu(dy)$ associated with the measure $\mu(dx)$, and note that $\mu\left([1-x,1]\right) = F(1^+) - F((1-x)^-)$ (where our choice of left-hand and right-hand limits allows us to include the mass accumulated at the endpoints $1-x$ and $1$). From Equation \eqref{eq:holderexponent}, we can see that an equivalent characterization of the H{\"o}lder exponent near $x=1$ in terms of $F(x)$ is given by  
\begin{equation} \label{eq:pointwiseholderexponent}
    \theta = \inf\left\{\Theta \geq 0  \: \: \bigg| \: \: \ds\lim_{x \to 0} \frac{F(1^+) - F((1-x)^-)}{x^{\Theta}} > 0  \right\}.
    \end{equation}
Equation \eqref{eq:pointwiseholderexponent} tells us that the cumulative distribution $F(x)$ has pointwise H{\"o}lder exponent $\theta$ at $x=1$ \cite{gilbarg2015elliptic,ayache2004identification}, showing that this degree of regularity of $F(x)$ at $x=1$ corresponds to the rate at which the tail measure $\mu\left(1-x,1\right]$ vanishes near the all-slow composition. 
\end{remark}

As an example, we can use Definition \ref{def:holderexponent} to see that measures of the form $\mu^{\theta}(dx) = \theta (1-x)^{\theta-1} dx$ for $\theta > 0$ have H{\"o}lder exponent $\theta$ near $x=1$. Notably, the uniform measure $\mu(dx) = 1 dx$ is a member of this family with H{\"o}lder exponent $1$ near $x=1$. This family of measures suggests that one way to interpret the H{\"o}lder exponent $\theta$ near $x=1$ is an equivalence class of initial measures whose survival functions $S(1-x) = \int_{1-x}^1 \mu(dx)$ are asymptotically equivalent to the survival function of $\mu^{\theta}(dx)$. Biologically, we can think of the H{\"o}lder exponent $\theta$ of the initial measure as representing a kind of inverse size of the initial cohort of nearly all-slow groups, as smaller $\theta$ corresponds to a larger concentration of groups near $x=1$ (the all-slow group composition). It can also be shown that, given an initial measure $\mu_0(dx)$ with H{\"o}lder exponent $\theta$ near $x=1$, the measure-valued solution $\mu_t(dx)$ to Equation \eqref{eq:generalPDEreplicator} will also have H{\"o}lder exponent $\theta$ near $x=1$ for all finite times $t > 0$ \cite{cooney2019replicator,cooney2020analysis,cooney2021long}. The fact the H{\"o}lder exponent near the all-slow composition is preserved under the multilevel dynamics suggests that this quantity can be used to characterize the long-time behavior for an entire equivalence class of initial populations.

The H{\"o}lder exponent near $x=1$ can also be used to characterize the steady state densities for the multilevel dynamics of Equation \eqref{eq:generalPDEreplicator}. Using Definition \ref{def:holderexponent}, it can be shown that, up to multiplication by a constant, there exists a unique steady state density solving Equation \eqref{eq:generalPDEreplicator} with H{\"o}lder exponent $\theta$ near $x=1$ \cite{cooney2021long}. This density is given by
\begin{equation}\label{eq:flambdatheta}
f^{\lambda}_{\theta}(x)= x^{\mathlarger{ \left[\pi(0)^{-1} \left( \lambda \left[G(1) - G(0) \right] - \theta \pi(1) \right) - 1\right]}}(1-x)^{\mathlarger{\theta-1}}\frac{\pi(1)}{\pi(x)}\exp\left( - \lambda \int_x^1\frac{C(u)du}{\pi(u)}\right),
\end{equation}
where the term $-\lambda C(x)$ takes the form
\begin{dmath} \label{eq:lambdaCofx}
- \lambda C(x) = \lambda \left( \frac{G(x) - G(0)}{x} \right) + \left(\frac{\lambda \left[ G(1) - G(0) \right] - \theta \pi(1)}{\pi(0)} \right) \left( \frac{\pi(x) - \pi(0)}{x} \right)  
+ \lambda \left( \frac{G(x) - G(1) }{1 - x} \right) - \theta \left( \frac{\pi(x) - \pi(1)}{1 - x} \right).
\end{dmath}
We can also consider a normalized version of $f^{\lambda}_{\theta}(x)$, which is the unique probabilty density with H{\"o}lder exponent $\theta$ that is a steady state solution to Equation \eqref{eq:generalPDEreplicator}.
\begin{equation} \label{eq:plambdatheta}
    p^{\lambda}_{\theta}(x)=\frac{f^{\lambda}_{\theta}(x)}{\int_0^1 f^{\lambda}_{\theta}(x)dx}
\end{equation}
Under the assumption that $G(x)$ and $\pi(x)$ are continuously differentiable functions, the expression in Equation \eqref{eq:lambdaCofx} is bounded on $[0,1]$. Because we also consider H{\"o}lder exponent $\theta > 0$ near $x = 1$, we see from Equation \eqref{eq:flambdatheta} that $\lambda$ must exceed a critical level of the strength of between-group selection
\begin{equation} \label{eq:lambdastargeneral}
    \lambda^* := \frac{\theta \pi(1)}{G(1) - G(0)}
\end{equation}
to ensure integrability of the steady state density $f^{\lambda}_{\theta}(x)$. Using the original notation in terms of the fast replicator and slow replicator birth rates $\pi_F(x)$ and $\pi_S(x)$, this threshold condition can be rewritten in the following form
\begin{equation} \label{eq:lambdastarexplained}
    \lambda^* = \frac{\left(\pi_F(1) - \pi_S(1)\right) \theta}{G(1) - G(0)}
\end{equation}

This threshold determines whether the long-time behavior of the multilevel dynamics starting from an initial population with H{\"o}lder exponent $\theta$ near $x=1$, separating a regime in which fast replicators take over the whole population from a regime in which fast and slow replicators can coexist at steady state. In Theorem \ref{thm:PDconvergencetosteady} (originally \cite[Theorem 1.5]{cooney2021long}), we show that if $\lambda > \lambda^*$, then the population will converge to a steady state density featuring groups with all possible fractions of fast and slow replicators. In Theorem \ref{thm:PDconvergencetodelta}, we present a modified version of \cite[Theorem 1.11]{cooney2021long}, showing that if $\lambda < \lambda^*$, then fast replicators will take over the population, with the population concentrating upon a delta-function supported at the all-fast equilibrium. We also include in Theorem \ref{thm:PDconvergencetodelta} the result of \cite[Proposition 5.2]{cooney2021long}, which proves convergence to the delta-function at the all-fast equilibrium when $\lambda = \lambda^*$ under the additional assumptions that $G(0)$ is unique minimum of $G(x)$ on $[0,1]$ and that the initial measure $\mu_0(dx)$ has a positive, finite H{\"o}lder constant $C_{\theta}$ near $x=1$. Notably, this assumption on $G(x)$ will hold for all of models we consider in this paper, so Theorems \ref{thm:PDconvergencetosteady} and \ref{thm:PDconvergencetodelta} cover the long-time behavior for any relative protocell-level selection strength $\lambda$ for a given initial measure $\mu_0(dx)$ with H{\"o}lder exponent $\theta > 0$ near $x=1$ with associated finite, positive H{\"o}lder constant $C_{\theta}$. 

For both Theorem \ref{thm:PDconvergencetosteady} and Theorem \ref{thm:PDconvergencetodelta}, we consider convergence in the sense of weak convergence of probability measures. Specifically, for a family of probability measure $\{\mu_t(dx) \}_{t \geq 0}$ and a limit measure $\mu_{\infty}(dx)$, we say that $\mu_t(dx) \rightharpoonup \mu_{\infty}(dx)$ ($\mu_t(dx)$ converges weakly to $\mu_{\infty}$) as $t \to \infty$ if, for every continuous test-function $v(x)$, $\int_0^1 v(x) \mu_t(dx) \to \int_0^1 v(x) \mu_{\infty}(dx)$ as $t \to \infty$. 

\begin{theorem}[\bfseries Convergence to Steady State Density Supporting Coexistence of Both Types {\cite[Theorem 1.5]{cooney2021long}}] \label{thm:PDconvergencetosteady} Suppose that $G(x)$ and $\pi(x)$ satisfy the assumptions of the multilevel Prisoners' Dilemma scenario:  $G(x),\pi(x) \in C^1\left([0,1]\right)$, $G(1) > G(0)$, and $\pi(x) > 0$ for $x \in [0,1]$. Consider an initial measure $\mu_0(dx)$ having a H{\"o}lder exponent $\theta > 0$ near $x=1$ with corresponding positive, finite H{\"o}lder constant  $C_{\theta}$.
If $\lambda \left[G(1) - G(0) \right] > \theta \pi(1)$, then the solution $\mu_t(dx)$ to Equation \eqref{eq:generalPDmeasurevalued} converges weakly to the probability measure defined by the density function $p^{\lambda}_\theta(x)$ defined in Equation \eqref{eq:plambdatheta}:
\begin{equation}\label{vmubarlim}
\lim_{t\to \infty} \int_0^1 v(x) \mu_t(dx)=\int_0^1 v(x)p^{\lambda}_{\theta}(x)dx
\end{equation}
where $v(x)$ is an arbitrary continuous function on $[0,1]$.
\end{theorem}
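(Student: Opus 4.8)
The plan is to solve the measure-valued equation \eqref{eq:generalPDmeasurevalued} by the method of characteristics, which converts the nonlocal problem into an explicit reweighted transport, and then to identify the limit by an asymptotic matching between the prescribed tail of $\mu_0$ near $x=1$ and the profile of the steady state $f^{\lambda}_{\theta}$ near $x=0$. Let $\phi^t$ denote the flow of the characteristic ODE $\dot x = -x(1-x)\pi(x)$ of \eqref{eq:characteristicsgeneral}; since $\pi>0$ this flow maps $(0,1)$ into itself with $\phi^t(x_0)\to 0$ as $t\to\infty$ for every $x_0\in(0,1)$, with $\phi^t(x_0)$ decaying at rate $\pi(0)$ near $x=0$ and $1-\phi^t(x_0)$ decaying at rate $\pi(1)$ near $x=1$. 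Writing \eqref{eq:generalPDmeasurevalued} in conservation form, mass is transported along $\phi^t$ and reweighted by the accumulated protocell-level fitness, while the nonlocal mean $\langle G\rangle$ enters only through a factor common to all characteristics, which is therefore pinned by the requirement that $\mu_t$ be a probability measure. This gives, for every $v\in C([0,1])$,
\[
\int_0^1 v(x)\,\mu_t(dx)=\frac{\ds\int_0^1 v\!\left(\phi^t(x_0)\right)\exp\!\left(\lambda\int_0^t G\!\left(\phi^s(x_0)\right)ds\right)\mu_0(dx_0)}{\ds\int_0^1 \exp\!\left(\lambda\int_0^t G\!\left(\phi^s(x_0)\right)ds\right)\mu_0(dx_0)},
\]
which I would first check for smooth $\mu_0$ directly from \eqref{eq:weakdensity} (a short computation verifying that the right-hand side solves the weak equation with the correct initial datum and unit mass) and then extend to general $\mu_0$ by approximation, using the well-posedness of \eqref{eq:generalPDmeasurevalued}.

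Next I would change variables to the current position, $\xi=\phi^t(x_0)$, so that $x_0=(\phi^t)^{-1}(\xi)\to 1$ for each fixed $\xi\in(0,1)$ as $t\to\infty$, with $1-(\phi^t)^{-1}(\xi)\asymp e^{-\pi(1)t}\xi^{-\pi(1)/\pi(0)}$ from the endpoint rates above. Pushing $\mu_0$ forward and working at the level of the survival function $\mu_t\big([1-x,1]\big)$ (to avoid assuming $\mu_0$ has a density), the hypothesis that $\mu_0$ has H\"older exponent $\theta$ with \emph{positive, finite} constant $C_{\theta}$ enters exactly here, forcing the transported tail near $\xi$ to behave at leading order like $(\mathrm{const})\,e^{-\pi(1)\theta t}\,\xi^{-\pi(1)\theta/\pi(0)-1}$. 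The reweighting factor, once the common exponential growth $e^{\lambda G(0)t}$ is extracted, contributes $\asymp e^{\lambda(G(1)-G(0))t}\,\xi^{\lambda(G(1)-G(0))/\pi(0)}$ from the long sojourn near $x=1$ (where $G\approx G(1)$), times a bounded positive continuous correction that reassembles the $\pi(1)/\pi(x)$ and $\exp\!\big(-\lambda\int_x^1 C(u)\,du/\pi(u)\big)$ factors of \eqref{eq:flambdatheta}. Multiplying the pieces, the time-dependent prefactor equals $e^{[\lambda G(1)-\theta\pi(1)]t}$, which is common to all $\xi$ and cancels against the denominator, while the $\xi$-dependence is precisely $\xi^{\,\pi(0)^{-1}(\lambda[G(1)-G(0)]-\theta\pi(1))-1}$, the power appearing in \eqref{eq:flambdatheta}; a parallel computation near $\xi=1$ recovers the $(1-\xi)^{\theta-1}$ behaviour and the preservation of the H\"older exponent. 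Formally, then, the reweighted pushforward stabilizes to a constant multiple of $f^{\lambda}_{\theta}$, and the probability-measure normalization forces the limit to be $p^{\lambda}_{\theta}$ of \eqref{eq:plambdatheta}.

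To make this rigorous I would exhibit a time-uniform integrable majorant for the reweighted pushforward, valid simultaneously near $\xi=1$ (controlled by $\theta>0$) and near $\xi=0$ (controlled by the hypothesis $\lambda[G(1)-G(0)]>\theta\pi(1)$, which is exactly the condition that the exponent $\pi(0)^{-1}(\lambda[G(1)-G(0)]-\theta\pi(1))-1$ exceeds $-1$, i.e. that $f^{\lambda}_{\theta}$ is integrable at $\xi=0$), and then conclude by dominated convergence that the ratio in the representation formula converges to $\int_0^1 v(\xi)\,p^{\lambda}_{\theta}(\xi)\,d\xi$ for every continuous $v$. The same inequality guarantees that the normalizing integral grows strictly faster than the reweighted mass of any portion of $\mu_0$ bounded away from $x=1$, so interior atoms and singular parts of $\mu_0$ are asymptotically negligible and no Dirac mass develops at $\xi=0$; the weak limit is the genuine probability density $p^{\lambda}_{\theta}$. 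The main obstacle is precisely this step: obtaining the uniform-in-$t$ bound that simultaneously controls the blow-up of the protocell-level weight near $x=1$ (where mass lingers and is favoured by $G$) and the crowding of characteristic mass near $x=0$, and showing that the transported profile truly stabilizes to $f^{\lambda}_{\theta}$ rather than drifting or leaking mass to the boundary. Everything else --- the representation formula, the preservation of the H\"older exponent, and the explicit form of the steady state $f^{\lambda}_{\theta}$ --- is routine or already recorded in \cite{cooney2021long}.
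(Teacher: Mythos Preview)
The paper does not actually prove this theorem: it is quoted verbatim from \cite{cooney2021long} (as Theorem~1.5 there) and used here only as a black box to analyze the protocell model, so there is no in-paper argument to compare against. That said, your outline is precisely the strategy used in the cited reference: represent $\mu_t$ as the pushforward of $\mu_0$ along the backward characteristic flow $\phi^t$, reweighted by $\exp\!\big(\lambda\int_0^t G(\phi^s(\cdot))\,ds\big)$ and renormalized to unit mass; extract the endpoint asymptotics of the flow (rate $\pi(1)$ near $x=1$, rate $\pi(0)$ near $x=0$) to see that, for fixed $\xi=\phi^t(x_0)$, the H\"older tail of $\mu_0$ produces the factor $\xi^{-\theta\pi(1)/\pi(0)}e^{-\theta\pi(1)t}$ while the group-level weight produces $\xi^{\lambda[G(1)-G(0)]/\pi(0)}e^{\lambda[G(1)-G(0)]t}$; and observe that the common time-dependent prefactor cancels in the ratio, leaving exactly the power $\xi^{\pi(0)^{-1}(\lambda[G(1)-G(0)]-\theta\pi(1))-1}$ of \eqref{eq:flambdatheta}. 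Your identification of the key technical step---a time-uniform integrable envelope so that dominated convergence applies, with integrability near $\xi=0$ supplied by $\lambda[G(1)-G(0)]>\theta\pi(1)$ and near $\xi=1$ by $\theta>0$---is also the crux in \cite{cooney2021long}, where it is handled by careful endpoint expansions of $\phi^t$ and $\int_0^t G(\phi^s)\,ds$ together with the assumed finiteness and positivity of $C_\theta$. In short, your proposal is correct and matches the original proof; nothing here is new relative to the cited source.
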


\begin{theorem}[\bfseries Convergence to Delta-Function at All-Fast Composition {\cite[Theorem 1.11 and Proposition 5.2]{cooney2021long}}]
\label{thm:PDconvergencetodelta}
\leavevmode
\begin{itemize}
\item Suppose that $G(x), \pi(x)$ satisfy the assumptions of Theorem \ref{thm:PDconvergencetosteady} and that the initial distribution $\mu_0(dx)$ has H{\"o}lder exponent $\theta$ near $x=1$. If $\lambda \left[G(1) - G(0) \right] < \theta \pi(1)$, then the solution $\mu_t(dx)$ to Equation \eqref{eq:generalPDmeasurevalued} has the long-time behavior $\mu_t(dx) \rightharpoonup \delta(x)$ as $t \to \infty$.
\item Suppose further that $G(x) > G(0)$ for $x \in (0,1]$ and that the initial measure $\mu_0(dx)$ has positive H{\"o}lder constant $C_{\theta} < \infty$ near $x=1$. If $\lambda \left[G(1) - G(0) \right] = \theta \pi(1)$,  then $\mu_t(dx) \rightharpoonup \delta(x)$ as $t \to \infty$.
\end{itemize}
\end{theorem}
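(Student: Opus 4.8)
This statement is a transcription of \cite[Theorem 1.11 and Proposition 5.2]{cooney2021long}, so at the level of this paper the proof consists of verifying that the protocell functions $G$ and $\pi$ satisfy the hypotheses used there: $C^1$ regularity, $G(1)>G(0)$, $\pi>0$ on $[0,1]$, and, for the second bullet, that $G(0)$ is the strict minimum of $G$ on $[0,1]$. Let me nonetheless sketch the mechanism behind those results. The plan is to integrate the measure-valued dynamics \eqref{eq:generalPDmeasurevalued} along the within-group characteristic flow. Writing $\phi_t(x_0)$ for the solution of \eqref{eq:characteristicsgeneral} and $\bar G_s:=\int_0^1 G(y)\,\mu_s(dy)$ for the mean protocell fitness, the variational equation for the Jacobian $\partial_{x_0}\phi_t$ exactly cancels the $\partial_x[x(1-x)\pi(x)]$ contribution, so that
\begin{equation*}
\int_0^1 v(x)\,\mu_t(dx)=\int_0^1 v\big(\phi_t(x_0)\big)\,W_t(x_0)\,\mu_0(dx_0),\qquad W_t(x_0):=\exp\!\left(\lambda\int_0^t\big[G(\phi_s(x_0))-\bar G_s\big]\,ds\right),
\end{equation*}
for every continuous $v$; equivalently $\mu_t$ is the pushforward under $\phi_t$ of $\mu_0$ reweighted by the (implicitly defined) factor $W_t$. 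Since $\pi>0$, the flow $\phi_t$ carries every $x_0\in[0,1)$ to the all-fast state $x=0$, so $\mu_t$ can fail to converge to $\delta(x)$ only if a nonvanishing amount of mass ``hangs up'' near $x_0=1$, where trajectories move exponentially slowly and the weight $W_t$ --- reflecting the between-protocell advantage of nearly-all-slow protocells --- can grow.

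For the first bullet, fix $a\in(0,1)$; since $\phi_t$ is an orientation-preserving homeomorphism, $\{x_0:\phi_t(x_0)\ge a\}=[1-\epsilon(t),1]$ with $\epsilon(t)\to0$, whence $\mu_t([a,1])=\int_{[1-\epsilon(t),1]}W_t(x_0)\,\mu_0(dx_0)$. For such $x_0$ the trajectory stays in $[a,1]$ over $[0,t]$, and changing variables from time to position controls $\int_0^t G(\phi_s(x_0))\,ds$ by $\int_a^{x_0}\frac{G(x)}{x(1-x)\pi(x)}\,dx$ and $t$ by $\int_a^{x_0}\frac{dx}{x(1-x)\pi(x)}$; both integrals diverge like $\log\!\big(1/(1-x_0)\big)$ as $x_0\uparrow1$, the divergence being localized at $x=1$ where the integrands are asymptotic to $\tfrac{G(1)}{\pi(1)(1-x)}$ and $\tfrac{1}{\pi(1)(1-x)}$. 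Combining this with the trivial lower bound $\bar G_s\ge G(0)$ in $W_t$ and the tail estimate $\mu_0([1-\epsilon,1])=O(\epsilon^{\theta'})$ for every $\theta'<\theta$ (implied by $\mu_0$ having H\"older exponent $\theta$ near $x=1$), one gets $\mu_t([a,1])=O\!\big(\epsilon(t)^{\,\theta'-\lambda[G(1)-G(0)]/\pi(1)}\big)$. When $\lambda[G(1)-G(0)]<\theta\pi(1)$ one may choose $\theta'$ with $\theta'\pi(1)>\lambda[G(1)-G(0)]$, so $\mu_t([a,1])\to0$ for every $a$; as $[0,1]$ is compact, this is precisely $\mu_t(dx)\rightharpoonup\delta(x)$.

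The borderline case $\lambda[G(1)-G(0)]=\theta\pi(1)$ is more delicate, which is why the extra hypotheses enter. Now the exponent $\theta'-\lambda[G(1)-G(0)]/\pi(1)$ cannot be taken positive and the preceding bound is vacuous, so the subleading factors must be retained. This forces one to use the exact power $\theta$, i.e. the tail bound $\mu_0([1-\epsilon,1])\sim C_\theta\epsilon^\theta$ with $0<C_\theta<\infty$, and to use that $G(0)$ is the strict minimum of $G$, so that the only growth in $W_t$ comes from the region near $x=1$ while the remainder contributes a genuinely decaying factor, morally $\exp\!\big(-\lambda\int_0^t(\bar G_s-G(0))\,ds\big)$. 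One then closes the estimate by a self-consistency argument: the mass still clustered near $x=1$ is exactly what keeps $\bar G_s-G(0)$ bounded below by a positive multiple of itself, forcing $\int_0^\infty(\bar G_s-G(0))\,ds=\infty$ and hence $\mu_t([a,1])\to0$ for every $a$. I expect this self-referential borderline estimate --- along with the background task of making the pushforward representation rigorous for general measure-valued (rather than absolutely continuous) solutions --- to be the main obstacle; the leading-order characteristic computation underlying the first bullet is comparatively routine.
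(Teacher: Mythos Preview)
The paper does not prove this theorem at all: it is stated purely as a citation of \cite[Theorem~1.11 and Proposition~5.2]{cooney2021long}, and the paper only checks later that the specific protocell replication functions $G_{FS},G_{FD},G_{SD}$ and $\pi$ satisfy the hypotheses. You correctly identify this in your first sentence, so at the level of the present paper your proposal is exactly right.

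Your additional sketch of the mechanism from \cite{cooney2021long} is a welcome bonus and captures the correct picture: represent $\mu_t$ as the pushforward of $\mu_0$ by the within-group flow $\phi_t$, reweighted by an exponential of accumulated relative protocell fitness, and then compare the growth of the weight near $x_0=1$ (governed by $\lambda[G(1)-G(0)]/\pi(1)$) against the decay of the tail $\mu_0([1-\epsilon,1])=O(\epsilon^{\theta'})$. One small caution: the step ``the trivial lower bound $\bar G_s\ge G(0)$'' is not actually trivial under the stated hypotheses, since $G(0)$ need not be the minimum of $G$ on $[0,1]$ in the first bullet. The way this is handled in \cite{cooney2021long} is essentially what you would do anyway---work with the unnormalized weights $\tilde W_t(x_0)=\exp\!\big(\lambda\int_0^t G(\phi_s(x_0))\,ds\big)$ and lower-bound the normalizing denominator by the contribution from $x_0$ bounded away from $1$, for which $\int_0^t G(\phi_s(x_0))\,ds=G(0)\,t+O(1)$---so the exponent comparison you wrote down is unchanged. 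Your description of why the borderline case needs the strict-minimum assumption on $G(0)$ and a finite positive H\"older constant is accurate.
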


Combining the results of Theorems \ref{thm:PDconvergencetosteady} and \ref{thm:PDconvergencetodelta}, we can now describe the long-time limit of the average protocell-level fitness of a steady state for an initial measure with given H{\"o}lder exponent $\theta$. It was shown by Cooney and Mori that the long-time average payoff is given by the piecewise characterization 
\begin{equation} \label{eq:Gaveragelambda}
    \ds\lim_{t \to \infty} \int_0^1 G(x) \mu_t(dx) = \left\{
     \begin{array}{cr}
       G(0) & : \lambda \leq \lambda^* \\
       G(1) - \ds\frac{\theta \pi(1)}{\lambda} & : \lambda > \lambda^* 
     \end{array}
   \right. ,
   \end{equation}
 and that the threshold between-protocell selection strength $\lambda^*$ from Equation \eqref{eq:lambdastargeneral} can be used to further see that
   \begin{equation} \label{eq:Gaveragethresh}
    \ds\lim_{t \to \infty} \int_0^1 G(x) \mu_t(dx) =  \left\{
     \begin{array}{cr}
       G(0) & : \lambda \leq \lambda^* \\
       \left(\ds\frac{\lambda^*}{\lambda} \right) G(0) + \left( 1 - \ds\frac{\lambda^*}{\lambda} \right) G(1) & : \lambda > \lambda^* 
       \end{array}
   \right. .
\end{equation} 
In particular, this tells us that 
\begin{equation}
   \ds\lim_{t \to \infty} \int_0^1 G(x) \mu_t(dx) \leq G(1),   
\end{equation}
and therefore the long-time population cannot outperform the protocell-level fitness for an all-slow protocell. This means that the individual-level advantage of fast replicators casts a long shadow on the multilevel protocell dynamics: no level of between-protocell competition can promote the best possible collective outcome when an intermediate mix of fast and slow genes is optimal for protocell-level replication. 

We can also consider the long-time behavior of the multilevel dynamics of Equation \eqref{eq:generalPDmeasurevalued} when gene-level dynamics selection for increasing fractions of slow replicators and protocell-level competition favors all-fast protocells over all-slow protocells. Mathematically, this corresponds to the assumptions that $\pi(x) > 0$ for $x \in [0,1]$ and that $G(0) > G(1)$. In \ref{prop:PDelconvergencetodelta}, we present a modified version of \cite[Proposition 1.14]{cooney2021long}, showing that fast replicators will take over the population of protocells provided that there are fast replicators in the initial population. The original version of \cite[Proposition 1.14]{cooney2021long} dealt with the assumptions $\pi(x) < 0$ and $G(0) < G(1)$ to model a cooperative trait that is favored over a cheating trait at both levels of selection, but the analysis of the present case carries over after applying the change-of-variable $x \mapsto 1-x$.

\begin{proposition}[\bfseries Convergence to Delta-Function at All-Fast Equilibrium When Both Levels of Selection Favor Fast Replicators {\cite[Proposition 1.14]{cooney2021long}}] \label{prop:PDelconvergencetodelta}
Suppose that $G(x), \pi(x) \in C^1([0,1])$, $G(0) > G(1)$, and $\pi(x) > 0$ for $x \in [0,1]$. If $\mu_0([0,1)) > 0$ and $\lambda > 0$, then the solution $\mu_t(dx)$ to Equation \eqref{eq:generalPDmeasurevalued} has long-time behavior characterized by $\mu_t(dx) \rightharpoonup \delta(x)$ as $t \to \infty$.
\end{proposition}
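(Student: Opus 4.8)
The plan is to use the method of characteristics to obtain a semi-explicit formula for the measure-valued solution and then pass to the limit. The characteristics of Equation \eqref{eq:generalPDmeasurevalued} are the within-group trajectories $\phi_t(x)$ solving $\dot x=-x(1-x)\pi(x)$ with $\phi_0(x)=x$; since $\pi>0$ on $[0,1]$, the only equilibria are $x=0$ (globally attracting on $[0,1)$, with $\dot x\sim-\pi(0)x$, so $\phi_t(x)\to0$ exponentially for each $x\in[0,1)$) and $x=1$ (repelling). The solution with initial probability measure $\mu_0$ then satisfies, for every $v\in C([0,1])$,
\begin{equation*}
\int_0^1 v(x)\,\mu_t(dx)=\frac{\displaystyle\int_0^1 v(\phi_t(x))\,W_t(x)\,\mu_0(dx)}{\displaystyle\int_0^1 W_t(x)\,\mu_0(dx)},\qquad W_t(x):=\exp\!\left(\lambda\int_0^t G(\phi_s(x))\,ds\right),
\end{equation*}
which one checks directly by differentiating in $t$, exactly as for the two-type replicator PDE in \cite{cooney2019replicator}.

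The limit is then extracted by comparing the weights along characteristics emanating from $[0,1)$ with the weight along the stationary characteristic at $x=1$. Write $W_t(x)=e^{\lambda G(0)t}e^{\lambda g_t(x)}$ with $g_t(x):=\int_0^t\big(G(\phi_s(x))-G(0)\big)\,ds$. For $x\in[0,1)$, the exponential decay of $\phi_s(x)$ together with $G\in C^1$ gives $\int_0^\infty|G(\phi_s(x))-G(0)|\,ds<\infty$, so $g_t(x)\to g_\infty(x)$ with $e^{\lambda g_\infty(x)}>0$; for $x=1$, $\phi_s(1)\equiv1$ forces $g_t(1)=t\,(G(1)-G(0))\to-\infty$ because $G(0)>G(1)$. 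Hence the atom at $x=1$ contributes $\mu_0(\{1\})e^{\lambda g_t(1)}\to0$ to the normalizing integral, while the contribution of $[0,1)$ stays bounded below by a positive constant --- this is exactly where the hypothesis $\mu_0([0,1))>0$ enters, via Fatou. Since moreover $\phi_t(x)\to0$ for each $x\in[0,1)$, a dominated-convergence argument yields $\int_0^1 v(x)\,\mu_t(dx)\to v(0)$ for every $v\in C([0,1])$, i.e. $\mu_t\rightharpoonup\delta(x)$.

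The step I expect to be the main obstacle is making the dominated-convergence argument rigorous uniformly near $x=1$: what is needed is the uniform bound $\sup_{x\in[0,1)}\int_0^\infty\big(G(\phi_s(x))-G(0)\big)^+\,ds<\infty$, so that the integrands $v(\phi_t(x))W_t(x)/e^{\lambda G(0)t}$ are dominated by a fixed $\mu_0$-integrable function. This bound rests on three elementary facts: $G(1)<G(0)$ keeps the set $\{G>G(0)\}$ away from $x=1$; $G\in C^1$ together with exponential attraction controls the contribution from the portion of any trajectory near $x=0$; and the traversal time of any trajectory through a fixed compact subset of $(0,1)$ is finite. Partitioning $[0,1)=[0,1-\eta]\cup(1-\eta,1)$, using uniform convergence $\phi_t\to0$ on the compact piece and weight suppression near $x=1$ on the other, then closes the estimate. (Alternatively, the statement follows from \cite[Proposition 1.14]{cooney2021long} under the substitution $x\mapsto1-x$, which converts the hypotheses $\pi>0$, $G(0)>G(1)$ into the ``cooperation-favored-at-both-levels'' hypotheses treated there, and converts $\mu_0([0,1))>0$ and the conclusion $\mu_t\rightharpoonup\delta(x)$ into the corresponding statements in the relabeled coordinate.)
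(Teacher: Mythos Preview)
Your proposal is correct, and in fact the paper does not prove this proposition at all: it simply records the statement and remarks that it follows from \cite[Proposition~1.14]{cooney2021long} after the substitution $x\mapsto 1-x$, which is precisely the alternative you note in your final parenthetical. So your closing remark already \emph{is} the paper's argument.

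Your main text goes further and sketches what is essentially the proof that lies behind the cited result: the explicit characteristics-and-weights representation of $\mu_t$, the splitting of the normalized weight $e^{\lambda g_t(x)}$ into the contribution from $[0,1)$ (where $g_t\to g_\infty$ is finite) and from $\{1\}$ (where $g_t\to-\infty$ since $G(1)<G(0)$), and the dominated-convergence passage to the limit. Your identification of the uniform bound $\sup_{x\in[0,1)}\int_0^\infty (G(\phi_s(x))-G(0))^+\,ds<\infty$ as the crux is right, and your three-region heuristic is exactly how it is established. One clean way to make it rigorous is the change of variables $y=\phi_s(x)$, which gives
\[
\int_0^\infty \bigl(G(\phi_s(x))-G(0)\bigr)^+\,ds=\int_0^{x}\frac{(G(y)-G(0))^+}{y(1-y)\pi(y)}\,dy\le \int_0^{1}\frac{(G(y)-G(0))^+}{y(1-y)\pi(y)}\,dy,
\]
and the right-hand side is finite because $(G(y)-G(0))^+$ vanishes on a neighborhood of $y=1$ (from $G(1)<G(0)$ and continuity) and is $O(y)$ near $y=0$ (from $G\in C^1$), killing the apparent singularities of $1/[y(1-y)\pi(y)]$. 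With that bound in hand, your DCT/Fatou argument goes through exactly as written.
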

\begin{remark}
There exist initial measures $\mu_0(dx)$ that do not have well-defined H{\"o}lder exponents or H{\"o}lder constants near $x=1$, as the limits defining these quantities in Equation \eqref{eq:holderexponent} and Equation \eqref{eq:holderlimit} do not necessarily exist. It is possible to generalize the idea of the H{\"o}lder exponent as a measure of the tail of a measure near the all-slow composition by defining quantities called the infimum and supremum H{\"o}lder exponents  defined by respectively replacing the limits in  Equations \eqref{eq:holderexponent} and Equation \eqref{eq:holderlimit} with limits infimum and limits supremum \cite{cooney2021long}. These quantities exist for any measure $\mu(dx)$ on $[0,1]$, and there the supremum holder exponent $\underline{\theta}$ can be used to characterize an analogous threshold $\lambda^*_{\theta}$ by plugging $\underline{\theta}$ into Equation \eqref{eq:lambdastargeneral}. For an initial measure $\mu_0(dx)$ with supremum H{\"o}lder exponent $\underline{\theta} > 0$ near $x=1$, an analogue of Theorem \ref{thm:PDconvergencetodelta} holds showing that $\mu_t(dx) \rightharpoonup \delta(x)$ as $t \to \infty$ when $\lambda \leq \lambda^*(\underline{\theta})$ \cite[Theorem 1.11]{cooney2021long}. For the case of $\lambda > \lambda^*(\underline{\theta})$, coexistence is achieved  between slow and fast replicators in the sense of weak persistence of slow replicators, as $\limsup_{t \to \infty} \int_0^1 x \mu_t(dx) > 0$ \cite[Corollary 1.13]{cooney2021long}.

In this paper, we choose for simplicity to restrict attention to the class of initial measures with well-defined H{\"o}lder exponent near $x=1$. However, the generalization of these results for the broader class of initial distributions highlights the fact that our results on coexistence of complementary genes in our protocell models are not restricted to populations starting with this special class of initial data. In particular, the persistence result of \cite[Corollary 1.13]{cooney2021long} may be the more natural benchmark for exploring the question of whether dimerization can help to promote long-time coexistence of the complementary fast and slow genes.  
\end{remark}

\subsection{Long-Time Behavior of Protocell Model}
\label{sec:protocellongtime}

In this section, we will apply the results presented in Section \ref{sec:existingresults} to the special case of our baseline protocell model. Using the gene-level relative birth rate $\pi(x) = s$ and the protocell-level replication function $G_{FS} = x \left( 1 - \eta x \right)$, we characterize the long-time behavior of solutions to the multilevel protocell dynamics given by Equation \eqref{eq:protocellPDEG}. When protocell-level replication most favors protocells featuring a majority of slow replicators, we see that, for sufficiently strong between-protocell competition, the population can converge to a steady state featuring coexistence of fast and slow replicators. We illustrate such steady states in Figure \ref{fig:etadensities}, and see that these densities feature more fast replicators than is optimal for protocell-level fitness when between-protocell competition most favors a mix of fast and slow genes. We further characterize this discrepancy between the modal composition at steady state and the optimal protocell-level fitness in Proposition \ref{prop:FSmodal}, and illustrate this gap in the limit of infinite between-protocell selection strength in Figure \ref{fig:etaghost}. This shadow cast by gene-level advantage for fast replicators is found to be most extreme in the case in which fast and slow genes are perfect complements for protocell-level replication (when $\eta = 1$), as no level protocell-level competition can allow for coexistence of the two types.

We first look to express the family of steady state solutions to Equation \eqref{eq:protocellPDEG}. 
Using Equation \eqref{eq:plambdatheta} and the fact that $G_{FS}(1) = 1 - \eta$, $G_{FS}(0) = 0$, and $\pi_{FS}(x) \equiv s$, we see that the steady states are probability densities of the form
\begin{subequations} \label{eq:plambdathetaFSgeneral}
     \begin{align}
        p^{\lambda}_{\theta}(x) &= Z_p^{-1} \: \mathlarger{x}^{\mathlarger{\left[(\lambda / s) \left(1- \eta\right) - \theta - 1\right]}} \left(\mathlarger{1 - x}\right)^{\mathlarger{\theta - 1}} \exp\left( - \lambda \int_x^{1} \frac{C_{FS}(u)}{s} du \right) \\
         Z_p &= \int_0^1 \mathlarger{y}^{\mathlarger{\left[ (\lambda / s) \left(1 - \eta \right) - \theta - 1\right]}} \left(\mathlarger{1 - y}\right)^{\mathlarger{\theta - 1}} \exp\left( - \lambda \int_y^{1} \frac{C_{FS}(u)}{s} du \right) dy,
     \end{align}
\end{subequations}
where $C_{FS}(x)$ is given by applying the replication rates for the protocell model to Equation \eqref{eq:lambdaCofx}. 
Noting that $\pi_{FS}(x)$ is a constant and that the between-protocell replication rates satisfy 
\begin{subequations}
\begin{align}
G_{FS}(x) - G_{FS}(0) &= x \left( 1 - \eta x \right) \\
G_{FS}(x) - G_{FS}(1) &= (1-x) \left(\eta - 1 + \eta x \right) \label{eq:GFS1diff},
\end{align}
\end{subequations}
we can see from Equation \eqref{eq:lambdaCofx} that
\begin{equation} \label{eq:lambdaCofxFS}
- \lambda C_{FS}(x) = \lambda \left(\frac{ G_{FS}(x) - G_{FS}(0)}{x} \right) + \lambda \left(\frac{G_{FS}(x) - G_{FS}(1)}{1-x} \right) = \lambda \eta.
\end{equation}
This allows us to further compute that
\begin{equation} \label{eq:expCintFS}
    \exp\left( - \lambda \int_x^1 \frac{C_{FS}(u)}{\pi(u) ds}  \right) = \exp\left( \lambda \int_x^1 \left(\frac{\eta}{s}\right) du \right) = \exp\left( \frac{\lambda \eta}{s} \left(1 - x \right) \right).
\end{equation}
and, after introducing the constant $\tilde{Z}_p = Z_p \exp\left(-\frac{\lambda \eta}{s} \right)$, we can use Equation \eqref{eq:plambdathetaFS} and \eqref{eq:expCintFS} to write the family of steady states $p^{\lambda}_{\theta}(x)$ in the form

\begin{equation} \label{eq:plambdathetaFS}
    p^{\lambda}_{\theta}(x) = \tilde{Z}_p^{-1} \: x^{\mathlarger{\left[\left(\lambda / s \right) \left(1 - \eta\right) - \theta - 1\right]}} \left(1 - x \right)^{\mathlarger{\theta - 1}} \exp\left(-\frac{\lambda \eta x}{s} \right).
\end{equation}

We note that a density given by Equation \eqref{eq:plambdathetaFS} with $\theta > 0$ is integrable (and therefore actually a probability distribution) provided that the relative intensity of between-protocell competition $\lambda$ exceeds the following threshold value
\begin{equation} \label{eq:lambdastarFS}
    \lambda^*_{FS} = \frac{s \theta}{1 - \eta}.
\end{equation}
This tells us that $\lambda^*_{FS}$ is a decreasing function of the complementarity parameter $\eta$, and therefore it is easier to achieve coexistence of the fast and slow replicators via multilevel selection when between-protocell competition pushes for as many slow replicators as possible. Furthermore, the threshold has the property that $\lambda^*_{FS} \to \infty$ as $\eta \to 1$, so there is no integrable density of the form given by Equation \eqref{eq:plambdathetaFS} for the case in which $\eta = 1$ and the fast and slow replicators are perfect complements for between-protocell replication.

This threshold quantity also helps to determine the long-time behavior for solutions to Equation \eqref{eq:protocellPDEG}. In particular, for a given initial distribution with H{\"o}lder exponent $\theta$ near $x=1$, we can use Theorem \ref{thm:PDconvergencetosteady} to say that, when $\lambda > \lambda^*{FS}$, the population will converge to the steady state from Equation \eqref{eq:plambdathetaFS} for the corresponding values of $\lambda$ and $\theta$. When $\lambda < \lambda^*$, we can similarly apply Theorem \ref{thm:PDconvergencetodelta} to deduce that, when $\lambda < \lambda^*_{FS}$, the population will concentrate upon a delta-function $\delta(x)$ concentrated upon the all-fast protocell composition. We summarize these two results in Proposition \ref{prop:steadystateetamodel}. 

\begin{proposition}  \label{prop:steadystateetamodel} Suppose the population of protocells has initial measure $\mu_0(dx)$ with H{\"o}lder exponent of $\theta$ near $x=1$, and consider a measure-valued solution $\mu_t(dx)$ to \eqref{eq:protocellPDEG}. Then, in the limit as $t \to \infty$, the solution $\mu_t(dx)$ to the multilevel dynamics will have the following long-time behavior \begin{displaymath} \mu_t(dx)  \rightharpoonup \left\{
     \begin{array}{cr}
       \delta(x) & :  \lambda(1-\eta) \leq s \theta \\
       \tilde{Z}_p^{-1} \:  x^{\mathlarger{\left[(\lambda / s) (1 - \eta) - \theta - 1\right]}} (1-x)^{\mathlarger{\theta - 1}} \exp\left( -\frac{\ds\lambda \eta x}{s} \right) dx & : \lambda(1-\eta) >  s \theta
     \end{array} \right. .
     \end{displaymath} 
\end{proposition}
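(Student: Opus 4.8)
The plan is to derive Proposition~\ref{prop:steadystateetamodel} as a direct specialization of the general multilevel-selection results recalled in Section~\ref{sec:existingresults}, applied with $\pi(x)\equiv s$ and $G(x)=G_{FS}(x)=x(1-\eta x)$. First I would verify the structural hypotheses. The within-group advantage $\pi\equiv s$ is a positive constant, hence lies in $C^1([0,1])$ with $\pi(x)>0$ everywhere; $G_{FS}$ is a polynomial, hence $C^1$; and $G_{FS}(1)-G_{FS}(0)=1-\eta$. When $\eta<1$ this difference is strictly positive, so $(G_{FS},\pi)$ satisfies the Prisoners' Dilemma hypotheses of Theorem~\ref{thm:PDconvergencetosteady}; moreover $G_{FS}(x)=x(1-\eta x)>0=G_{FS}(0)$ for every $x\in(0,1]$ (since $\eta x\le\eta<1$), so $G_{FS}(0)$ is the \emph{unique} minimum of $G_{FS}$ on $[0,1]$---this is exactly the additional hypothesis required for the equality case $\lambda=\lambda^*$ in Theorem~\ref{thm:PDconvergencetodelta}. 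Together with the standing assumption that $\mu_0(dx)$ has Hölder exponent $\theta>0$ near $x=1$ (with positive, finite Hölder constant $C_\theta$ where this is needed), all hypotheses of both theorems are in force for $\eta<1$.

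Next I would make the threshold and the steady-state density explicit. Substituting $G_{FS}(1)-G_{FS}(0)=1-\eta$ and $\pi(1)=s$ into Equation~\eqref{eq:lambdastargeneral} gives $\lambda^*=s\theta/(1-\eta)=\lambda^*_{FS}$. Because $\pi$ is constant, the terms of Equation~\eqref{eq:lambdaCofx} carrying the factors $\pi(x)-\pi(0)$ and $\pi(x)-\pi(1)$ vanish, and the two surviving terms collapse---as in the computation behind Equations~\eqref{eq:lambdaCofxFS}--\eqref{eq:expCintFS}---to $-\lambda C_{FS}(x)\equiv\lambda\eta$, so the exponential factor in Equation~\eqref{eq:flambdatheta} equals $\exp\!\left(\tfrac{\lambda\eta}{s}(1-x)\right)$; absorbing its constant part into the normalization turns the general steady state $p^{\lambda}_{\theta}$ into the explicit density of Equation~\eqref{eq:plambdathetaFS}. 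Its exponent at $x=0$ is $\lambda(1-\eta)/s-\theta-1$, so integrability there is equivalent to $\lambda(1-\eta)>s\theta$. With these identifications the trichotomy is immediate: Theorem~\ref{thm:PDconvergencetosteady} gives $\mu_t(dx)\rightharpoonup p^{\lambda}_{\theta}(x)\,dx$ when $\lambda(1-\eta)>s\theta$; the first bullet of Theorem~\ref{thm:PDconvergencetodelta} gives $\mu_t(dx)\rightharpoonup\delta(x)$ when $\lambda(1-\eta)<s\theta$; and its second bullet---using the uniqueness of the minimum at $x=0$ established above together with $0<C_\theta<\infty$---gives $\mu_t(dx)\rightharpoonup\delta(x)$ in the equality case $\lambda(1-\eta)=s\theta$.

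The one configuration the cited theorems do not reach verbatim, and which I expect to be the main obstacle, is the perfect-complementarity endpoint $\eta=1$, where $G_{FS}(1)=G_{FS}(0)=0$: the hypothesis $G(1)>G(0)$ then fails and $G_{FS}(0)$ is no longer the unique minimum. Here $\lambda^*_{FS}=\infty$, so the assertion is that $\mu_t(dx)\rightharpoonup\delta(x)$ for \emph{every} $\lambda>0$. I would dispatch this either by a limiting argument---the steady state $p^{\lambda}_{\theta}$ for $\eta<1$ has $x$-exponent $\lambda(1-\eta)/s-\theta-1\to-\theta-1<-1$ as $\eta\uparrow1$, so $p^{\lambda}_{\theta}(x)\,dx\rightharpoonup\delta(x)$, and combining this with the $\eta<1$ conclusions of Theorems~\ref{thm:PDconvergencetosteady} and~\ref{thm:PDconvergencetodelta} would give the $\eta=1$ statement \emph{provided} one establishes a comparison between the solutions $\mu^{\eta}_t$ and $\mu^{1}_t$ that is uniform in $t$---or by re-running the proof of Theorem~\ref{thm:PDconvergencetodelta} from \cite{cooney2021long} directly, observing that the strict inequality $G(1)>G(0)$ there serves essentially only to place the threshold $\lambda^*$ at a finite value, while when $\lambda^*=\infty$ the same argument (nonexistence of an integrable steady state with Hölder exponent $\theta$, combined with the concentration mechanism) still forces $\mu_t(dx)\rightharpoonup\delta(x)$. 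Making one of these routes rigorous---in particular the uniform-in-$t$ control in the limiting approach---is the only part of the argument that goes beyond quoting Section~\ref{sec:existingresults}.
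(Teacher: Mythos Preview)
Your approach is essentially the same as the paper's: the proposition is presented there as a direct specialization of Theorems~\ref{thm:PDconvergencetosteady} and~\ref{thm:PDconvergencetodelta} to $\pi(x)\equiv s$ and $G_{FS}(x)=x(1-\eta x)$, with the explicit steady state computed beforehand in Equations~\eqref{eq:plambdathetaFSgeneral}--\eqref{eq:plambdathetaFS} and the threshold identified in Equation~\eqref{eq:lambdastarFS}. Your verification of hypotheses and your handling of the $\lambda<\lambda^*$, $\lambda>\lambda^*$, and $\lambda=\lambda^*$ cases matches exactly what the paper invokes.

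On the $\eta=1$ endpoint you are in fact more careful than the paper. The paper does not give a separate argument there; it simply records in the remark following the proposition that $\lambda^*_{FS}\to\infty$ as $\eta\to1$, so the long-run state is $\delta(x)$ for every finite $\lambda$. That is, the paper treats $\eta=1$ informally as the limiting case rather than by an independent proof, and does not address the technical point you raise (that the hypothesis $G(1)>G(0)$ of the cited theorems fails at $\eta=1$). Your identification of this as the one place requiring work beyond Section~\ref{sec:existingresults}, and your sketched routes through a limiting/comparison argument or a direct re-run of the underlying proof in \cite{cooney2021long}, are accurate observations that the paper itself does not spell out.
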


\begin{remark} When $\eta = 0$, Proposition \ref{prop:steadystateetamodel} recovers the result for the Luo-Mattingly model. As $\eta \to 1$, critical $\lambda^*_{FS}$ needed to achieve $\lambda > \frac{s \theta}{1 - \eta} \to \infty$, so the long-run steady-state is $\delta(x)$ for every finite relative strength of between-protocell competition $\lambda$ when the group reproductive fitness is given by $G_1(x) = x(1-x)$. As a result, it is not possible to achieve long-time coexistence of fast and slow replicators when the two genes are perfect complements at the between-protocell level and there is any gene-level advantage for fast replicators.  \end{remark}

In Figure \ref{fig:etadensities}, we illustrate two families of steady-state densities for various values of between-protocell selection strength $\lambda$ and for two choices of complementarity parameter $\eta$ for which protocell-level reproduction is maximized by an all-slow composition ($\eta = \frac{1}{3}$, left) or protocell-level reproduction is maximized by protocells composed of 75 percent slow replicators and 25 percent fast replicators ($\eta = \frac{2}{3}$, right). In the first case in which all-slow protocells are optimal, we see increasing the level of between-protocell competition $\lambda$ can allow for as many slow replicators as possible at steady state. For the case in which 75 percent slow replicators are collectively optimal, we see that the steady state densities concentrate around a fifty-fifty mix of fast and slow replicators, yielding a composition achieving a suboptimal protocell-level fitness even in the limit of strong between-protocell competition. 
\begin{figure}[ht]
    \centering
    \includegraphics[width=0.48\textwidth]{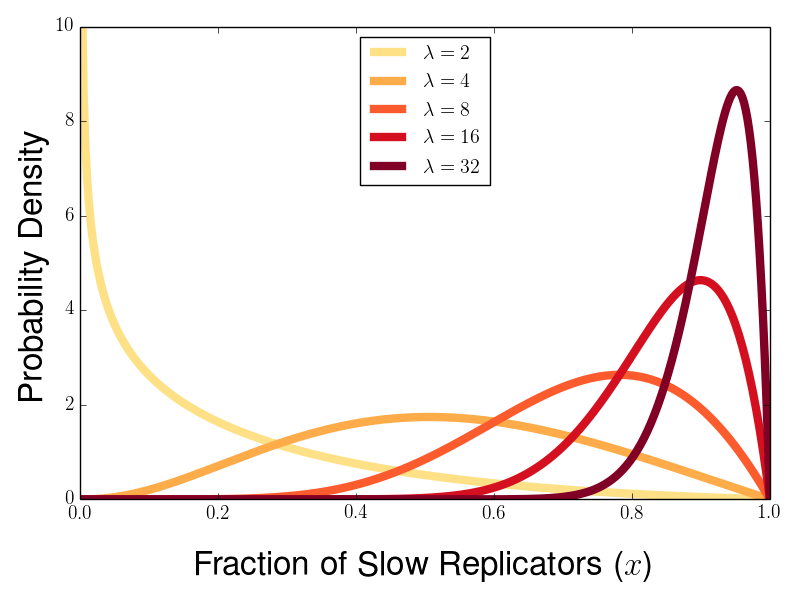}
    \includegraphics[width=0.48\textwidth]{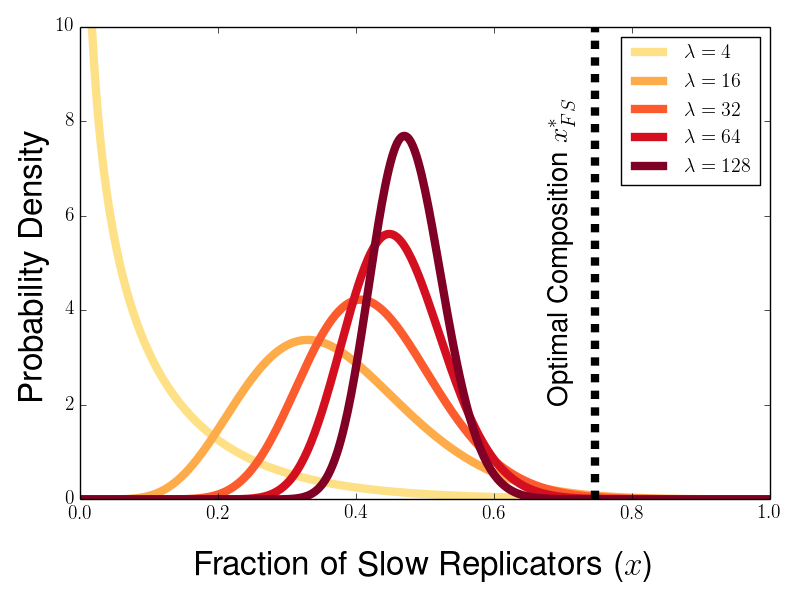}
    
    \caption{Steady state densities given by Equation \eqref{eq:plambdatheta} for various relative selection strengths $\lambda$ when $\eta = \frac{1}{3}$ and $G_{FS}(x)$ is maximized by protocells composed entirely of slow replicators (left), and when $\eta = \frac{2}{3}$ and $G_{FS}(x)$ is maximized by a protocell composition $x^*_{FS} = \frac{3}{4}$ featuring 75 percent slow replicators (right). The dotted vertical line in the right panel corresponds to the maximal payoff group $x^*_{FS}$.}
    \label{fig:etadensities}
\end{figure}

The discrepancy seen in Figure \ref{fig:etadensities}(right) between the compositions achieved at steady state and the composition providing the maximal rate of protocell-level replication can be further explored by studying the average protocell-level fitness at steady state. Using Equation \eqref{eq:Gaveragethresh} and the fact that $G_{FS}(0) = 0$, we see that the average protocell-level fitness for a steady state from Equation \eqref{eq:plambdathetaFS} is given by 
\begin{equation}
  \langle G \rangle_{p^{\lambda}_{\theta}(x)} = 
  \left\{
  \begin{array}{cr}
      0 &:  \lambda \leq \lambda^*_{FS} \\
      \left( \frac{\lambda^*_{FS}}{\lambda} \right) G_{FS}(1)  &: \lambda > \lambda^*_{FS} 
  \end{array}
  \right.
\end{equation}
and satisfies $\langle G \rangle_{p^{\lambda}_{\theta}(x)} \to G_{FS}(1)$ as $\lambda \to \infty$. When $\eta > \frac{1}{2}$ and $G_{FS}(x)$ is maximized by an interior fraction of slow replicators, we can use Equation \eqref{eq:GFS1diff} to see that $G_{FS}(x) = G_{FS}(1)$ for both $x=1$ and for $x = \frac{1}{\eta} - 1 \in (0,1)$. For the case considered in Figure \ref{fig:etadensities}(right) with $\eta = \frac{2}{3}$, we see that $\frac{1}{\eta} - 1 = \frac{1}{2}$, and therefore, when between-protocell competition is strong, the steady states of Figure \ref{fig:etadensities}(right) appear to concentrate upon the other point $x$ at which $G_{FS}(x) = G_{FS}(1)$ .

\sloppy{We formalize this observation about concentration upon the composition $\min\left(1,\tfrac{1}{\eta} -1\right)$ for strong between-protocell competition by studying the modal composition for the steady state densities $p^{\lambda}_{\theta}(x)$. In Proposition \ref{prop:FSmodal}, we study the most abundant protocell composition at steady state $\hat{x}_{\lambda} := \argsup_{x \in [0,1]} p^{\lambda}_{\theta}(x)$, with particular emphasis placed on the limit as $\lambda \to \infty$. We see that, when the all-slow protocell is most favored under between-protocell reproduction ($\eta \leq \frac{1}{2}$), the modal composition at steady state approaches the all-slow protocell in the limit as $\lambda \to \infty$. In the alternate case in which a mix of fast and slow replicators is most favored under between-protocell competition, we find that the modal composition approaches $\frac{1}{\eta} - 1$ as $\lambda \to \infty$, and the population concentrates upon a composition featuring fewer slow replicators than optimal for protocell-level replication and with the same collective replication rate as the all-slow protocell (Figure \ref{fig:etadensities}).

In Proposition \ref{prop:FSmodal}, we assume that $\lambda (1 - \eta) > s ( \theta + 1)$ and that $\theta \geq 1$. It can be seen from Equation \eqref{eq:flambdatheta} that the steady state densities $f^{\lambda}_{\theta}(x)$ are bounded on $[0,1]$ under these assumptions. While the former assumption holds for any initial condition under sufficiently strong between-protocell competition, the latter assumption restricts the set of steady states under consideration to those that remain bounded up to the all-slow composition. However, restricting attention to steady states with $\theta \geq 1$ allows us to study modal outcomes that depend on the relative strength of gene-level and protocell-level competition, rather than reflecting the blowup of the initial distribution near the all-slow composition.}

\begin{proposition}[\bfseries Most Abundant Protocell Composition at Steady State Features Fewer Slow Replicators Than Optimal, Even in the Limit of Infinite Strength of Between-Protocell Composition] \label{prop:FSmodal}
Consider the steady state density $p^{\lambda}_{\theta}(x)$ and suppose that $\lambda (1 - \eta) > s \left(\theta + 1\right)$ and $\theta \geq 1$. Then, for $\eta \in (0,1]$, the most abundant composition at steady state $\hat{x}_{FS}^{\lambda} := \argmax_{x \in [0,1]} p^{\lambda}_{\theta}(x)$ is given by 
\begin{equation} \label{eq:modalFSlambda}
    \hat{x}^{\lambda}_{FS} = \frac{\lambda - 2s - \sqrt{\left(\lambda - 2s \right)^2 - 4 \lambda \eta \left[ \lambda \left(1-\eta\right) - s \left(\theta +  1 \right) \right]}}{2 \lambda \eta}.
\end{equation}
and, for $\eta = 0$, the most abundant composition is given by $\hat{x}^{\lambda}_{FS} = 1$. Furthermore, in the limit of infinite intensity of between-protocell composition, the modal composition $\hat{x}^{\infty}_{FS} := \lim_{\lambda \to \infty} \hat{x}_{FS}^{\lambda}$ satisfies 
\begin{equation} \label{eq:hatxFSinf}
    \hat{x}^{\infty}_{FS} = \left\{  
    \begin{array}{cr}
    1 &: \eta < \frac{1}{2} \vspace{2mm} \\
    \ds\frac{1}{\eta} - 1 &: \eta \geq \frac{1}{2}
    \end{array}
    \right. .
\end{equation}
Comparing this expression with the group composition $x^*_{FS}$ from Equation \eqref{eq:xstarFS} that maximizes between-protocell replication rate, we see that, when $\eta > \frac{1}{2}$ (and corresponding $1 > \frac{1}{2 \eta}$), 
\begin{equation} \label{eq:xhatvsxstarFS} \hat{x}^{\infty}_{FS} = \frac{1}{\eta} - 1 = \frac{1}{2 \eta} + \underbrace{\left( \frac{1}{2\eta} - 1 \right)}_{< 0} <  \frac{1}{2 \eta} = x^*_{FS}. \end{equation}
\end{proposition}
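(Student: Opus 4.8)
The plan is to work directly from the explicit steady-state density in Equation~\eqref{eq:plambdathetaFS}, differentiating its logarithm to locate the mode. First I would observe that, under the hypotheses $\lambda(1-\eta) > s(\theta+1)$ and $\theta \geq 1$, the exponent $(\lambda/s)(1-\eta) - \theta - 1$ of $x$ is strictly positive, so $p^{\lambda}_{\theta}(x) \to 0$ as $x \to 0^+$, while the exponent $\theta - 1$ of $(1-x)$ is nonnegative, so $p^{\lambda}_{\theta}$ extends continuously to all of $[0,1]$; hence the supremum is attained, and the maximizer is either an interior critical point or the endpoint $x = 1$. Since $p^{\lambda}_{\theta}$ is smooth and positive on $(0,1)$, its interior critical points are the zeros of $\frac{d}{dx}\log p^{\lambda}_{\theta}(x) = \frac{a}{x} - \frac{b}{1-x} - \frac{\lambda \eta}{s}$, where $a = (\lambda/s)(1-\eta) - \theta - 1$ and $b = \theta - 1$.

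For $\eta \in (0,1]$, clearing denominators by multiplying through by $s\,x(1-x)$ turns the critical-point equation into a quadratic; the one algebraic step worth doing carefully is the identity $a + b + \lambda\eta/s = \lambda/s - 2$, which together with $sa = \lambda(1-\eta) - s(\theta+1)$ collapses it to $q(x) := \lambda \eta x^2 - (\lambda - 2s) x + \big[\lambda(1-\eta) - s(\theta+1)\big] = 0$. I would then single out the relevant root by a sign check: $q(0) = \lambda(1-\eta) - s(\theta+1) > 0$ by hypothesis, whereas $q(1) = s(1-\theta) \leq 0$ since $\theta \geq 1$. Because this upward parabola is positive at $x = 0$, its smaller root lies in $(0,1]$ and its larger root is $\geq 1$, and the smaller root is exactly the minus-sign branch of the quadratic formula, which is the expression in Equation~\eqref{eq:modalFSlambda}. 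Since $\log p^{\lambda}_{\theta}(x) \to -\infty$ as $x \to 0^+$ and there is a unique critical point in $(0,1]$, this root is the global maximizer: for $\theta > 1$ the competing endpoint value $p^{\lambda}_{\theta}(1) = 0$ is beaten automatically, while for $\theta = 1$ we have $q(1) = 0$, so $x = 1$ is itself a root and the minus-sign branch returns whichever of $1$ and the other root is smaller, still the argmax by the monotonicity of $\log p^{\lambda}_{\theta}$ on each side of the unique interior critical point. The degenerate case $\eta = 0$ is handled directly, since $p^{\lambda}_{\theta}$ is then a Beta-type density whose mode and its $\lambda \to \infty$ limit can be read off at once.

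For the $\lambda \to \infty$ asymptotics I would divide $q$ by $\lambda$, obtaining $\eta x^2 - (1 - 2s/\lambda) x + \big[(1-\eta) - s(\theta+1)/\lambda\big] = 0$, which converges coefficientwise to $\eta x^2 - x + (1-\eta) = 0$. The discriminant of the limiting quadratic is the perfect square $1 - 4\eta(1-\eta) = (1-2\eta)^2$, so its roots are $1$ and $\frac{1-\eta}{\eta} = \frac{1}{\eta} - 1$; by continuity of roots in the coefficients, $\hat{x}^{\infty}_{FS}$ is the limit of the minus-sign branch, hence equals $1$ when $\frac{1}{\eta} - 1 > 1$ (that is, $\eta < \frac{1}{2}$) and equals $\frac{1}{\eta} - 1$ when $\eta \geq \frac{1}{2}$, which is Equation~\eqref{eq:hatxFSinf}. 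Finally, in the regime $\eta > \frac{1}{2}$ the comparison with $x^*_{FS} = \frac{1}{2\eta}$ from Equation~\eqref{eq:xstarFS} is immediate: writing $\frac{1}{\eta} - 1 = \frac{1}{2\eta} + \big(\frac{1}{2\eta} - 1\big)$ and noting that $\frac{1}{2\eta} - 1 < 0$ gives Equation~\eqref{eq:xhatvsxstarFS}.

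I expect the main obstacle to be the bookkeeping that pins down which root is the global argmax across the boundary cases --- especially $\theta = 1$, where $p^{\lambda}_{\theta}$ does not vanish at $x = 1$ so the endpoint is a genuine competitor, and $\eta = 0$, where the quadratic degenerates to a linear equation --- rather than anything in the limiting computation, which is a routine continuity-of-roots argument once the discriminant is recognized as a perfect square.
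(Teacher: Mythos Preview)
Your proposal is correct and follows essentially the same route as the paper: both arguments differentiate the explicit density of Equation~\eqref{eq:plambdathetaFS} (you via $\log p^{\lambda}_{\theta}$, the paper directly), obtain the same quadratic up to a factor of $s$ (the paper's $a(x)$ satisfies $s\,a(x) = q(x)$), use the sign check $q(0)>0$, $q(1)\le 0$ on a convex parabola to select the minus-sign root, and then pass to the $\lambda\to\infty$ limit where the discriminant becomes $(1-2\eta)^2$. Your treatment of the $\theta=1$ endpoint case is, if anything, slightly more careful than the paper's.
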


This means that, despite taking $\lambda \to \infty$, %
the signature of the selective advantage of fast replicators over slow replicators under within-protocell gene-level competition remains when intermediate levels of fast and slow replicators are favored $\eta > \tfrac{1}{2}$. In Figure \ref{fig:etaghost}, we illustrate this discrepancy between the most abundant group type at steady state as $\lambda \to \infty$ and the cell composition with the fastest replication rate for between-protocell competition.

\begin{figure}[ht] 
  \centering
    \includegraphics[width=0.7\textwidth]{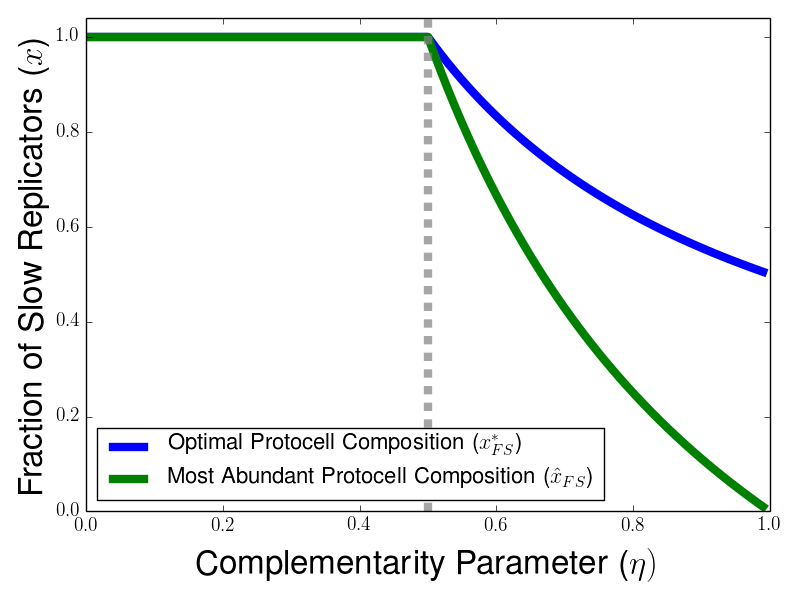}
      \caption[Protocell gene composition $x$ with maximum protocell-level reproduction rate $G(x)$ compared to most abundant protocell composition at steady state as $\lambda \to \infty$, plotted in terms of the parameter $\eta$.]{Protocell gene composition $x$ with maximum group reproduction rate $G(x)$ compared to peak abundance for the steady-state $f_{\theta}(x)$ as $\lambda \to \infty$, plotted in terms of the parameter $\eta$. For $\eta \leq \tfrac{1}{2}$ both peak protocell fitness and most abundant protocell type are full slow replicators protocells, while for $\eta > \frac{1}{2}$, protocell composition $\tfrac{1}{2 \eta}$ with maximal collective replication rate features more slow replicators than the most abundant protocell composition achieved at steady-state $\tfrac{1}{\eta} - 1$. }
      \label{fig:etaghost}
\end{figure}

\begin{proof}[Proof of Proposition \ref{prop:FSmodal}]
For any $\eta \in [0,1]$, we can differentiate the expression for steady state $p^{\lambda}_{\theta}(x)$ from Equation \eqref{eq:plambdatheta} to find that
\begin{equation}  \label{eq:FSdensityderiv} \dsddx{p^{\lambda}_{\theta}(x)}{x} = \tilde{Z}_p^{-1} 
a(x) x^{(\lambda/s)(1- \eta) - \theta - 2} (1-x)^{\theta - 2} \exp\left(-\frac{\lambda \eta x}{s} \right),  \end{equation}
where $a(x)$ is a quadratic function given by 
\begin{equation} \label{eq:axFS}
    a(x) = \frac{\lambda}{s} \left( 1 - \eta \right) - \theta - 1 + \left[2 - \frac{\lambda}{s} \right] x + \frac{\lambda \eta}{s} x^2.
\end{equation}
From the form of Equation \eqref{eq:FSdensityderiv}, we see that the critical points of $p^{\lambda}_{\theta}(x)$ are the endpoints $0$ and $1$, as well as any roots of $a(x)$ that are located in $(0,1)$. 

For the case of $\eta = 0$, we can see that Equation \eqref{eq:axFS} simplifies to
\begin{equation}
  a(x) = \frac{\lambda}{s} - \theta - 1 + \left[ 2 - \frac{\lambda}{s} \right] x,
\end{equation}
which is a linear function of $x$ taking on the values $a(0) = \frac{\lambda}{s} - \theta - 1$ and $a(1) = 1 - \theta$. Therefore we deduce $a(x)$ is positive on $[0,1]$ that under our assumptions that $\theta \geq 1$ and $\lambda > s (\theta + 1)$ when $\eta = 0$, and we can conclude that $p^{\lambda}_{\theta}(x)$ is maximized at the full-slow composition $\hat{x}^{\lambda}_{FS} = 1$ in this case.

For the case of $\eta \in (0,1]$, we use Equation \eqref{eq:axFS} to see that the roots of $a(x)$ are given by
\begin{equation}
x_{\pm}^{\lambda} = \frac{\lambda - 2 s  \pm \sqrt{(\lambda - 2 s)^2 - 4  \lambda \eta \left[\lambda(1- \eta) - s \left( \theta + 1 \right) \right]}}{2  \lambda \eta}, 
\end{equation}
and we can use our assumptions that $\lambda \left(1 - \eta \right) > s \left( \theta + 1 \right)$ and $\theta \geq 1$ to deduce that both of these roots are real and positive. Using these assumptions, we can also see from Equation \eqref{eq:axFS} that $a(x)$ takes on the following values at the endpoints $0$ and $1$
\begin{subequations}
\begin{align}
a(0) &= \frac{\lambda}{s} \left( 1 - \eta \right) - \theta - 1 > 0 \\
a(1) &= 1 - \theta \leq 0.
\end{align}
\end{subequations}
Because $a(x)$ is a convex, quadratic function, this tells us that $a(x)$ crosses $0$ from above at the unique point $ \hat{x}^{\lambda}_{-} \in [0,1]$, as this is the smaller of the two roots of $a(x)$. From Equation \eqref{eq:FSdensityderiv}, this tells us that $p^{\lambda}_{\theta}(x)$ is increasing on $(0,x^{\lambda}_{FS})$ and non-increasing on $[x^{\lambda}_{FS},1]$ (where the second interval can collapse to a single point if $x^{\lambda}_{FS} = 1$). This allows us to deduce that  $\hat{x}^{\lambda}_{FS} = \argmax_{x \in [0,1]} p^{\lambda}_{\theta}(x) = \hat{x}^{\lambda}_{-}$, and therefore we have shown that the maximizer of $p^{\lambda}_{\theta}(x)$ is given by Equation \eqref{eq:modalFSlambda} when our complementarity parameter satisfies $\eta \in (0,1]$.

In the limit of strong between-protocell competition as $\lambda \to \infty$, we see that
\begin{equation}\hat{x}^{\infty}_{FS} := \lim_{\lambda \to \infty}  \hat{x}^{\lambda}_{FS} = \frac{1}{2 \eta} - \sqrt{\frac{1}{4 \eta^2} - \frac{1}{\eta} + 1} = \frac{1}{2 \eta} - \sqrt{\left(\frac{1}{2 \eta} - 1 \right)^2}.
\end{equation}
We can further simplify the square root depending on the value of $\eta$ to see that the modal composition at steady state in the large $\lambda$ limit is
\begin{equation}
\hat{x}^{\infty}_{\eta} = \left\{ 
\begin{array}{cr}
1 &: \eta < \frac{1}{2} \\
\ds\frac{1}{\eta} - 1 &: \eta \geq \frac{1}{2}
\end{array}
\right..
\end{equation}
We can then compare this modal outcome $\hat{x}^{\infty}_{FS}$ to the maximum possible protocell-level fitness $x^*_{FS}$ from Equation \eqref{eq:xstarFS}, noting that $\hat{x}_{FS}^{\infty} = x^*_{FS}$ when $\eta \leq \frac{1}{2}$ (and all-slow protocells maximize collective fitness) and using Equation \eqref{eq:xhatvsxstarFS} to see that $\hat{x}^{\infty}_{FS} < x^*_{FS}$ when $\eta > \frac{1}{2}$ (and the collectively optimal protocell features a mix of fast and slow replicators). 
 
\end{proof}

\begin{remark}
From our our protocell model above and previous work on multilevel selection in evolutionary games \cite{cooney2019replicator,cooney2020analysis}, we have observed that coexistence of fast and slow genes becomes impossible to achieve at any relative selection strength $\lambda$ when the protocell-level reproduction function is given by $G(x) = x(1-x)$. This particular replication function has two notable properties: the all-slow composition and all-fast composition are equally capable of protocell-level replication with $G_{FS}(1) = G_{FS}(0) = 0$ and collective replication is maximized by protocells with a fifty-fifty mix of fast and slow genes. It is worth noting that the latter property of this reproduction function is not fundamental to the failure to promote coexistence, as we could consider protocell families of the form $G(x) = x^2 (1 - \eta x)$ or $G(x) = x (1 - \eta x)^2$ feature optimal mixes of $x = \frac{2}{3}$ and $x = \frac{1}{3}$ when $\eta = 1$, respectively. These other families of functions also satisfy the former property that $G(1) = G(0) = 0$ for $\eta = 1$, and we could similarly apply Theorem \ref{thm:PDconvergencetodelta} that no coexistence would be possible in these cases as well. From Theorem \ref{thm:PDconvergencetosteady} and the threshold condition of Equation \eqref{eq:lambdastargeneral}, we see that key importance played by the requirement for the collective reproduction rate of the all-slow equilibrium $G_{FS}(1)$ to exceed that of the all-fast equilibrium $G_{FS}(0)$ to allow coexistence of both genes at steady state. This threshold criterion and the role of the H{\"o}lder exponent near $x=1$ requiring an initial support of groups near full-cooperation bears resemblance to numerical findings in models for the origin of life \cite{szathmary1987group,markvoort2014computer}, in which the success of a small number of protocells featuring many slow/cooperative replicators can all for the long-time survival of genes necessary for collective reproduction. 
\end{remark}

\section{Formulation of Protocell Model with Slow-Fast Dimers: Trimorphic Dynamics} \label{sec:trimorphicformulation}

In Section \ref{sec:protocellongtime}, we found that no level of between-protocell competition could allow for coexistence of fast and slow replicators when the two genes were perfect complements under protocell-leve replication. In this section, we present one approach for overcoming this extreme shadow of lower-level selection, which consists of linking together the fast and slow gene into a dimer, or protochromosome, and allowing the dimer to compete along with fast and slow replicators within protocells. In this model, we augment our baseline model for multilevel selection in a population of protocells with the approach of Maynard Smith and Szathmary to describe how dimerization impacts gene-level competition and protocell-level replication rates.

In Section \ref{sec:dimerwithin}, we describe the gene-level (within-protocell) birth-death dynamics of our three kinds of replicators, and, in Section \ref{sec:dimergroupmultilevel}, we describe a cell reproduction function that depends on the fraction of slow and fast genes present in both the free replicators and dimers. In Section \ref{sec:dimergroupmultilevel}, we also present our  PDE model describing the coupled gene-level and protocell-level competition, illustrating how the model corresponds to a nested replicator equation now featuring three types of genetic replicators (fast, slow, and dimer). The formulation of our multilevel PDE for the fast-slow-dimer competition in this section will be based the intuition of a nested within-protocell, between-protocell replicator equation studied previously for our baseline fast-slow protocell model and in previous work on multilevel selection models with two types of individuals \cite{luo2014unifying,cooney2019replicator}. In Section \ref{sec:derivation}, we  provide a derivation for our trimorphic PDE dynamics from an underlying nested two-level Moran model describing the birth-death dynamics of the three types of replicators in a finite population.

\subsection{Gene-Level (Within-Protocell) Dynamics} \label{sec:dimerwithin}

In this section, we describe the within-cell dynamics of the fast-slow-dimer system, showing what the expected dynamics would be for the three-types in the absence of any between-cell competition. We denote the fraction of replicators that are fast replicators, slow replicators, and dimers by $x$, $y$, and $z$, respectively. We make the simplifying assumption that the total composition of replicators satisfies $x+y+z = 1$, so we can think of a dimer as having half of a fast template and half of a slow template. \textcolor{red}{}One could also treat the dimer as equivalent to two templates, so the net loss of a dimer via death would need to be compensated by the birth of two free replicators, but we're not as concerned with that detail because we are most interested in the individual-level disadvantages of dimers and the benefits that dimers can provide for a cell. }

We assume that the fast replicators, slow replicators, and fast-slow dimers produce a copy of itself with rates $1 + w_I b_F$, $1 + w_I b_S$, and $1 + w_I b_D$, respectively, and the copy replaces a randomly chosen replicator. We will assume that %
$b_F > b_S > b_D$ to capture the fact that fast replicators replicate faster than slow replicators and that dimers should be the slowest replicators because they require reproducing both a fast gene and a slow gene in a single birth event.  %
In a protocell with a large number of genes, we can use these rules for birth and death to characterize the within-protocell replicator dynamics for the fast, slow, and fast-slow dimer system are governed by the following system of ODEs 
\begin{subequations} \label{eq:withincelltrimorphic}
\begin{eqnarray}
\dsddt{x}&=& w_I x \left[ b_S - \left( b_S x +  b_F y +  b_D z \right)\right]\\
\dsddt{y}&=& w_I y \left[ b_F - \left( b_S x +  b_F y +  b_D z \right) \right]\\
\dsddt{z} &=& w_I z \left[ b_D - \left( b_S x +  b_F y +  b_D z \right)  \right]. \label{eq:withindimer}
\end{eqnarray}
\end{subequations} 
In subsequent analysis of the gene-level dynamics, we will rescale time to eliminate the factor of $w_I$ describing the strength of selection for gene-level replication events. Because the cell composition satisfies the conserved quantity $x+y+z = 1$, we can rewrite the proportion of fast-slow dimers as $z = 1 - x - y$, yielding the reduced two-dimensional within-protocell dynamics given by 
\begin{subequations} \label{eq:withincelldimorphic}
\begin{eqnarray}
\dsddt{x}&=& x \left[ b_S - b_D  + \left( b_D - b_S \right) x + \left( b_D - b_F \right) y \right]\\
\dsddt{y}&=&  y \left[ b_F - b_D + \left(b_D - b_S \right)x +  \left( b_D - b_F \right) y \right]. \label{eq:withincelly}
\end{eqnarray}
\end{subequations} 

In Figure \ref{fig:withinphase}, we plot the vector field and sample trajectories for the within-protocell dynamics given by Equation \eqref{eq:withincelldimorphic}, showing that the gene-level dynamics will eventually reach the equilibrium composition consisting entirely of fast replicators. This dominance of fast replicators under gene-level competition is shown analytically in Proposition \ref{prop:withinglobal}, as we see that the all-fast equilibrium is globally asymptotically stable for initial conditions on the interior of the simplex under the dynamics of Equation \eqref{eq:withincelldimorphic}. 

\begin{figure}[ht]
    \centering
    \includegraphics[width = 0.6\textwidth]{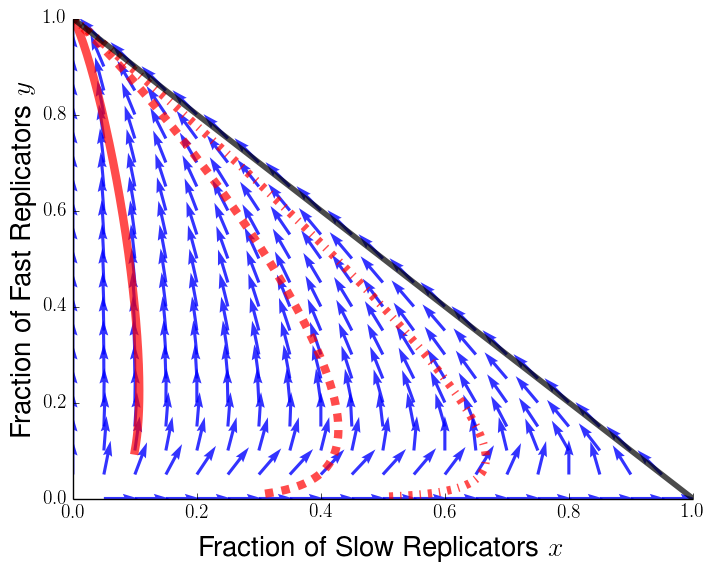}
    \caption{Vector field (blue arrows) and sample trajectories for various initial condition (red lines) for within-cell dynamics of Equation \eqref{eq:withincelldimorphic} plotted on the simplex. While individual trajectories may show initial increases in the fraction of slow replicators, we can see by following the vector field that the within-protocell dynamics eventually push for fixation upon the all-fast composition (shown in the figure by the point $(0,1)$ at the top-left of the simplex).}
    \label{fig:withinphase}
\end{figure}

\begin{proposition}[\bfseries Global Stability of All-Dimer Equilibrium Under Within-Protocell Dynamics] \label{prop:withinglobal}
Consider any point $(x_0,y_0)$ on the interior of the three-type simplex, therefore satisfying the conditions $x_0, y_0 > 0$ and $x_0 + y_0 < 1$. If $b_F > b_S > b_D$, then solutions $(x(t),y(t))$ to Equation \eqref{eq:withincelldimorphic} with initial condition $(x(0),y(0)) = (x_0,y_0)$ satsifies $(x(t),y(t)) \to (0,1)$ as $t \to \infty$. In other words, the all-fast equilibrium is the global attractor for initial compositions featuring an nontrivial mix of slow, fast, and dimer replicators under the within-protocell dynamics.
\end{proposition}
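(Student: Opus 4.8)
The plan is to exploit the fact that the within-protocell dynamics \eqref{eq:withincelltrimorphic} form a replicator equation with \emph{frequency-independent} (constant) fitnesses $b_S, b_F, b_D$, so that the ratios of the three frequencies evolve autonomously. After rescaling time to absorb $w_I$, I would first record that each of the faces $\{x=0\}$, $\{y=0\}$, $\{z=0\}$ of the simplex is invariant, since each coordinate's ODE in \eqref{eq:withincelltrimorphic} carries that coordinate as a factor. Hence a trajectory starting at an interior point $(x_0,y_0)$ with $x_0,y_0>0$ and $x_0+y_0<1$ — equivalently $x_0,y_0,z_0>0$ — stays in the open simplex for all $t\ge0$; in particular $y(t)>0$ for all $t$, so the quantities $u(t):=x(t)/y(t)$ and $v(t):=z(t)/y(t)$ are well defined on $[0,\infty)$.

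Next I would compute the logarithmic derivatives of $u$ and $v$. Writing $\bar b(t):=b_S x(t)+b_F y(t)+b_D z(t)$ for the mean gene-level birth rate, Equation \eqref{eq:withincelltrimorphic} gives $\dot x/x=b_S-\bar b$, $\dot y/y=b_F-\bar b$, $\dot z/z=b_D-\bar b$, so the frequency-dependent term $\bar b$ cancels in every ratio, leaving
\begin{equation*}
\frac{\dot u}{u}=\frac{\dot x}{x}-\frac{\dot y}{y}=b_S-b_F,\qquad \frac{\dot v}{v}=\frac{\dot z}{z}-\frac{\dot y}{y}=b_D-b_F .
\end{equation*}
These integrate explicitly to $u(t)=u_0\,e^{(b_S-b_F)t}$ and $v(t)=v_0\,e^{(b_D-b_F)t}$. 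Since $b_F>b_S$ and $b_F>b_D$, both exponents are negative, so $u(t)\to0$ and $v(t)\to0$ as $t\to\infty$.

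Finally I would translate back to the original coordinates via the conservation law $x+y+z=1$: dividing by $y$ gives $y(t)=\bigl(1+u(t)+v(t)\bigr)^{-1}\to1$, and therefore $x(t)=u(t)y(t)\to0$ and $z(t)=v(t)y(t)\to0$. In the reduced coordinates of \eqref{eq:withincelldimorphic} this is exactly $(x(t),y(t))\to(0,1)$, the all-fast equilibrium, as claimed. (The identical logarithmic-derivative computation can instead be carried out directly in the two-dimensional system \eqref{eq:withincelldimorphic}, where the quadratic terms in $x$ and $y$ coincide in the two equations and cancel, again giving $\tfrac{d}{dt}\log(x/y)=b_S-b_F$.)

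There is no serious obstacle here: the only points needing care are the invariance of the boundary faces (so that $y$ never vanishes and the ratios remain defined) and the observation that $\bar b$ is common to all three equations and hence drops out of every ratio, which is what makes the dynamics effectively linear on the logarithmic scale. If one prefers to avoid the closed-form solution, $V(x,y,z)=-\log y$ serves as a strict Lyapunov function on the open simplex, since $\dot V=\bar b-b_F=(b_S-b_F)x+(b_D-b_F)z<0$ whenever $(x,y,z)\neq(0,1,0)$, and LaSalle's invariance principle yields the same conclusion.
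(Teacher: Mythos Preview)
Your proof is correct and takes a genuinely different route from the paper's. The paper argues by a differential inequality and comparison principle: from \eqref{eq:withincelly} one checks that $\dot y \ge (b_F-b_S)\,y(1-y)$ on the simplex (the remainder is $(b_S-b_D)z\ge 0$), so $y(t)$ dominates the logistic solution with the same initial value and hence $\liminf_{t\to\infty} y(t)\ge 1$; a separate barrier argument at $y=1$ gives $\limsup_{t\to\infty} y(t)\le 1$, and then $x(t)\to 0$ is handled analogously.

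By contrast, you exploit the fact that the fitnesses are constant, so the mean fitness $\bar b$ cancels in every log-ratio and $x/y$, $z/y$ decay at the explicit exponential rates $b_S-b_F$ and $b_D-b_F$. This is cleaner and more informative: you get all three limits at once and the exact convergence rates, whereas the paper's argument treats $y$ and $x$ separately and yields only qualitative convergence. The paper's comparison approach, on the other hand, is more robust in that it would survive mild frequency dependence in the birth rates, provided the key inequality persists. Your closing Lyapunov remark ($V=-\log y$ with $\dot V=(b_S-b_F)x+(b_D-b_F)z$) is a nice third alternative that sits between the two in spirit.
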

\begin{proof}

Using Equation \eqref{eq:withincelly}, and our assumption on the birth rates that $b_F > b_S > b_D$, we see that within-protocell dynamics for the fast replicator satisfy
\begin{dmath}  \dsddt{y} = y \left[ b_F \left(1 - y\right) + b_D \left( x + y - 1\right) - b_S x \right]  \geq \left( b_F - b_S \right)  y \left( 1 - y\right). \end{dmath}
Denoting by $w(t)$ the solution to the logistic ODE
\begin{equation} \label{eq:wtODE}
    \dsddt{w} = \left( b_F - b_S \right) w (1-w) \textnormal{ with initial condition } w(0) = y_0,
\end{equation}
we see that the fraction of fast replicators $y(t)$ solving Equation \eqref{eq:withincelly} satisfies the comparison principle $y(t) \geq w(t)$. Noting that $w(t) \to 1$ as $t \to \infty$ under our assumptions that $b_F > b_S$ and $y_0 > 0$, we can deduce from our comparison principle that $\liminf_{t \to \infty} y(t) = 1$.

Furthermore, if $y(t) = 1$ at any time $t$, we see from Equation \eqref{eq:withincelly} that 
\begin{equation}
    \dsddt{y} \bigg|_{y = 1} = \left( b_D - b_S \right) x \leq 0,
\end{equation}
and therefore we can deduce that $y(t)$ cannot exceed $1$ for trajectories of Equation \eqref{eq:withincelly} with initial condition $y_0 < 1$. Therefore we can deduce that $\limsup_{t \to \infty} y(t) \leq 1$. We can combine with our previous bound to conclude that $\lim_{t \to \infty} y(t) = 1$, provided an initial condition satisfying $y(t) = y_0 > 0$. 

Finally, we can show by similar arguments that $x(t) \to 0$ as $t \to \infty$ for initial conditions satisfying $0 \leq x_0 \leq 1$ and $y_0 > 0$. Putting these two results together, we see that $(x(t),y(t)) \to (0,1)$ as $t \to \infty$ for any initial on the interior of the simplex. 

\end{proof}

Now that we understand the within-group dynamics for the trimporphic competition for the fast, slow, and dimer replicators satisfying the general ranking of birth rates $b_F > b_S > b_D$, we can consider a special case of the birth rates motivated by the protocell model from Section \ref{sec:protocell}. In that model, we assumed that  $b_{S} = 1$ and $b_F = 1 + s$. In an attempt to extend this model to incorporate the role of dimers, we will consider in our numerical simulations the following replication rates for fast, slow, and dimer replicators
\begin{subequations}
 \label{eq:trimorphicspecialbirthrates}
\begin{align}
b_S &= 1 \\  
b_F &= 1 + s \\
b_D &= 1 - \frac{1}{2 + s} = \frac{1+s}{2+s}
\end{align}    
   
\end{subequations}
Our assumption on the birth rate of dimers is based on the assumption that a replicase would be required to replicate both the slow and fast gene in a dimer in order for the whole dimer to be replicated, so the time taken for a dimer to replicate should be related to the time needed to replicate both the fast and slow gene. Because fast and slow replcators are assumed to replicate at rate $1+s$ and $1$, this means that the mean time to replication is $\tau_F = \ds\frac{1}{1+s} < 1$ for a fast replicator and $\tau_S = 1$ for a slow replicator. If we take as a proxy guess that the average time to replicate a dimer is the sum of average replication times of its component fast and slow genes, then we can assume a replication time of $\tau_D = 1 + \ds\frac{1}{1+s} = \ds\frac{2+s}{1+s}$, resulting in a birth rate for dimers of $b_D = \ds\frac{1+s}{2+s} = 1 - \ds\frac{1}{2+s} < 1$. While there are a variety of possible assumptions that can be made for the gene-level replication rate of dimers, this choice captures the rough idea that the expected replication time for a dimer could exceed those of fast and slow replicators due to the need to replicate more genetic material. %

\subsection{Protocell-Level Reproduction Functions}\label{sec:dimergroupmultilevel}

Now we need to introduce a group-level reproduction function $G(x,y,z)$ describing the rate of protocell-level reproduction as a function of the composition slow, fast, and dimer replicators. In particular, 
to understand various possible complementarities between fast and slow genes, we will look to generalize the group reproduction function $G(x) = x\left(1 - \eta x \right)$ to see what role including dimers can play on between-protocell competition.

Because the slow and fast genes can appear either in their pure monomer form or in a fifty-fifty mix in the dimer form, we would like the group reproduction function to depend on the total fraction of fast genes ($\% \mathrm{Fast}$) and slow genes ($\% \mathrm{Slow}$) in the protocell in either form. One possible group reproduction function possessing properties can be represented schematically by the following formula 
\begin{equation} \label{eq:grouptrimorphicpseudocode} 
G(x,y,z) = G(\%\mathrm{Fast},\%\mathrm{Slow}) = \left( \% \mathrm{Fast} \right) \left( \% \mathrm{Slow} \right) - c \left( \% \mathrm{Slow} \right)^2
\end{equation}
The first term corresponds to the complementary nature of the fast and slow genes, while the second term describes an intrinsic cost (if $c > 0$) or benefit (if $c < 0$) of the presence of slow replicators. Since the fractions of fast and slow genes are given by $\% \mathrm{Fast} = y + \frac{z}{2}$ and $\% \mathrm{Slow} = x + \frac{z}{2}$, we can write the following actual formula for cellular replication as 
\begin{equation} \label{eq:grouptrimorphicgeneral} 
G(x,y,z) = \left( x + \frac{z}{2} \right) \left( y + \frac{z}{2} \right) - c \left( x + \frac{z}{2} \right)^2
\end{equation}
Here, we can relate the parameter $c$ describing the intrinsic cost or benefit of slow genes to the complementarity parameter $\eta$ 
by considering the group payoff function in the absence of dimers (when $z=0$), yielding 
\[ G(x,1-x,0) = x \left( 1 - x \right) - c x^2 = x \left( 1 - \left(c+1\right) x \right) \]
which agrees with the previous group payoff function $G(x)$ when $\eta = c + 1$. For the case of the Luo-Mattingly model ($\eta = 0$), this formulation corresponds to an intrinsic benefit $c = -1$ of slow replicators, which cancels with the quadratic term coming from the complementarity of slow and fast genes $x(1-x)$ and resulting in a linear protocell-level reproduction function. The Fontanari-Serva protocell model ($\eta = 1$), describes the case in which $c = 0$, so there is no intrinsic benefit or cost of slow genes, and the complementary role of slow and fast genes is the only property that impacts the protocell-level reproduction rate.

Going forward, we will use the parameter $\eta = c + 1$ to describe our group payoff functions, so our trimorphic group payoff function can be rewritten as 
\begin{equation} \label{eq:Gxyzeta}
G(x,y,z) = \left( x + \frac{z}{2} \right) \left[ \left( 1 - \eta\right) x + y + \left( 1 - \frac{\eta}{2} \right) z \right]
\end{equation}

Because the compositions of our cells live on the three-type simplex, we know that $z = 1 - x -y$, which allows us to rewrite our trimorphic group reproduction function as 
\begin{equation} \label{eq:Gtrixy}
G(x,y) = \frac{1}{4} \left(x - y + 1 \right) \left[ 2 - \eta  \left( x - y + 1 \right) \right]
\end{equation}

We can illustrate how different compositions $(x,y)$ shape the protocell-level reproduction rate by plotting $G(x,y)$ on the three-type simplex. In Figure \ref{fig:Gofxyplots}, we illustrate $G(x,y)$ for the complementarity scenarios characterized by $\eta = 1$ (left), $\eta = 0.7$ (center), and $\eta = 0$ (right). In both cases, we see that the collective reproduction rate is constant along lines with slope $1$. We can understand this observation analytically by noting that, on the simplex satisfying $z = 1 - x - y$, the percentage of slow genes in a protocell is given by $\%\mathrm{Slow} = x + \frac{z}{2} = \frac{1}{2} \left(x - y + 1\right)$, and therefore we see from Equation \eqref{eq:Gtrixy} that the level sets of $G(x,y)$ are given by lines of the form $y = x + 1 - \%\mathrm{Slow}$. When protocell-level replication is maximized by the mix of slow and fast genes given by $\%\mathrm{Slow} = \frac{1}{2\eta}$ and $\%\mathrm{Fast} = 1 - \frac{1}{2 \eta}$ (when $\eta > \frac{1}{2}$), we then see that collective reproduction is maximized on the line $y = x + 1 - \frac{1}{\eta}$. In the alternate case when collective-replication is maximized by the all-slow composition (when $\eta \geq \frac{1}{2}$), this line of maximizers is given by $y = x - 1$, which only intersects the simplex at the all-slow equilibrium $(1,0)$.

\begin{figure}[ht]
    \centering
    \includegraphics[width = 0.32\textwidth]{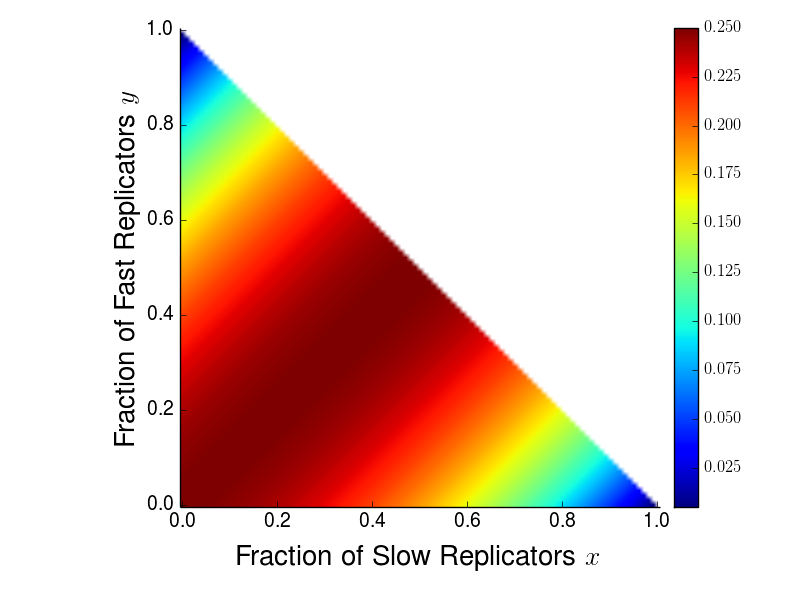}
    \includegraphics[width = 0.32\textwidth]{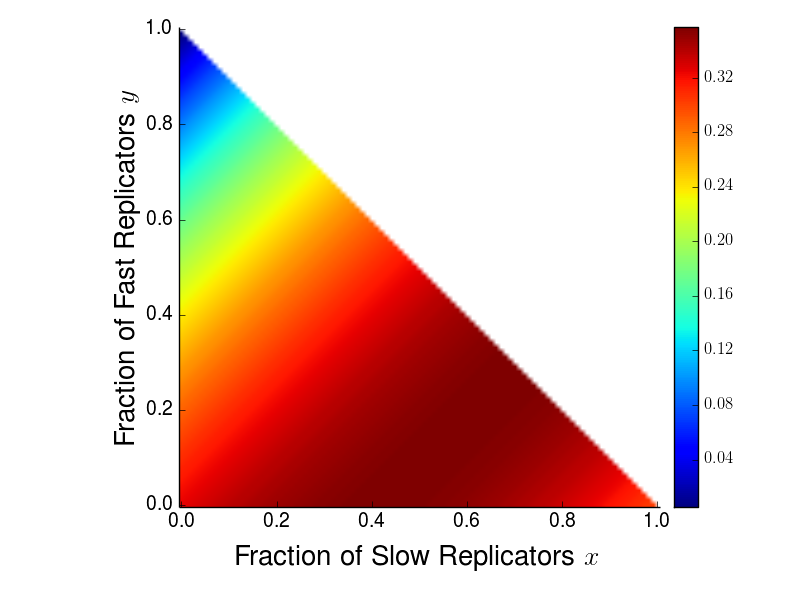}
    \includegraphics[width = 0.32\textwidth]{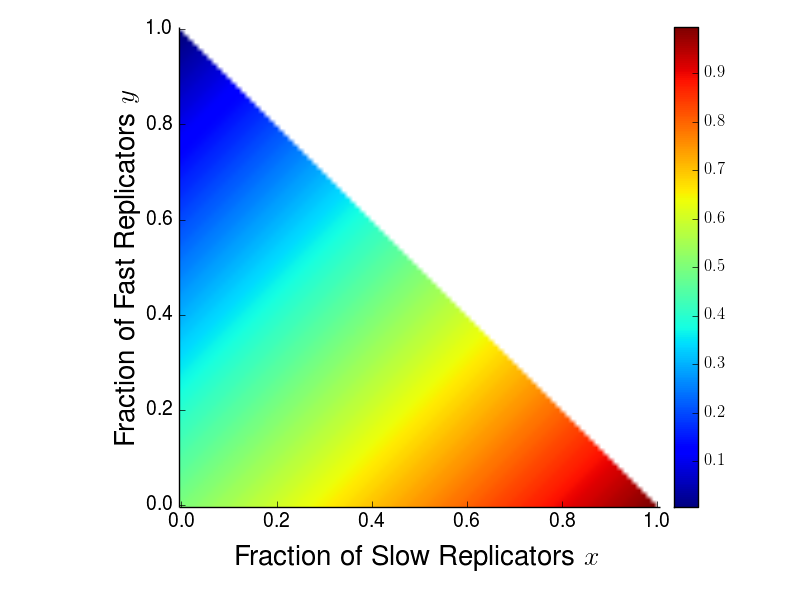}
    \caption{Heatmap of protocell-level reproduction rates $G(x,y)$ for $\eta = 1$ (left),  $\eta = 0.7$ (center), and $\eta = 0$ (right). For $\eta = 1$, collective reproduction $G(x,y)$ is maximized along the the line $y = x$, so both the all-dimer composition $(x,y) = (0,0)$ and the fifty-fifty mix of fast and slow replicators $(x,y) = (\tfrac{1}{2},\tfrac{1}{2})$. For $\eta = 0.7$, collective reproduction $G(x,y)$ is maximized along the line $y = x - \frac{3}{7}$, which passes through the composition $(x,y) = (\nicefrac{1}{2 \eta},1 - \nicefrac{1}{2 \eta}) \, |_{\eta = \nicefrac{7}{10}} = (\nicefrac{5}{7},\nicefrac{2}{7})$ maximizing $G_{FS}(x)$ on the fast-slow edge and the composition $(\nicefrac{3}{7},0)$ maximizing protocell-level replication on the fast-dimer edge of the simplex. For $\eta = 0$, $G(x,y)$ is maximized on the line $y = x - 1$, which only intersects the simplex at the all-slow composition $(1,0)$.}
    \label{fig:Gofxyplots}
\end{figure}

Combining the within-cell dynamics described by Equation \eqref{eq:withincelldimorphic} and competition for cellular birth-death dynamics according to the group reproduction function from Equation \eqref{eq:Gtrixy}, we can now describe multilevel selection in the fast-slow-dimer system in limit of infinitely many protocells and infinitely many genes per protocell. We assume as described in Section \ref{sec:dimerwithin} that within-group replication events for type $X$ take place at rates $1 + w_I b_{X}$ and that protocell-level replication of protocells featuring fractions $x$ of slow replicators and $y$ fast replicators take place at rate $\Lambda \left( 1 + w_G G(x,y)\right)$. Denoting the probability density for a cell with genes composes with fractions of $(x,y,1-x-y)$ of slow, fast, and dimer replicators at time $t$ by $\rho(t,x,y)$, we show in Section \ref{sec:derivationtrimorphic} that this density evolves in time according to
\begin{dmath} \label{eq:multileveltrimorphic}
\dsdel{\rho(t,x,y)}{t} = - \dsdel{}{x} \left[x \left( b_S - b_D  + \left( b_D - b_S \right) x + \left( b_D - b_F \right) y \right) \rho(t,x,y) \right] - \dsdel{}{y} \left[ y \left( b_F - b_D + \left(b_D - b_S \right)x +  \left( b_D - b_F \right) y \right) \rho(t,x,y) \right]
+ \lambda \rho(t,x,y) \left[G(x,y,1-x-y) - \int_{0}^1 \int_0^{1-x} G(u,v,1-u-v) \rho(t,u,v) dv du \right],
\end{dmath}
where $\lambda := \frac{\Lambda w_G}{w_I}$ again describes the relative selection strength at the two levels. The characteristic curves are given by the system of ODEs from Equation \eqref{eq:withincelldimorphic}. Because we now have a system of two characteristic ODEs, we cannot apply the same strategy for analyzing the long-time behavior of Equation \eqref{eq:multileveltrimorphic} that we have used for our multilevel selection models with two types of individuals. To make some progress, we will now explore the dynamics of this model reduced to the fast-dimer and slow-dimer edges in Sections \ref{sec:fastdimer} and \ref{sec:slowdimer}, respectively. In particular, we look to compare the steady-state behavior on these edges of the simplex with the behavior of the dynamics from the fast-slow edge studied in Section \ref{sec:protocell}, and to see the ways in which introduction of dimers can help to establish coexistence of fast and slow genes at steady state and to help to erase the shadow of lower-level selection. In Section \ref{sec:trimorphicnumerics}, we take a preliminary look at a strategy for extending our finite volume approach to describe our fast-slow-dimer multilevel dynamics in which cell compositions live on the three-type simplex.

\section{The Effect of Dimer Replicators on Long-Time Coexistence of Fast and Slow Genes: Dynamics on the Edges of the Simplex} \label{sec:simplexedgedynamics}

We now consider the dynamics of the protocell model of Equation \eqref{eq:multileveltrimorphic} when the state space is restricted to edges of the slow-fast-dimer simplex and competition takes place between protocells that feature at most two of the possible replicators. In Section \ref{sec:fastdimer}, we consider competition on the fast-dimer edge of the simplex, showing that multilevel competition between protocells featuring fast and dimer replicators can promote coexistence of the fast and slow genes, even for the case in which $\eta = 1$ and no coexistence was possible with protocells featuring only fast and slow replicators. In Section \ref{sec:fdfscomparison}, we compare the threshold selection strengths and steady-state protocell-level fitness achieved on the fast-dimer edge of the simplex with the analogous quantities derived on the fast-slow edge of the simplex in Section \ref{sec:protocell}. Finally, in Section \ref{sec:slowdimer}, we study multilevel competition on the slow-dimer edge of the simplex, showing how the long-time behavior varies depending on whether all-slow or all-dimer protocells replicate faster under protocell-level competition.

\subsection{Dynamics on Fast-Dimer Edge of Simplex} \label{sec:fastdimer}

In this section, we introduce the reduced dynamics of Equation \eqref{eq:multileveltrimorphic} when protocells are restricted to compositions on the fast-dimer edge of the simplex. We show that the protocell-level replication function is always maximized by the all-dimer protocell on this edge of the simplex, and characterize the threshold relative intensity of between-protocell competition required to allow the long-time coexistence of fast and dimer replicators. We discuss the convergence of the population to steady state densities for sufficiently strong between-protocell competition in Proposition \ref{prop:longtimefastdimer}, and we illustrate in Figure \ref{fig:eta1density} how steady state densities support increasing levels of dimers as between-protocell competition increase. We formalize this observation in Proposition \ref{prop:FDmodal}, showing that the modal composition of dimers at steady state increases to 100 percent in the limit of infinite strength of between-protocell competition. This shows how the use of dimers can help to support coexistence of the fast and slow genes via multilevel selection, helping to overcome the limitations provided by the shadow of lower-level selection seen on the fast-slow edge of the simplex in Section \ref{sec:protocell}.   

On the fast-dimer edge of the simplex (where $x=0$, $y = 1 - z$), we will describe the composition of a protocell by its fraction $z$ of dimer replicators. We introduce the protocell-level replication rate $G_{FD}(z)$ for compositions on the fast-dimer edge, and can use Equation \eqref{eq:Gxyzeta} to see that the the replication rate reduces to
\begin{equation} \label{eq:GFDz} G_{FD}(z) = G(0,1-z,z) = \frac{z}{2} - \frac{\eta}{4} z^2.  \end{equation}
We can then compute that 
\[ G'_{FD}(z) = \frac{1}{2} \left( 1 - \eta z \right) > 0 \textnormal{ for any } \eta \in [0,1] \textnormal{ and } z \in [0,1), \] so therefore the fraction of dimers $z^*_{FD}$ maximizing the protocell reproduction function $G_{FD}(z)$ is given by
\begin{equation} \label{eq:zstarFD}
    z^*_{FD}(\eta) = 1 \: \: \mathrm{for} \: \: \eta \in [0,1].
\end{equation}

In other words, for multilevel competition in protocells composed only of fast and dimer replicators, protocell-level replication always favors compositions with as many dimers as possible. 
In particular, for any value of $\eta$ in which $G_{FS}(z)$ has an intermediate cell fitness optimum, simply replacing our slow replicators with slow-fast dimers produces a regime in which cells are best off with all-dimers rather than a mix of dimers and fast replicators. We illustrate these properties of the collective replication rate $G_{FD}(z)$ in Figure \ref{fig:Gzfdfunction.png}, showing that $G_{FD}(z)$ increases with $z$ and decreases with $\eta$. 

\begin{figure}[htp!]
    \centering
    \includegraphics[width = 0.6\textwidth]{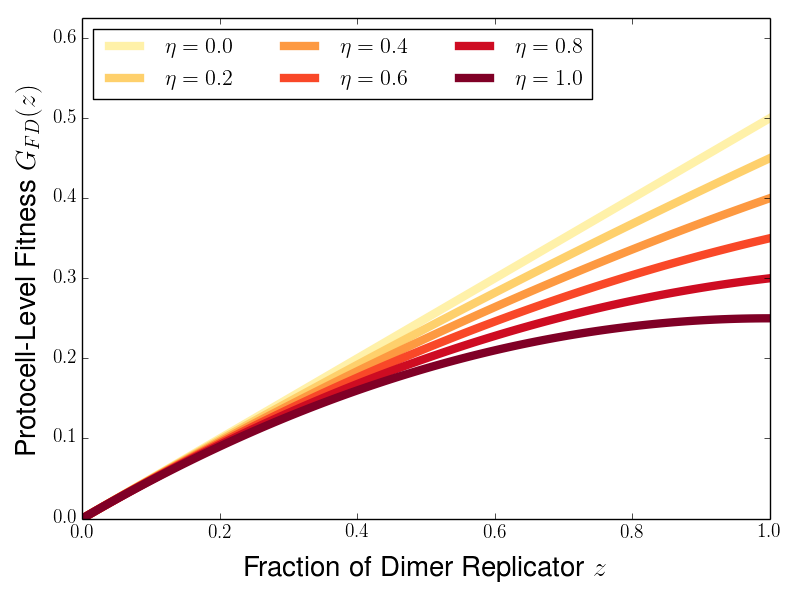}
    \caption{Protocell-level reproduction rates $G_{FD}(z)$ on fast-dimer edge of the simplex for various levels of the complementarity parameter $\eta$. For all values of $\eta$, $G_{FD}(z)$ is an increasing function of the fraction of dimers $z$, and the collective reproduction rate is maximized by the all-dimer composition.}
    \label{fig:Gzfdfunction.png}
\end{figure}

We can study the within-protocell dynamics on the fast-dimer edge by applying the restriction $x = 0$ and $y = 1-z$ to Equation \eqref{eq:withindimer}. This allows us to see that, in protocells featuring only fast and dimer replicators, the fraction of dimers evolves according to the follow gene-level replicator equation  
\begin{equation} \label{eq:FDwithin}
\dsddt{z(t)} = z \left[b_D - \left(b_S x + b_F y + b_D z\right) \right] \bigg|_{\substack{x = 0 \\ y = 1-z}} = 
- \left( b_{F} - b_{D} \right) z \left( 1 - z \right).
\end{equation}
Notably, this ODE is of the form of the characteristic curves given by Equation \eqref{eq:characteristicsgeneral} with the net gene-level replication function $\pi_{FD}(z) = b_F - b_D > 0$. 

We can now study the combined effects of the protocell-level reproduction function $G_{FD}(z)$ of Equation \eqref{eq:GFDz} and the gene-level dynamics of Equation \eqref{eq:FDwithin} to study how the composition of protocells evolves due to multilevel competition on the fast-dimer edge of the simplex.  Coupling the dynamics at the two levels, we describe the probability density $g(t,z)$ of protocells composed of fraction $z$ dimers and $1-z$ fast replicators at time $t$ by following the multilevel PDE
\begin{dmath} \label{eq:FDmultilevelPDE}
\dsdel{g(t,z)}{t} = \dsdel{}{z} \left[ \left(b_F - b_D \right) z (1-z)   g(t,z) \right] + \lambda g(t,z) \left[ G_{FD}(z) - \int_0^1 G_{FD}(w) g(t,w) dw \right],
\end{dmath}
which is a special case of Equation \eqref{eq:generalPDEreplicator} with the replication functions $\pi_{FS}(z) = b_F - b_D$ and $G_{FD}(z) = \frac{z}{2} \left( 1 - \frac{\eta z}{2} \right)$. We can apply the results from Section \ref{sec:existingresults} to study how multilevel competition can help to promote coexistence of fast and slow genes in protocells on the fast-dimer edge of the simplex.

First, we look to study the density steady states of Equation \eqref{eq:FDmultilevelPDE}. Using Equation \eqref{eq:plambdatheta} and the fact that $G_{FD}(1) = \frac{1}{2} \left( 1 - \frac{\eta}{2}\right)$, $G_{FD}(0) = 0$, and $\pi_{FD}(x) \equiv b_F - b_D$, we see that the steady states are probability densities of the form
\begin{subequations} \label{eq:glambdatheta}
     \begin{align}
         g^{\lambda}_{\theta}(z) &= Z_g^{-1} \: \mathlarger{z}^{\mathlarger{\left[\left(\nicefrac{1}{2 (b_F - b_D)}\right) \lambda \left(1 - \nicefrac{\eta}{2}\right) - \theta - 1\right]}} \left(\mathlarger{1 - z}\right)^{\mathlarger{\theta - 1}} \exp\left( - \lambda \int_z^{1} \frac{C_{FD}(u)}{b_F - b_D} du \right) \\
         Z_g &= \int_0^1 \mathlarger{w}^{\mathlarger{\left[\left(\nicefrac{1}{2 (b_F - b_D)}\right) \lambda \left(1 - \nicefrac{\eta}{2}\right) - \theta - 1\right]}} \left(\mathlarger{1 - w}\right)^{\mathlarger{\theta - 1}} \exp\left( - \lambda \int_w^{1} \frac{C_{FD}(u)}{b_F - b_D} du \right) dw,
     \end{align}
\end{subequations}
where $C_{FD}(z)$ is given by applying the replication rates on the fast-dimer edge to Equation \eqref{eq:lambdaCofx}. 
Noting that $\pi_{FD}(z)$ is a constant and that the between-protocell replication rates satisfy 
\begin{subequations}
  \begin{align}
G_{FD}(z) - G_{FD}(0) &= \frac{z}{2} \left( 1 - \frac{\eta z}{2} \right) \\
G_{FD}(z) - G_{FD}(1) &= \frac{1}{2} \left( 1 - z \right) \left[\frac{\eta}{2} - 1 + \frac{\eta z}{2} \right],
\end{align}   
\end{subequations}
we can see from Equation \eqref{eq:lambdaCofx} that
\begin{equation}
    - \lambda C_{FD}(z) = \lambda \left( \frac{G_{FD}(z) - G_{FD}(0)}{z} \right) + \lambda \left( \frac{G_{FD}(z) - G_{FD}(1)}{z} \right) = \frac{\lambda \eta}{4}.
\end{equation}
This allows us to further compute that
\begin{equation}
    - \lambda \int_z^1 \frac{C_{FD}(u)}{b_F - b_D} du = \frac{\lambda \eta}{4 \left(b_F - b_D\right)} \left( 1 - z \right),
\end{equation}
and, after introducing the constant $\tilde{Z}_g = Z_g e^{- \nicefrac{\lambda \eta}{4 (b_F - b_D)}}$, we can write our steady states in the form
\begin{equation} \label{eq:FDsteadysimple}
    g^{\lambda}_{\theta}(z) = \tilde{Z}_g^{-1} \: \mathlarger{z}^{\mathlarger{\left[\left(\nicefrac{1}{2 (b_F - b_D)}\right) \lambda \left(1 - \nicefrac{\eta}{2}\right) - \theta - 1\right]}} \left(\mathlarger{1 - z}\right)^{\mathlarger{\theta - 1}} \exp\left(- \frac{\lambda \eta z}{4 \left(b_F - b_D \right)}  \right).
\end{equation}

We note from Equation \eqref{eq:FDsteadysimple} that a density given by $g^{\lambda}_{\theta}(z)$ will be integrable provided that $\lambda$ exceeds the following threshold value
\begin{equation} \label{eq:lambdastarFDbvalues}
\lambda^*_{FD}(\eta) = \frac{\left(b_F - b_{D}\right)\theta}{G_{FD}(1) - G_{FD}(0)}  = \frac{4\left(b_F - b_{D}\right)\theta}{2  - \eta}.
\end{equation}
In particular, we see that $\lambda^*_{FD}(\eta)$ remains finite for all possible complementarity parameters $\eta \in [0,1]$, so multilevel competition on the fast-dimer edge of the simplex can always produce coexistence between fast and slow genes provided that between-protocell competition is sufficiently strong. 

This threshold quantity also determines the long-time behavior Equation \eqref{eq:FDmultilevelPDE} given an initial measure with H{\"o}lder exponent of $\theta$ near the all-dimer composition $z = 1$. In Proposition \ref{prop:longtimefastdimer}, we summarize our application of Theorems \ref{thm:PDconvergencetosteady} and \ref{thm:PDconvergencetodelta} for the fast-dimer dynamics, showing that fast replicators take over the population when $\lambda \leq \lambda^*_{FD}(\eta)$, while the population reaches a density steady state supporting both fasts and dimers if $\lambda > \lambda^*_{FD}(\eta)$.

\begin{proposition}  \label{prop:longtimefastdimer} Suppose the population of protocells composed of fast and dimer replicators has initial measure $\mu_0(dz)$ with H{\"o}lder exponent of $\theta$ near $z=1$, and consider a measure-valued solution $\mu_t(dx)$ to \eqref{eq:protocellPDEG}. Then, in the limit as $t \to \infty$, the solution $\mu_t(dz)$ to Equation \eqref{eq:FDmultilevelPDE} will have the following long-time behavior \begin{displaymath} \mu_t(dz)  \rightharpoonup \left\{
     \begin{array}{cr}
       \delta(z) & : \lambda \leq \lambda^*_{FS}(\eta) \\ %
       \tilde{Z}_g^{-1} \: \mathlarger{z}^{\mathlarger{\left[\left(\nicefrac{1}{2 (b_F - b_D)}\right) \lambda \left(1 - \nicefrac{\eta}{2}\right) - \theta - 1\right]}} \left(\mathlarger{1 - z}\right)^{\mathlarger{\theta - 1}} \exp\left(-\frac{\lambda \eta z }{4(b_F - b_D)}\right)  & : \lambda > \lambda^*_{FS}(\eta) 
     \end{array} \right. .
     \end{displaymath} 

\end{proposition}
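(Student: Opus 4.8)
The plan is to recognize Equation \eqref{eq:FDmultilevelPDE} as a special case of the general multilevel replicator PDE \eqref{eq:generalPDEreplicator} and then invoke Theorems \ref{thm:PDconvergencetosteady} and \ref{thm:PDconvergencetodelta} verbatim. Concretely, one takes $\pi(z) = \pi_{FD}(z) \equiv b_F - b_D$ and $G(z) = G_{FD}(z) = \tfrac{z}{2} - \tfrac{\eta}{4} z^2$, so that Equation \eqref{eq:FDmultilevelPDE} is exactly Equation \eqref{eq:generalPDEreplicator} for this pair of replication functions (with characteristic ODE \eqref{eq:FDwithin}). First I would check the hypotheses of the Prisoners' Dilemma scenario appearing in those theorems: both $\pi_{FD}$ and $G_{FD}$ are polynomials, hence lie in $C^1([0,1])$; the net gene-level replication rate satisfies $\pi_{FD}(z) \equiv b_F - b_D > 0$ by the assumed ordering $b_F > b_S > b_D$; and $G_{FD}(1) - G_{FD}(0) = \tfrac{1}{2}\big(1 - \tfrac{\eta}{2}\big) > 0$ for every $\eta \in [0,1]$, since $\tfrac{\eta}{2} \le \tfrac{1}{2}$. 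Thus the standing assumptions hold for all values of the complementarity parameter, and the standing regularity hypothesis on $\mu_0$ (a well-defined H{\"o}lder exponent $\theta>0$ near $z=1$ with positive, finite H{\"o}lder constant $C_\theta$) is inherited from Section \ref{sec:existingresults}.

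Next I would substitute into the threshold formula \eqref{eq:lambdastargeneral}, obtaining $\lambda^* = \theta\,\pi_{FD}(1)/\big(G_{FD}(1) - G_{FD}(0)\big) = 4(b_F - b_D)\theta/(2 - \eta)$, which is precisely the quantity $\lambda^*_{FD}(\eta)$ of Equation \eqref{eq:lambdastarFDbvalues}. For $\lambda > \lambda^*_{FD}(\eta)$, Theorem \ref{thm:PDconvergencetosteady} gives $\mu_t(dz) \rightharpoonup p^{\lambda}_{\theta}(z)\,dz$; and the computation preceding the statement already shows that for this choice of $(\pi_{FD}, G_{FD})$ the general steady-state density \eqref{eq:plambdatheta} collapses to the explicit form $g^{\lambda}_{\theta}(z)$ of Equation \eqref{eq:FDsteadysimple}, which is the second branch of the claimed dichotomy. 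For $\lambda < \lambda^*_{FD}(\eta)$, the first bullet of Theorem \ref{thm:PDconvergencetodelta} applies directly --- it requires only the H{\"o}lder exponent $\theta$ near $z=1$ --- and yields $\mu_t(dz) \rightharpoonup \delta(z)$, the first branch.

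The only place needing a little care is the borderline value $\lambda = \lambda^*_{FD}(\eta)$, which is covered by the second bullet of Theorem \ref{thm:PDconvergencetodelta} under the additional hypothesis that $G_{FD}$ attains its minimum on $[0,1]$ uniquely at $z=0$. For this I would observe that $G'_{FD}(z) = \tfrac{1}{2}(1 - \eta z) > 0$ for $z \in [0,1)$, so $G_{FD}$ is strictly increasing on $[0,1)$; hence $z=0$ is its unique minimizer on $[0,1]$, with $G_{FD}(z) > G_{FD}(0) = 0$ for all $z \in (0,1]$. Therefore $\mu_t(dz) \rightharpoonup \delta(z)$ also holds when $\lambda = \lambda^*_{FD}(\eta)$, and the three regimes together give the stated dichotomy. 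I do not expect a genuine obstacle: the substance is the bookkeeping that matches $(\pi_{FD}, G_{FD})$ to the general formulas and the reduction of \eqref{eq:plambdatheta} to \eqref{eq:FDsteadysimple}, which has essentially been carried out in the paragraphs leading up to the proposition; if anything, the most delicate step is verifying the extra hypotheses needed for the $\lambda = \lambda^*_{FD}(\eta)$ case.
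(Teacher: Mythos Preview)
Your proposal is correct and follows exactly the paper's approach: the paper states that Proposition~\ref{prop:longtimefastdimer} summarizes the application of Theorems~\ref{thm:PDconvergencetosteady} and~\ref{thm:PDconvergencetodelta} to the fast--dimer dynamics, and your write-up carries out precisely this verification (indeed more carefully than the paper, which does not explicitly check the unique-minimum hypothesis needed for the $\lambda = \lambda^*_{FD}(\eta)$ case). You also correctly identify the threshold as $\lambda^*_{FD}(\eta)$ from Equation~\eqref{eq:lambdastarFDbvalues}, which is what is intended despite the ``$\lambda^*_{FS}(\eta)$'' typo in the proposition statement.
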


In Figure \ref{fig:eta1density}, we display the steady state solutions for various values of $\lambda$ and the choice of group replication tradeoff parameter $\eta = 1$. For the slow-fast competition, this choice of complementarity parameter resulted in no coexistence of fast and slow replicators at steady state. We see that the densities supports increasing fractions of dimers as the relative strength of between-protocell competition $\lambda$. In particular, we see that the mean and modal fraction of dimers appears to approach 1 as $\lambda$ increases, suggesting that multilevel competition on the dimer-fast edge of the simplex can approach the optimal composition of a fifty-fifty mix of fast and slow genes if there is sufficiently strong between-protocell competition.   

\begin{figure}[htp!]
    \centering
    \includegraphics[width = 0.7\textwidth]{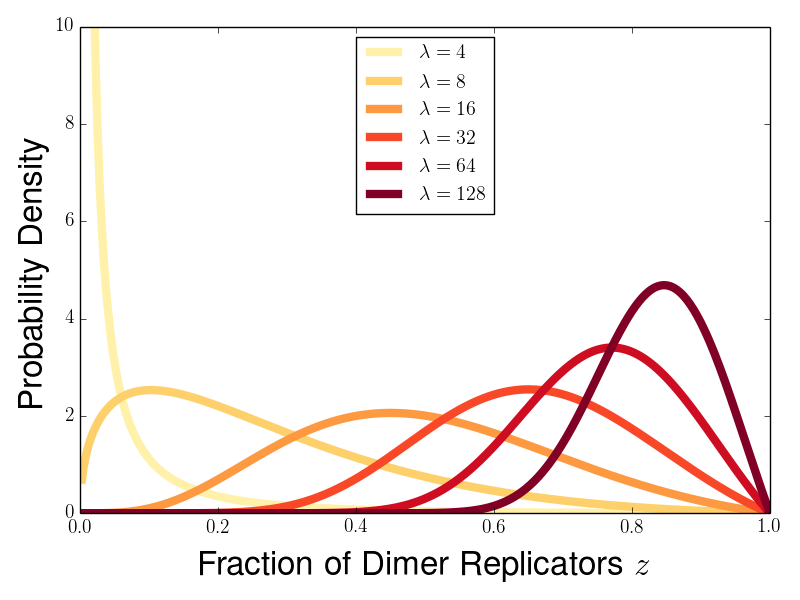}
    \caption[Steady state densities for competition on fast-dimer edge.]{Steady state densities for multilevel selection given by Equation \eqref{eq:FDsteadysimple} on fast-dimer edge of the simplex for $\eta = 1$ and various values of $\lambda$. We note that $\eta = 1$ is the parameter for which no coexistence of fast and slow genes is possible on the slow-fast edge, but that densities can reach all the way to the full-dimer group as $\lambda \to \infty$. Other parameters are fixed as $s = 1$ and $\theta = 2$. }
    \label{fig:eta1density}
\end{figure}

We formalize this intuition in Proposition \ref{prop:FDmodal}, in which we compute the modal composition of dimers $\hat{z}_{\lambda}:= \argmax_{z \in [0,1]} g^{\lambda}_{\theta}(z)$ for the steady state family of densities given by $g^{\lambda}_{\theta}(z)$. In the limit of infinite strength of between-protocell competition, we see that $\lim_{\lambda \to \infty} \hat{z}_{\lambda} = 1$, and therefore, for any complementarity parameter $\eta \in [0,1]$, multilevel selection on the fast-dimer edge of the simplex as many dimers as possible for sufficiently strong protocell-level selection.

\begin{proposition}[\bfseries Most Abundant Composition at Steady State Approaches All Dimers in the Limit of Infinite Intensity of Between-Protocell Competition] \label{prop:FDmodal}
Consider the steady state density $g^{\lambda}_{\theta}(z)$ and suppose that $\frac{\lambda}{2} \left(1 - \frac{\eta}{2}\right) > \left(b_F - b_D \right) \left(\theta + 1\right)$ and $\theta \geq 1$. Then, for $\eta \in (0,1]$, the most abundant composition at steady state $\hat{z}_{FD}^{\lambda} := \argmax_{z \in [0,1]} g^{\lambda}_{\theta}(z)$ is given by 
\begin{equation} \label{eq:modalFDlambda}
    \hat{z}^{\lambda}_{FD} = \frac{\lambda - 4(b_F - b_D)  - \sqrt{\left(\lambda - 4 (b_F - b_D) \right)^2 - 4 \lambda \eta \left[\frac{ \lambda}{2} \left(1-\frac{\eta}{2}\right) - \left(b_F - b_D\right) \left(\theta +  1 \right) \right]}}{ \lambda \eta}.
\end{equation}

Furthermore, in the limit of infinite intensity of between-protocell composition, the modal composition $\hat{z}^{\infty}_{FD} := \lim_{\lambda \to \infty} \hat{z}_{FD}^{\lambda}$ satisfies 
\begin{equation} \label{eq:hatzFDinf}
    \hat{z}^{\infty}_{FD} = \frac{1}{\eta} - \left( \frac{1}{\eta} - 1\right) = 1
\end{equation}
Finally, noting from Equation \eqref{eq:zstarFD} that $z^*_{FD} = 1$, we see that the optimal protocell composition is achieved by the modal protocell at steady state as $\lambda \to \infty$. 
\end{proposition}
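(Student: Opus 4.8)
The plan is to follow, nearly verbatim, the proof of Proposition~\ref{prop:FSmodal}, now with the fast--dimer data $\pi_{FD}\equiv b_F-b_D$ and $G_{FD}(z)=\tfrac{z}{2}\bigl(1-\tfrac{\eta z}{2}\bigr)$ in place of the fast--slow data. First I would differentiate the steady state density $g^{\lambda}_{\theta}(z)$ from Equation~\eqref{eq:FDsteadysimple}. Writing $g^{\lambda}_{\theta}(z)=\tilde{Z}_g^{-1}z^{\alpha}(1-z)^{\theta-1}e^{-\beta z}$ with $\alpha=\tfrac{\lambda}{2(b_F-b_D)}\bigl(1-\tfrac{\eta}{2}\bigr)-\theta-1$ and $\beta=\tfrac{\lambda\eta}{4(b_F-b_D)}$, the logarithmic derivative is rational, and clearing the factor $z(1-z)$ gives
\[
\frac{d}{dz}g^{\lambda}_{\theta}(z)=\tilde{Z}_g^{-1}\,b(z)\,z^{\alpha-1}(1-z)^{\theta-2}\exp\!\left(-\frac{\lambda\eta z}{4(b_F-b_D)}\right),
\]
where $b(z)=\beta z^{2}-(\alpha+\theta-1+\beta)z+\alpha$ is a quadratic with positive leading coefficient $\beta>0$ (since $\eta\in(0,1]$). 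Hence the interior critical points of $g^{\lambda}_{\theta}$ are exactly the roots of $b$ in $(0,1)$.

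Next I would evaluate $b$ at the endpoints. The hypothesis $\tfrac{\lambda}{2}\bigl(1-\tfrac{\eta}{2}\bigr)>(b_F-b_D)(\theta+1)$ is precisely the statement $b(0)=\alpha>0$, while $\theta\ge 1$ gives $b(1)=1-\theta\le 0$. Since $b$ is convex, $b(0)>0\ge b(1)$ forces the smaller root $\hat z^{\lambda}_{-}$ into $(0,1]$ (in particular the discriminant is automatically nonnegative), with $b>0$ on $[0,\hat z^{\lambda}_{-})$ and $b\le 0$ on $[\hat z^{\lambda}_{-},1]$. Thus $g^{\lambda}_{\theta}$ is increasing then non-increasing on $[0,1]$, so $\hat z^{\lambda}_{FD}=\argmax_{z\in[0,1]}g^{\lambda}_{\theta}(z)=\hat z^{\lambda}_{-}$; a short simplification of the quadratic formula --- using $\alpha+\theta-1+\beta=\tfrac{\lambda-4(b_F-b_D)}{2(b_F-b_D)}$ and $4\alpha\beta=\tfrac{\lambda\eta}{(b_F-b_D)^{2}}\bigl[\tfrac{\lambda}{2}(1-\tfrac{\eta}{2})-(b_F-b_D)(\theta+1)\bigr]$ --- identifies this root with the closed form in Equation~\eqref{eq:modalFDlambda}. (The degenerate case $\eta=0$, where $b$ is linear, is treated exactly as the $\eta=0$ case in the proof of Proposition~\ref{prop:FSmodal}.)

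Finally I would pass to the limit $\lambda\to\infty$ in Equation~\eqref{eq:modalFDlambda}: dividing numerator and denominator by $\lambda$, the terms involving $b_F-b_D$ drop out and the expression under the radical tends to $1-4\eta\cdot\tfrac12\bigl(1-\tfrac{\eta}{2}\bigr)=1-2\eta+\eta^{2}=(1-\eta)^{2}$, so
\[
\hat z^{\infty}_{FD}=\lim_{\lambda\to\infty}\hat z^{\lambda}_{FD}=\frac{1-\sqrt{(1-\eta)^{2}}}{\eta}=\frac{1-(1-\eta)}{\eta}=1
\]
for $\eta\in(0,1]$ (using $|1-\eta|=1-\eta$), which is Equation~\eqref{eq:hatzFDinf}; comparing with $z^{*}_{FD}=1$ from Equation~\eqref{eq:zstarFD} yields the last assertion. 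I do not anticipate a genuine obstacle: the argument is structurally identical to the already-established Proposition~\ref{prop:FSmodal}, and the only care required is the routine algebra confirming that the smaller root of $b$ is the one lying in $(0,1]$ and that it coincides with the displayed formula, together with the sign bookkeeping $|1-\eta|=1-\eta$ in the final limit.
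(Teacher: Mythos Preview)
Your proposal is correct and follows exactly the approach the paper intends: the paper does not give an explicit proof of Proposition~\ref{prop:FDmodal}, since it is structurally identical to the proof of Proposition~\ref{prop:FSmodal} with $s$ replaced by $b_F-b_D$ and $G_{FS}$ replaced by $G_{FD}$. Your logarithmic differentiation, the endpoint checks $b(0)=\alpha>0$ and $b(1)=1-\theta\le 0$, the identification of the smaller root with Equation~\eqref{eq:modalFDlambda}, and the limit computation using $|1-\eta|=1-\eta$ are all correct and mirror the fast--slow argument precisely; the parenthetical about $\eta=0$ is unnecessary since the statement only concerns $\eta\in(0,1]$, but it does no harm.
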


\subsection{Comparison of Dynamics of Fast-Slow and Fast-Dimer Edges of the Simplex}
\label{sec:fdfscomparison}

From our analysis of multilevel competition on the fast-slow and fast-dimer edges of the simplex in Sections \ref{sec:protocellongtime} and \ref{sec:fastdimer}, we have shown that while dimers face an additional gene-level disadvantage relative to slow replicators in competition against fast replicators, an all-dimer protocell can obtain a greater collective advantage than an all-slow protocell in between-protocell competition. In this section, we will study how these costs and benefits of dimerization play out under our multilevel dynamics, characterizing the parameter space in which competition on the fast-dimer edge of the simplex can more easily facilitate coexistence of fast and slow genes or produce a higher average protocell-level fitness in comparison to the baseline protocell model on the fast-slow edge of the simplex. In particular, we find that the fast-slow edge outperforms the fast-dimer edge for any relative selection strength $\lambda$ when the all-fast protocell has a collective advantage over the all-dimer protocell $(\eta < \frac{2}{3}$), while the fast-dimer edge can do better for any relative selection strength if fast and slow genes are sufficiently complementary under between protocell competition ($\eta$ close enough to one). There also exist intermediate degrees of complementarity (intermediate values of $\eta$) for which fast-slow competition produces a greater collective outcome for weak between-protocell competition, while fast-dimer competition does better when protocell-level competition becomes sufficiently strong. Taken together, these different behaviors highlight the effects of the complementarity parameter $\eta$, relative selection strength $\lambda$ of protocell-level competition, and the gene-level advantage of fast replicators $b_F - b_S$ in determining whether fast or dimer replicator replicators are more conducive to producing coexistence of the fast and slow genes.

This difference between competition on the fast-slow and fast-dimer edges of the simplex is particularly stark in the case $\eta = 1$, where no coexistence of the fast and slow genes is possible on the fast-slow edge. As an illustration of this case, we present in Figure \ref{fig:fsfdtrajectorycompare} the trajectories of group compositions under finite volume numerical simulations for the dynamics on the two edges of the simplex for $\lambda = 10$, showing that the population converges to the all-fast equilibrium under fast-slow competition while dimer and fast replicators can coexist in the long-run for sufficiently strong between-protocell composition. Noting that the initial uniform distributions feature an overall composition of half slow genes on the fast-slow edge and one-third slow genes on the fast-dimer edge, we see that the two-level dynamics can produce coexistence of the fast and slow genes on the fast-dimer edge even when the initial population on the fast-slow edge of the simplex has more slow genes than fast genes. This highlights the insight from the threshold quantity $\lambda^*$ from Equation \eqref{eq:lambdastargeneral} showing that coexistence of two replicators depends on the collective ability for the monomorphic states of all-dimers or all-slows to outperform the all-slow composition under protocell-level competition. 

\begin{figure}[htp]
    \centering
    \includegraphics[width = 0.48\textwidth]{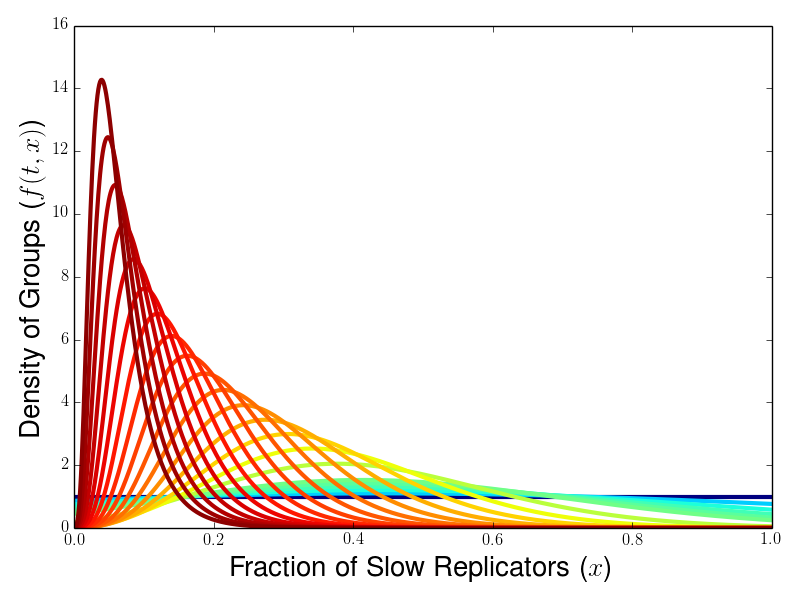}
     \includegraphics[width = 0.48\textwidth]{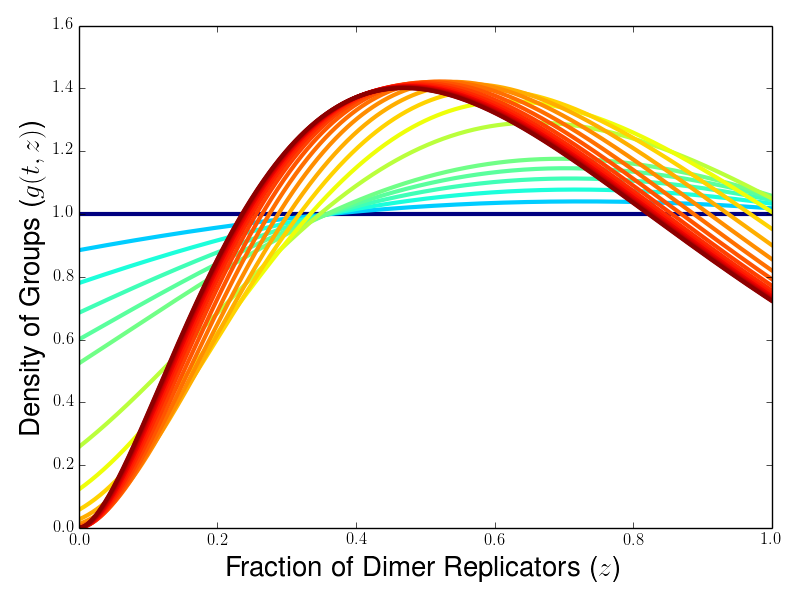}
    \caption{Comparison of numerical dynamics under finite volume discretization on the fast-slow (left) and fast-dimer (right) edges of the simplex for uniform initial protocell compositions and complementarity parameter $\eta = 1$. The color of the densities corresponds to the time at which the density is achieved in the numerical times, with early times represented by blue curves and later times represented by red curves. Between-protocell selection strength fixed as $\lambda = 10$ and gene-level birth rates given by $b_S = 1$, $b_F = 2$, and $b_D = \frac{2}{3}$ (corresponding to fast-replicator advantage of $s = 1$).}
    \label{fig:fsfdtrajectorycompare}
\end{figure}

We can now try to explore the parameter regimes in which the use of dimers helps or hurts establishment of the slow gene via multilevel selection relative to our baseline model of competition on the fast-slow edge of the dimer. One way to measure this is by comparing threshold levels $\lambda^*_{FS}$ and $\lambda^*_{FD}$ of the relative between-cell competition intensity at which slow replicators and dimers can coexist with fast replicators at steady state, respectively. Using our typical assumed gene birth rate parameters from Equation \eqref{eq:trimorphicspecialbirthrates} $b_S = 1$, $b_F = b_S$, and $b_{D} = 1 -  \frac{1}{2+s}$, we see that we can write these two thresholds as
\begin{subequations} \label{eq:lambdastarFSFD}
\begin{alignat}{2}
    \lambda^*_{FS} &= \frac{\left(b_F - b_S \right) \theta}{G_{FS}(1) - G_{FS}(0)} &&= \frac{s \theta}{1 - \eta} \\
    \lambda^*_{FD} &= \frac{\left(b_F - b_D \right) \theta}{G_{FD}(1) - G_{FD}(0)}  &&= \left( s + \frac{1}{2+s} \right) \left(\frac{2\theta}{1 - \frac{\eta}{2}}\right)
\end{alignat}
\end{subequations}

From the the expressions for threshold selection strength in terms of the generic gene-level birth rates $b_F$, $b_S$, and $b_D$, we see that the threshold $\lambda$ needed to achieve a density steady state is lower for the fast-dimer edge than the fast-slow edge ($\lambda^*_{FD} < \lambda^*_{SD}$) when the group reproduction complementarity parameter $\eta$ is above a critical level
\begin{equation} \label{eq:etacbvalues}
    \eta_c := \frac{2 \left( b_F + b_S \right) - 4 b_D}{3 b_F + b_D - 4 b_D} = \frac{4 \left(b_S - b_D\right) + 2 \left(b_F - b_S\right)}{4 \left(b_S - b_D\right) + 3 \left(b_F - b_S\right)} \geq \frac{2}{3}.
\end{equation}
We see that this critical complementarity parameter satisfies the properties that $\eta_c \to 1$ when $b_F \to b_S$ and that $\eta_c \to \frac{2}{3}$ when $b_F - b_S \to \infty$. In the case of the special birth rates parameterized in terms of the gene-level advantage $s$ for fast replicators, we can can see that $\eta_c$ takes the form 
\begin{equation} \label{eq:etacs}
 \eta_c^s = 
\frac{2 s^2 + 4s + 4}{3 s^2 +  6s + 4}.
\end{equation}

The fact that the relative rankings of the thresholds depends on the complementarity parameter highlights the fact that introducing dimers provides two countervailing effects on the threshold $\lambda^*_{FD}$ relative to $\lambda^*_{FS}$: dimerization increases the protocell-level advantage for dimorphic compositions over all-fast compositions in the denominator of Equation \eqref{eq:lambdastarFSFD} while increasing the gene-level advantage for fast replicators in the numerator of Equation \eqref{eq:lambdastarFSFD}. When $\eta \in [\frac{2}{3},\eta_c^s)$, the gene-level disadvantage hurts more than the group-level advantage helps, making $\lambda^*_{FD} > \lambda^*_{FS}$ for those tradeoff parameters. When $\eta \in (\eta_c^s,1]$, the protocell-level advantages outweigh the individual-level disadvantages, allowing $\lambda^*_{FD} < \lambda^*_{FS}$ in this regime. We further note from the observation that $\eta_c \geq \frac{2}{3}$ that the threshold to achieve coexistence is always lower on the fast-slow edge than on the fast-dimer edge for complementarity scenarios in which between-protocell competition favors all-slow compositions over all-dimer compositions.

To extend the comparison between the costs and benefits of dimerization, we can write the average protocell-level fitness achieved as steady state under competition on the fast-slow and fast-dimer edges of the simplex for our special family of gene-level birth rates. Using Equations \eqref{eq:Gaveragelambda}, the gene-level and protocell-level replication rates on the two edges, and noting that 
\[ b_F - b_D = 1 + s - \left(1 - \frac{1}{2+s}\right) = \frac{(s+1)^2}{2+s}, \] we see that
\begin{subequations} \label{eq:FSFDavgG}
    \begin{align}
        \langle G_{FS}(x) \rangle_{f^{\lambda}_{\theta}} &= \left\{
        \begin{array}{cr}
           0  &:  \lambda < \lambda^*_{FS}\\
           1 - \eta - \ds\frac{s \theta}{\lambda}  &: \lambda \geq \lambda^*_{FS} 
        \end{array} \right. \\
          \langle G_{FD}(z) \rangle_{g^{\lambda}_{\theta}} &= 
        \left\{ \begin{array}{cr}
           0  &:  \lambda < \lambda^*_{FD}\\
           \ds\frac{1}{2} \left( 1 - \ds\frac{\eta}{2} \right) - \ds\frac{\left(s-1\right)^2 \theta}{(2 + s) \lambda}  &: \lambda \geq \lambda^*_{FD}  
        \end{array} \right.
    \end{align}
\end{subequations}
In Figure \ref{fig:FSFDGcompare}, we compare the average protocell-level fitnesses from Equation \eqref{eq:FSFDavgG} as a function of the relative strength $\lambda$ of protocell-level competition for gene-level advantage $s = 1$ of fast replicators and for complementarity parameters $\eta = 0.705$ (left) and $\eta = 0.9$ (right). When $\eta = 0.705$ (Figure \ref{fig:FSFDGcompare},left), we see that the protocell-level fitness on the fast-slow edge of the simplex reaches a nonzero level at a lower value of $\lambda$ than on the fast-dimer edge of the simplex, but that, for sufficiently large $\lambda$, the protocell-level fitness on the fast-dimer edge surpasses that of the fast-slow edge. In this case, dimerization can make coexistence of the fast and slow gene more difficult for a range of lower $\lambda$ values, but confers a great collective benefit to the population at higher values of $\lambda$. When $\eta = 0.9$ (Figure \ref{fig:FSFDGcompare}, right), we see that the collective fitness on the fast-dimer edge of simplex first achieves a nonzero value at a lower $\lambda$ than on the fast-slow edge of the simplex, and then the fast-dimer competition produces a better collective outcome than the fast-slow edge of simplex for all higher relative selection strengths. In this regime, protocells composed of fast and dimer replicators will outperform protocells composed of fast and slow replicators given an equal relative selection strength and respective H{\"o}lder exponents $\theta$ near the all-dimer equilibrium (for fast-dimer competition) and the all-slow equilibrium (for fast-slow competition).

\begin{figure}[htp!]
    \centering
    \includegraphics[width = 0.48\textwidth]{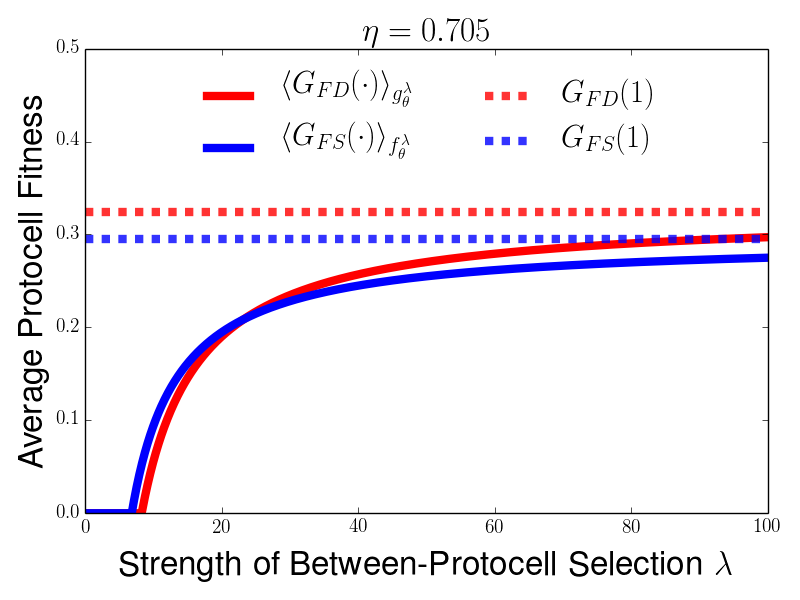}
    \includegraphics[width = 0.48\textwidth]{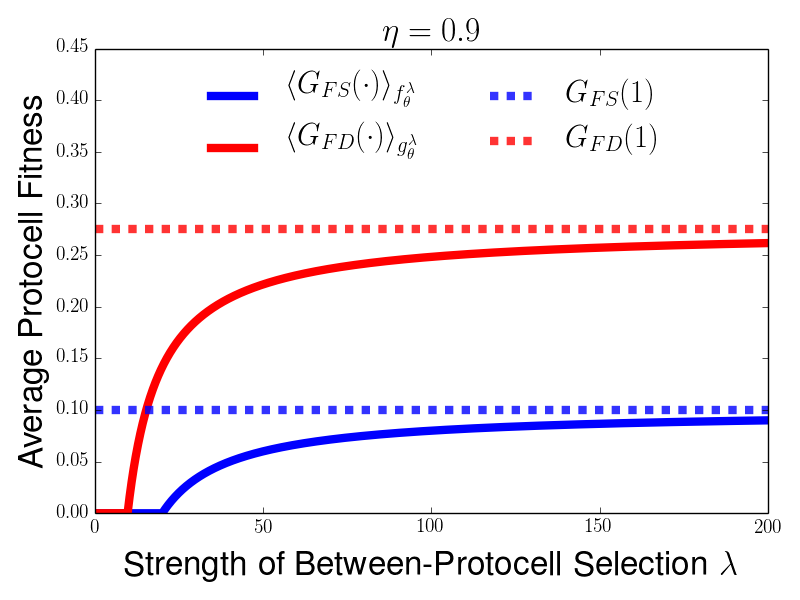}
    \caption{Comparison between the average protocell-level fitness at the long-time steady state for uniform initial distribution given by Equation \eqref{eq:FSFDavgG} for the fast-slow (solid blue line) and fast-dimer (solid red line) edges of the simplex, plotted as a function of the relative selection strength $\lambda$. Average protocell-level fitnesses are provided for complementarity parameter $\eta = 0.705$ (left) and $\eta = 0.9$ (right), and the gene-level advantage of fast replicators $s = 1$ and H{\"o}lder exponent $\theta = 1$ are held constant for the two panels. We also compare these collective fitnesses to the maximum protocell-level reproduction rates achieved as $\lambda \to \infty$ given by $G_{FS}(1) = 1 - \eta$ (dashed blue line) and $G_{FD}(1) = \frac{1}{2} \left(1 - \frac{\eta}{2}\right)$ (dashed red line) on the fast-slow and fast-dimer edges of the simplex.}
    \label{fig:FSFDGcompare}
\end{figure}

We can also study the impact of dimerization on the collective outcomes achieved at steady state in the limit of infinitely strong between-protocell competition. Using Equation \eqref{eq:FSFDavgG}, we see that the maximal average protocell-level fitnesses on the fast-slow and fast-dimer edges are given by $G_{FS}(1) = 1 - \eta$ and $G_{FD}(1) = \frac{1}{2}\left(1 - \frac{\eta}{2}\right)$, respectively. In particular, this tells us that the fast-dimer edge produces a better outcome when $\eta > \frac{2}{3}$. We can also compare collective outcomes based upon the modal compositions at steady state. These were found in Equation \eqref{eq:hatxFSinf} and $\eqref{eq:hatzFDinf}$, and are given by  $\hat{x}_{FS}^{\infty} = \frac{1}{\eta} - 1$ on the fast-slow edge and $\hat{z}_{FD}^{\infty} = 1$ on the fast-dimer edge. In Figure \ref{fig:fsfdcomparison}, we illustrate the average group reproduction rates (left) and most abundant composition of slow genes (right) in steady state in the limit that $\lambda \to \infty$ for multilevel competition on both the slow-dimer and slow-fast edges of the simplex.  For the comparison of peak composition of slow genes, we are plotting the value $\hat{x}^{\infty}_{FD} := \frac{\hat{z}^{\infty}_{FD}}{2}$ because we have assumed that the slow-fast dimers count as half of a slow gene and half of a fast gene from the perspective of the group reproductive function. We see that both the average payoff and the number of slow genes at steady state are greater on the fast-dimer edge than on the slow-dimer edge for large $\lambda$ when $\eta > \frac{2}{3}$, as seen to the right of the second vertical dashed line. 

\begin{figure}[htp]
    \centering
      \includegraphics[width=0.48\textwidth]{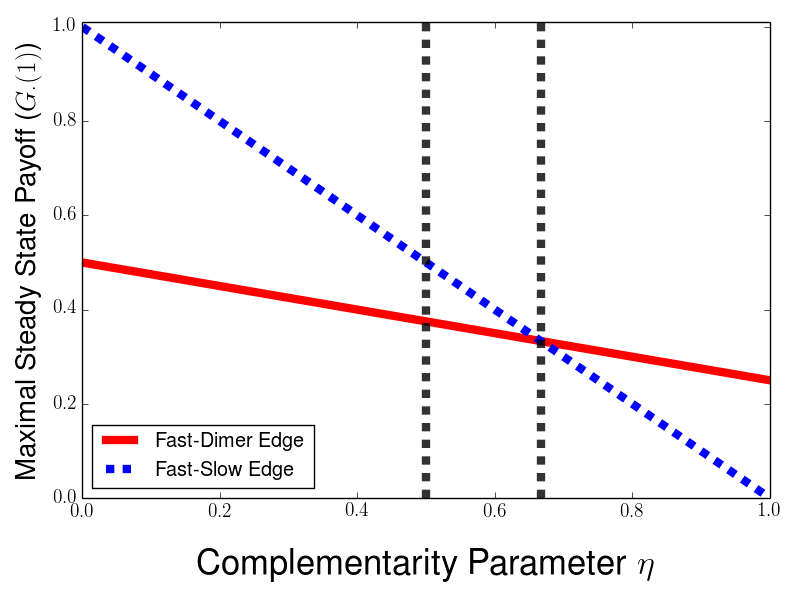}
    \includegraphics[width=0.48\textwidth]{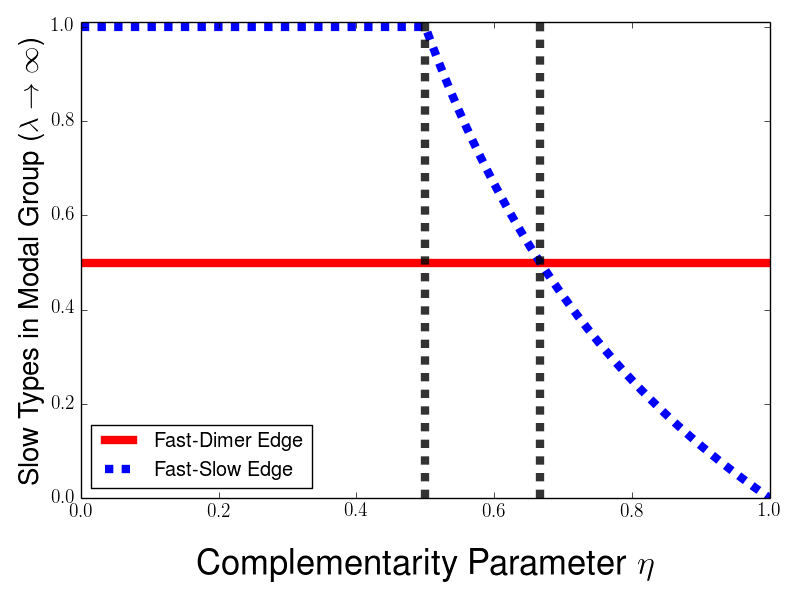}
    \caption[Comparison of slow gene composition in most abundant group and average cell reproductive rate at steady state for competition on fast-dimer and fast-slow edges for large $\lambda$.]{Comparison of average cell reproductive rate (left) and slow gene composition in most abundant group (right) at steady state for competition on fast-dimer and fast-slow edges for large $\lambda$. Red solid lines and blue dashed lines correspond to behavior on the fast-dimer edge and fast-slow edge, respectively. Left and right vertical dashed lines depict $\eta = \frac{1}{2}$ and $\eta = \frac{2}{3}$.  }
    \label{fig:fsfdcomparison}
\end{figure}

Specializing to the case of the family of gene-level birth rate from Equation \eqref{eq:trimorphicspecialbirthrates}, we can extend this comparison between the relative benefits and costs of dimerization by exploring how complementarity parameters $\eta$ and gene-level advantage of fast replicators $s$ impact the relative collective outcomes on the fast-slow and fast-dimer edges of the simplex. In particular, we would like to characterize the parameter regimes of $\eta$ and $s$ for which dimerization increases or decreases the threshold between-protocell selection strength needed to promote coexistence and for which dimerization produces a higher maximum possible collective fitness under pairwise multilevel competition.  In Figure \ref{fig:slowdimerregimes}, we illustrate the three regimes that are possible when one moves the dynamics from the fast-slow edge of the simplex to the fast-dimer edge of the simplex. From left to right, we see that the three regions of parameter space correspond to pairs of tradeoff parameter $\eta$ and individual-level advantage for fast replicators $s$ for which both dimerization hurts both the threshold relative selection strength and the maximal possible payoff (plotted in yellow), dimerization helps maximal possible payoff but hurts threshold selection strength (plotted in orange), or helps both threshold selection strength and maximal possible payoff (plotted in red). For the case in which slow and fast genes are perfect complements for between-protocell competition (when $\eta = 1$), any amount of gene-level advantage $s > 0$ for fast replicators will result in a better outcome for dimerization. For $\eta \in (\frac{2}{3},1)$ there exist sufficiently weak gene-level advantages $s$ for which fast-dimer competition that confer a collective disadvantage relative to fast-slow competition for low relative selection strengths $\lambda$, but confer a collective benefit for sufficiently high values of $\lambda$. 

\begin{figure}[ht]
    \centering
    \includegraphics[width = 0.75\textwidth]{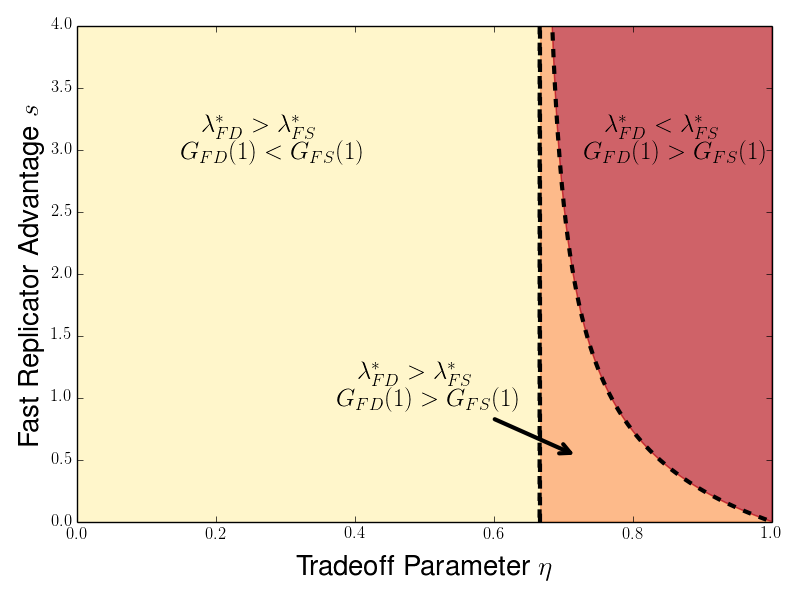}
    \caption[Illustration of possible impacts of replacing slow replicators with dimers.]{Illustration of the three possible impacts of replacing slow replicators with dimers as we vary $\eta$ and $s$. In leftmost region (light yellow), dimerization hurts the threshold needed to achieve coexistence and hurts maximal possible steady-state average cell reproduction rate. In the middle region (orange), dimerization increases threshold $\lambda$ needed to sustain coexistence, but helps the cell reproduction rate for large $\lambda$. In right region (red), dimerization helps both by lowering the threshold between-cell selection strength need to achieve coexistence and increasing the possible cell reproduction rate at steady state for strong between-cell selection. Boundary between first two regions given by $\eta = \frac{2}{3}$ and boundary between second and third region corresponds to $\eta_c^s$ given by Equation \eqref{eq:etacs}.}
    \label{fig:slowdimerregimes}
\end{figure}

\begin{remark}
This comparison we have made between the average protocell-level fitness achieved by competition on the fast-slow and fast-dimer edges of the simplex can be placed in a dynamical setting using a framework for nested birth-death models multilevel selection with multiple dynamics of groups \cite{cooney2021long}. In particular, we can think of a population consisting of protocells with compositions consisting of either a mix of fast and slow replicators or a mix of fast and dimer replicators, where gene-level competition follows the rules introduced in Section \ref{sec:trimorphicformulation}. To consider competition between fast-slow and fast-dimer protocells, we can model between-protocell competition by a process in which protocells on the fast-slow edge and fast-dimer edge respectively replicate at rates $G_{FS}(x)$ and $G_{FD}(z)$, with the offspring protocell replacing a random protocell coming from either the fast-dimer or fast-slow populations. Introducing the non-negative densities $f(t,x)$ and $g(t,z)$ which describe the distribution of protocells on the fast-slow and fast-dimer edges of the simplex, we can describe the evolution of these two densities under our nested birth-death process using the following system of PDEs
\begin{subequations}
     \begin{align}
    \dsdel{f(t,x)}{t} &= \dsdel{}{x} \left[\left(b_F - b_S\right)x \left(1-x\right) f(t,x) \right] \nonumber \\  &+ \lambda f(t,x) \left[ G_{FS}(x) - \int_0^1 G_{FS}(y) f(t,y) dy - \int_0^1 G_{FD}(w) g(t,w) dw \right]     \\ 
     \dsdel{g(t,z)}{t} &= \dsdel{}{z} \left[\left(b_F - b_D\right)z \left(1-z\right) g(t,z) \right]  \nonumber \\  &+ \lambda g(t,z) \left[ G_{FD}(z) - \int_0^1 G_{FS}(y) f(t,y) dy - \int_0^1 G_{FD}(w) g(t,w) dw \right].
     \end{align}
\end{subequations}
Starting with non-negative initial densities $f_0(x)$ and $g_0(z)$ satisfying $\int_0^1 f_0(x) dx + \int_0^1 g_0(z) dz = 1$, it can be shown that the long-time behavior of the population will concentrate entirely on either the fast-slow edge ($\int_0^1 g(t,w) dw \to 0$ as $t \to \infty$) or the fast-dimer edge ($\int_0^1 f(t,y) dy \to 0$ as $t \to \infty$) \cite{cooney2021long}. The edge upon which the population concentrates is the one which would produce a higher average protocell-level fitness at steady state under a multilevel competition on the two edges alone \cite{cooney2021long}.
\end{remark}

\subsection{Dynamics on Slow-Dimer Edge of the Simplex}\label{sec:slowdimer}

In this section, we consider the multilevel dynamics on the slow-dimer edge of the simplex, exploring how removing the fast replicator and its corresponding gene-level advantage can help to facilitate coexistence of the fast and slow genes. We see that when protocell-level competition favors all-slow compositions to all-dimer compositions, multilevel selection will promote concentration upon all-slow protocells. When all-dimer protocells have a collective advantage over all-slow protocells, we can show that sufficient levels of between-protocell competition can result in steady state coexistence of slow and dimer replicators. In addition, we see that a version of the shadow of lower-level selection holds on this edge of the simplex, as no level of between-protocell competition can allow for optimal collective fitness for complementarity scenarios in which a mix of slow replicators and dimers is most favored under protocell-level replication. 

On the slow-dimer edge of the simplex, we will describe the composition of a protocell by its fraction $z$ of dimer replicators. Restricting our trimorphic protocell-level replication rate $G(x,y,z)$ to the slow-dimer edge by plugging $x = 0$ and $y = 1-z$ into Equation \eqref{eq:Gxyzeta}, we see that the collective reproduction rate for protocells featuring only slow and dimer replicators $G_{SD}(z)$ is given by 
\begin{equation} \label{eq:groupslowdimer}
G_{SD}(z) := G(1-z,0,z) = 1 - \eta + \left( \eta - \frac{1}{2}\right)z - \frac{\eta}{4} z^2.
\end{equation}
Noting that $G'_{SD}(z) = \eta - \frac{1}{2} - \frac{\eta z }{2}$, we see that $G_{SD}(z)$ is a decreasing function of $z$ when $\eta \leq \frac{1}{2}$, so the protocell-level reproduction rate is maximized by the all-slow composition. When $\eta > \frac{1}{2}$, $G_{SD}$ has a unique maximizer featuring a mix of slow and dimer replicators. Across the possible complementarity parameters $\eta \in [0,1]$, we see that the fraction of dimers $z^*_{SD}$ that maximizes the protocell-level reproduction function $G_{SD}(z)$ is given by
 \begin{dmath} \label{eq:zstarSD}
   z^*_{SD}(\eta) = \left\{
     \begin{array}{cl}
       0 & : \eta \leq \frac{1}{2}\\
       2 - \frac{1}{\eta} & : \frac{1}{2} \leq \eta \leq 1
     \end{array}
   \right. .
\end{dmath} 
The all-slow composition is optimal for a protocell for any values of $\eta$ at which the all-slow protocell is most favored in the original slow-fast protocell model. The full-dimer composition is optimal for $\eta = 1$, which is the case in which an equal fraction of slow and fast genes is most favored for protocell-level reproduction. When $\frac{1}{2} < \eta < 1$, the optimal protocell composition on the slow-dimer edge features an interior composition $z$ with both slow and dimer replicators.  

We can also examine the collective replication rate $G_{SD}(z)$ to explore the values of $\eta$ for which the dynamics of Equation \eqref{eq:slowdimermultilevelPDE} satisfy the assumptions of Theorem \ref{thm:PDconvergencetosteady}, \ref{thm:PDconvergencetodelta}, or Proposition \ref{prop:PDelconvergencetodelta}.  By rewriting Equation \eqref{eq:groupslowdimer} in the following form
\[G_{SD}(z) = 1 - \eta + z \left[\frac{\eta}{4} \left( 1 - z \right) + \left( \frac{3 \eta}{2} - 1 \right) \right], \] 
we see that $G_{SD}(0) = 1 - \eta$ is the collective minimum when $\eta > \frac{2}{3}$, so the assumptions needed for the equality case of Theorem \ref{thm:PDconvergencetodelta} hold in this regime.
Furthermore, we see that the relative ranking of the protocell-level replication rates of the all-slow and all-dimer equilibrium can change depending on $\eta$. In particular, noting that $G_{SD}(1) = \frac{1}{2} \left(1 - \frac{\eta}{2} \right)$ and $G_{SD}(0) = 1 - \eta$, we see that $G_{SD}(1) < G_{SD}(0)$ for $\eta < \frac{2}{3}$. In that case, the all-slow composition is favored over the all-dimer composition under both gene-level and protocell-level competition. When $\eta > \frac{2}{3}$, $G_{SD}(1) > G_{SD}(1)$ and the evolutionary tension between gene-level competition favoring more slow replicators and protocell-level competition favoring all-dimer protocells over all-slow protocells is more similar to the scenarios observed on the fast-slow and fast-dimer edges. In Figure \ref{fig:Gzsdfunction.png}, we illustrate $G_{SD}(z)$ for various of $\eta$, illustrating the different cases of edge and interior optimal fractions of slow and dimer replicators and indicating the regions in which either the all-dimer or all-slow composition is favored under protocell-level competition.

\begin{figure}[htp!]
    \centering
    \includegraphics[width = 0.6\textwidth]{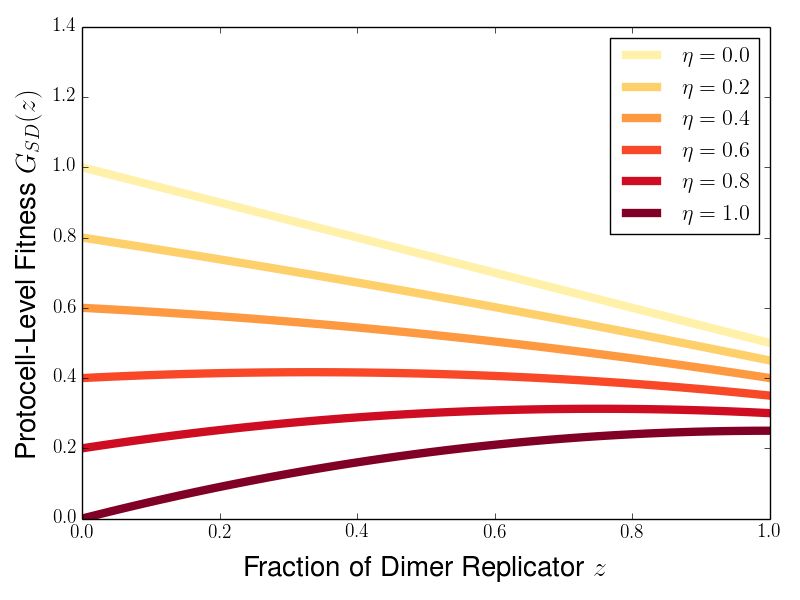}
    \caption{Protocell-level reproduction rates $G_{SD}(z)$ on slow-dimer edge of the simplex for various levels of the complementarity parameter $\eta$. For $\eta < \frac{1}{2}$, the group reproduction rate $G_{SD}(z)$ is a decreasing function of $z$ and protocell-level reproduction is maximized by all-slow protocells. When $\eta \in (\frac{1}{2},1)$, protocell-level reproduction is maximized by an intermediate mix of slow and dimer replicators, while, for $\eta = 1$, protocell-level reproduction is maximized by all-dimer protocells. For $\eta \in (\frac{1}{2},\frac{2}{3})$, $G_{SD}(0) > G_{SD}(1)$, so the all-slow composition has both a gene-level and protocell-level advantage over the all-dimer composition in this regime.}
    \label{fig:Gzsdfunction.png}
\end{figure}

To obtain a differential equation for the gene-level dynamics on the slow-dimer edge of the simplex, we can apply the restriction $y = 0$ and $x = 1-z$ to Equation \eqref{eq:withindimer}. This allows us to see that, in protocells featuring only slow and dimer replicators, the fraction of dimers evolves according to the follow gene-level replicator equation
\begin{equation} \label{eq:SDwithin}
\dsddt{z(t)} = z \left[b_D - \left(b_S x + b_F y + b_D z\right) \right] \bigg|_{\substack{yn = 0 \\ x = 1-z}} = 
- \left( b_{S} - b_{D} \right) z \left( 1 - z \right).
\end{equation}
Notably, this ODE is of the form of the characteristic curves given by Equation \eqref{eq:characteristicsgeneral} with the net gene-level replication function $\pi_{FD}(z) = b_S - b_D > 0$. 
This gene-level competition on the slow-dimer edge of the simplex always pushes to increase the fraction of slow replicators.

Now that we have characterized the gene-level dynamics and the protocell-level reproduction function for populations on the slow-dimer edge, we can introduce a density $h(t,z)$ describing the distribution of protocell compositions when the population is confined on this edge. The reduced dynamics of multilevel selection on the slow-dimer edge evolve according to the PDE
\begin{dmath} \label{eq:slowdimermultilevelPDE}
\dsdel{h(t,z)}{t} = \dsdel{}{z}\left[\left(b_S - b_D\right) z \left( 1 - z \right) h(t,z) \right] + \lambda h(t,z) \left[G_{SD}(z) - \int_0^1 G_{SD}(w) h(t,w) dw \right].
\end{dmath}

Using the same approach as in Sections \ref{sec:protocellongtime} and \ref{sec:fastdimer}, we can find that there is a family of steady state solutions $h^{\lambda}_{\theta}(z)$ to Equation \eqref{eq:slowdimermultilevelPDE} that are given by probability densities of the form
\begin{subequations} \label{eq:slowdimersteadynormalized}
     \begin{align}
     h^{\lambda}_{\theta}(z) &= Z_h^{-1} \: z^{\mathlarger{\left[\nicefrac{1}{2 \left(b_S - b_D\right)}\right] \lambda \left[ \frac{3 \eta}{2} - 1\right] - \theta - 1}} \left(1 - z\right)^{\mathlarger{\theta - 1}} \exp\left( - \frac{\lambda \eta z}{4 \left( b_S - b_D\right)} \right) \\
     Z_h &= \int_0^1 w^{\mathlarger{\left[\nicefrac{1}{2 \left(b_S - b_D\right)}\right] \lambda \left[ \frac{3 \eta}{2} - 1\right] - \theta - 1}} \left(1 - w\right)^{\mathlarger{\theta - 1}} \exp\left( - \frac{\lambda \eta w}{4 \left( b_S - b_D\right)} \right) dw.
     \end{align}
\end{subequations}

We note that densities of this form can only be integrable near $z = 0$ if $\eta > \frac{2}{3}$. If this condition is satisfied, the density will be integrable provided that the relative intensity of between-protocell competition $\lambda$ exceeds the following threshold value

\begin{equation} \label{eq:lambdastarSD}
\lambda^*_{SD} = \frac{(b_S - b_D) \theta}{G_{SD}(1) - G_{SD}(0)} = \frac{4(b_S - b_D) \theta}{3 \eta - 2}.
\end{equation}
Notably, we find that the threshold $\lambda^*_{SD} \to \infty$ as $\eta \to \frac{2}{3}$.

Recalling the form of the condition on $\lambda$ from Theorem \ref{thm:PDconvergencetosteady}, we see convergence to such a steady state density could only be possible on the slow-dimer edge if the following inequality in satisfied
 \[\lambda \left( G_{SD}(1) - G_{SD}(0) \right) > \underbrace{\left(b_S - b_D\right) \theta}_{> 0}. \]
 This can never be satisfied for any positive $\lambda$ if $G_{SD}(1) - G_{SD}(0) \leq 0$, which holds when $\eta \leq \frac{2}{3}$. Therefore no coexistence at a density steady state is expected when the complementarity parameter $\eta \in [0,\frac{2}{3}]$, or when the protocell-level replication rate is maximized by a composition of slow genes between 75 percent and 100 percent. This means that, on the slow-dimer edge of the simplex, it is only possible to obtain long-time coexistence of slow and dimer replicators if the group level reproduction function most favors compositions with between 50 and 75 percent slow genes, corresponding to the case in which the all-dimer protocell is closer to the collective optimum than the all-slow protocell.  
 
In Proposition \ref{prop:longtimeslowdimer}, we collect our results of the long-time behavior of the multilevel slow-dimer dynamics of Equation \eqref{eq:slowdimermultilevelPDE} given an initial condition with H{\"o}lder exponent $\theta > 0$ near $z = 1$. When $\lambda \left[3 \eta - 2 \right] > 4 \left(b_S - b_D\right) \theta$ (which occurs when $\eta > \tfrac{2}{3}$ and $\lambda > \lambda^*_{SD}$), we see from Theorem \ref{thm:PDconvergencetosteady} that the population of protocells will converge to a density steady state supported a coexistence of slow and dimer replicators. When $\lambda \left[3 \eta - 2 \right] \leq 4 \left(b_S - b_D\right) \theta$, the population concentrates upon a delta-function at the all fast composition. This result on convergence to the all-slow state follows from Theorem \ref{thm:PDconvergencetodelta} when $\eta > \frac{2}{3}$ (so $G_{SD}(1) > G_{SD}(0)$) and $\lambda \leq \lambda^*_{SD}$, and this convergence is confirmed by Proposition \ref{prop:PDelconvergencetodelta} when $\eta \leq \frac{2}{3}$ (and correspondingly $G_{SD}(1) \leq G_{SD}(0)$) for any positive level of between-protocell competition $\lambda > 0$.  

\begin{proposition}  \label{prop:longtimeslowdimer} Suppose the population of protocells composed of slow and dimer replicators has initial measure $\mu_0(dz)$ with H{\"o}lder exponent of $\theta$ near $z=1$. Then, in the limit as $t \to \infty$, the solution $\mu_t(dz)$ to Equation \eqref{eq:slowdimermultilevelPDE} will display the following behavior \begin{displaymath} \mu_t(dz)  \rightharpoonup \left\{
     \begin{array}{cr}
       \delta(z) & : \lambda \left[3 \eta - 2 \right] \leq 4 \left(b_S - b_D\right) \theta \\ 
       h^{\lambda}_{\theta}(z) & : \lambda \left[3 \eta - 2 \right] > 4 \left(b_S - b_D\right) \theta
     \end{array} \right.,
     \end{displaymath} 
where the steady state density $h^{\lambda}_{\theta}(z)$ is given by Equation \eqref{eq:slowdimersteadynormalized}.
\end{proposition}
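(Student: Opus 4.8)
The plan is to recognize Equation~\eqref{eq:slowdimermultilevelPDE} as an instance of the general multilevel replicator equation~\eqref{eq:generalPDEreplicator} with constant gene-level advantage $\pi_{SD}(z) \equiv b_S - b_D > 0$ (whose characteristic ODE is Equation~\eqref{eq:SDwithin}) and protocell-level reproduction rate $G_{SD}(z)$ from Equation~\eqref{eq:groupslowdimer}, and then to read off the long-time behavior directly from Theorems~\ref{thm:PDconvergencetosteady} and~\ref{thm:PDconvergencetodelta} and Proposition~\ref{prop:PDelconvergencetodelta}, just as was done for the fast-slow edge in Proposition~\ref{prop:steadystateetamodel} and the fast-dimer edge in Proposition~\ref{prop:longtimefastdimer}. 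The feature that makes this edge different is that the sign of $G_{SD}(1) - G_{SD}(0) = \big(\tfrac12 - \tfrac\eta4\big) - (1-\eta) = \tfrac{3\eta - 2}{4}$ is not fixed: the all-dimer protocell out-reproduces the all-slow protocell precisely when $\eta > \tfrac23$. I would therefore split the argument into the cases $\eta > \tfrac23$, $\eta < \tfrac23$, and the boundary $\eta = \tfrac23$.

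For $\eta > \tfrac23$ the hypotheses of the multilevel Prisoners' Dilemma scenario hold on this edge: $G_{SD}, \pi_{SD} \in C^1([0,1])$, $\pi_{SD}(z) \equiv b_S - b_D > 0$, and $G_{SD}(1) > G_{SD}(0)$. Substituting $\pi_{SD}(1) = b_S - b_D$ and $G_{SD}(1) - G_{SD}(0) = \tfrac{3\eta-2}{4}$ into the general threshold~\eqref{eq:lambdastargeneral} recovers $\lambda^*_{SD}$ of Equation~\eqref{eq:lambdastarSD}, and the dichotomy claimed in the statement is exactly whether $\lambda[G_{SD}(1) - G_{SD}(0)]$ exceeds $\theta \pi_{SD}(1)$, i.e.\ whether $\lambda(3\eta-2)$ exceeds $4(b_S-b_D)\theta$. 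When this quantity is positive, Theorem~\ref{thm:PDconvergencetosteady} gives weak convergence to $h^\lambda_\theta(z)$; here the only genuinely computational point is the identification of $p^\lambda_\theta$ from Equation~\eqref{eq:plambdatheta} with the explicit density~\eqref{eq:slowdimersteadynormalized}, which has already been carried out above. When $\lambda(3\eta - 2) < 4(b_S - b_D)\theta$, the first bullet of Theorem~\ref{thm:PDconvergencetodelta} gives $\mu_t(dz) \rightharpoonup \delta(z)$. For the equality case I would apply the second bullet of Theorem~\ref{thm:PDconvergencetodelta}, which additionally requires $G_{SD}(0)$ to be the unique minimizer of $G_{SD}$ on $[0,1]$ and $\mu_0$ to have a finite positive H\"older constant near $z = 1$; the uniqueness of the minimizer follows from the rewriting $G_{SD}(z) = 1 - \eta + z\big[\tfrac\eta4(1-z) + \big(\tfrac{3\eta}{2} - 1\big)\big]$, whose bracket is strictly positive on $[0,1]$ once $\eta > \tfrac23$.

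For $\eta < \tfrac23$ we have $G_{SD}(0) > G_{SD}(1)$ together with $\pi_{SD}(z) > 0$, which is precisely the hypothesis set of Proposition~\ref{prop:PDelconvergencetodelta} (both levels of selection favor the all-slow composition $z = 0$). Since $\mu_0$ has H\"older exponent $\theta > 0$ near $z = 1$ it assigns no mass to $\{1\}$, so $\mu_0([0,1)) = 1 > 0$, and the proposition yields $\mu_t(dz) \rightharpoonup \delta(z)$ for every $\lambda > 0$; this agrees with the statement because $\lambda(3\eta - 2) < 0 \le 4(b_S - b_D)\theta$ automatically in this range. The remaining boundary case $\eta = \tfrac23$, in which $G_{SD}(0) = G_{SD}(1) = \tfrac13$ and $G_{SD}(z) = \tfrac13 + \tfrac16 z(1-z)$ attains its maximum at an interior composition, is the main obstacle: it is the degenerate regime $\lambda^*_{SD} \to \infty$, and neither Theorem~\ref{thm:PDconvergencetodelta} (which assumes $G_{SD}(1) > G_{SD}(0)$) nor Proposition~\ref{prop:PDelconvergencetodelta} (which assumes $G_{SD}(0) > G_{SD}(1)$) applies verbatim. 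I would handle it by the same argument used for the $\eta = 1$ case of the baseline model (Proposition~\ref{prop:steadystateetamodel} and the remark following it): no integrable density steady state of the form~\eqref{eq:slowdimersteadynormalized} exists for any finite $\lambda$, and the convergence $\mu_t(dz) \rightharpoonup \delta(z)$ then follows from the corresponding limiting result of~\cite{cooney2021long}. Assembling the three cases gives the stated dichotomy between $\lambda[3\eta - 2]$ and $4(b_S - b_D)\theta$.
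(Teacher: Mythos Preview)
Your proposal is correct and follows essentially the same approach as the paper: recognize the slow-dimer PDE as an instance of Equation~\eqref{eq:generalPDEreplicator}, compute $G_{SD}(1)-G_{SD}(0)=\tfrac{3\eta-2}{4}$, and then split according to the sign of this quantity, invoking Theorem~\ref{thm:PDconvergencetosteady} and Theorem~\ref{thm:PDconvergencetodelta} when $\eta>\tfrac23$ and Proposition~\ref{prop:PDelconvergencetodelta} when $\eta<\tfrac23$. You are in fact slightly more careful than the paper about the degenerate boundary $\eta=\tfrac23$, which the paper folds into the $\eta\le\tfrac23$ case without comment even though Proposition~\ref{prop:PDelconvergencetodelta} as stated assumes the strict inequality $G(0)>G(1)$.
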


We now combine the results of Proposition \ref{prop:longtimeslowdimer} with the expression for average protocell-level fitness from Equation \eqref{eq:Gaveragelambda} to see that, for a solution $\mu_t(dz)$ to Equation \eqref{eq:slowdimermultilevelPDE}, the collective outcome satisfies
\begin{equation} \label{eq:sdcollectivelongtime}
    \lim_{t \to \infty} \langle G_{SD}(\cdot) \rangle_{\mu_t} = 
    \left\{ 
    \begin{array}{cr}
    G_{SD}(0) &:  \lambda \left[3 \eta - 2 \right] \leq 4 \left(b_S - b_D\right) \theta  \\
   G_{SD}(1) - \frac{\left(b_S - b_D \right) \theta}{\lambda} &: \lambda \left[3 \eta - 2 \right] > 4 \left(b_S - b_D\right) \theta
    \end{array}
    \right\} \leq \max\left(G_{SD}(0), G_{SD}(1) \right).
\end{equation}

Therefore we see that the collective outcome at steady state on the slow-dimer edge is limited by the larger of the protocell-level reproduction rates of the all-slow and all-dimer protocell, even when intermediate fractions of slow and dimer replicators is optimal for collective reproduction among compositions on this edge of the simplex for $\eta \in (\frac{1}{2},1)$. Even on the slow-dimer edge, we see that the absence of fast replicators is not enough to always promote optimal compositions in the limit of infinite between-protocell competition, as the presence of the gene-level advantage of slow replicators over dimers can allow for a shadow of lower-level selection in the multilevel dynamics on this edge. Furthermore, the bound from Equation \eqref{eq:sdcollectivelongtime} on the collective outcome also holds for the fast-slow and fast-dimer edges of the simplex, which further motivates studying whether such a shadow of gene-level selection also arises when we consider dynamics on the full simplex and allow protocells featuring coexistence of fast, slow, and dimer replicators.

To further understand the shadow cast by gene-level selection on the fast-dimer edge of the simplex, we can study the modal protocell gene composition for the family of steady states $h^{\lambda}_{\theta}(z)$. In Proposition \ref{prop:SDmodal}, we characterize the most abundant cell composition at steady state $\hat{z}_{\tilde{\lambda}} := \argmax_{x \in [0,1]} h^{\tilde{\lambda}}_{\theta}(z)$. We see in the limit as $\lambda \to \infty$ that, whenever a mix of dimers and slow replicators is maximizes protocell-level replication rate (which occurs when $\eta \in (\tfrac{1}{2},1)$), the modal outcome features fewer dimers than the collectively optimal composition. When slow and fast genes are perfect complements for protocell-level reproduction (in the case of $\eta = 1$), the collective optimum and the modal outcome achieved under infinite between-protocell competition coincide upon the all-dimer composition.

\begin{proposition}[\bfseries Most Abundant Composition at Steady State Features Fewer Dimers than Optimal Unless Protocell-Level Competition Most Favors Fifty-Fifty Mix of Fast and Slow Replicators] \label{prop:SDmodal}
\sloppy{Consider the steady state density $h^{\lambda}_{\theta}(z)$ and suppose that $\eta > \frac{2}{3}$, $\lambda \left(\frac{3 \eta}{2} - 1\right) > 2 \left(b_S - b_D \right) \left(\theta + 1\right)$ and $\theta \geq 1$. Then the most abundant composition at steady state $\hat{z}_{SD}^{\lambda} := \argmax_{z \in [0,1]} h^{\lambda}_{\theta}(z)$ is given by }
\begin{dmath} \label{eq:modalSDlambda}
    \hat{z}^{\lambda}_{SD} = \frac{1}{\lambda \eta} \left(\lambda \left(2 \eta - 1\right) - 4(b_S - b_D)  - \sqrt{\left[\lambda \left( 2 \eta - 1 \right) - 4 (b_S - b_D) \right]^2 - 2 \lambda \eta \left[ \lambda \left( \frac{3 \eta}{2} - 1 \right) - 2 \left(b_S - b_D \right) \left( \theta + 1 \right) \right]} \right).
\end{dmath}
Furthermore, in the limit of infinite intensity of between-protocell composition, the modal composition $\hat{z}^{\infty}_{FD} := \lim_{\lambda \to \infty} \hat{z}_{FD}^{\lambda}$ is given by
\begin{equation} \label{eq:hatzSDinf}
    \hat{z}^{\infty}_{SD} = 3 - \frac{2}{\eta}
\end{equation}
Finally, noting from Equation \eqref{eq:zstarSD} that $z^*_{SD} = 2 - \frac{1}{\eta}$ when $\eta > \frac{2}{3}$, we see that $\hat{z}^{\infty}_{SD} = z^*_{SD} = 1$ when $\eta = 1$ and that, for $\eta \in (\frac{2}{3},1)$,
\begin{equation}
\hat{z}_{\infty} = 3 - \frac{2}{\eta} = \left( 2 - \frac{1}{\eta}\right) + \left( 1 - \frac{1}{\eta} \right) = z^*_{SD} + \underbrace{\left( 1 - \frac{1}{\eta} \right)}_{< 0} <  z^*_{SD}.
\end{equation}

\end{proposition}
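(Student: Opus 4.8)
The plan is to follow the same strategy used in the proof of Proposition~\ref{prop:FSmodal}: locate the interior critical points of the steady-state density $h^{\lambda}_{\theta}(z)$ by logarithmic differentiation, reduce the first-order condition to a convex quadratic, and read off the relevant root. First I would differentiate the expression for $h^{\lambda}_{\theta}(z)$ in Equation~\eqref{eq:slowdimersteadynormalized}. Writing $m := [\nicefrac{1}{2(b_S-b_D)}]\lambda[\tfrac{3\eta}{2}-1] - \theta - 1$ for the exponent of $z$ and $B := \tfrac{\lambda\eta}{4(b_S-b_D)}$ for the rate in the exponential factor, logarithmic differentiation gives $\tfrac{d}{dz}h^{\lambda}_{\theta}(z) = Z_h^{-1}\, a_{SD}(z)\, z^{m-1}(1-z)^{\theta-2}\exp\!\left(-\tfrac{\lambda\eta z}{4(b_S-b_D)}\right)$, where $a_{SD}(z) = B z^2 - (m+\theta-1+B)z + m$ is a quadratic with positive leading coefficient $B$, hence convex. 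Since the remaining factors are positive on $(0,1)$, the sign of $(h^{\lambda}_{\theta})'$ is the sign of $a_{SD}$, so the interior critical points of $h^{\lambda}_{\theta}$ are exactly the roots of $a_{SD}$ in $(0,1)$.

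Next I would evaluate $a_{SD}$ at the endpoints. The hypothesis $\lambda(\tfrac{3\eta}{2}-1) > 2(b_S-b_D)(\theta+1)$ is precisely $m > 0$, giving $a_{SD}(0) = m > 0$, while a direct computation gives $a_{SD}(1) = 1 - \theta \leq 0$ using $\theta \geq 1$. A convex parabola that is positive at $0$ and nonpositive at $1$ has two real roots, both positive (their product $m/B$ and sum $(m+\theta-1+B)/B$ are positive), and the smaller root $z_-$ lies in $(0,1)$ with $a_{SD}$ changing sign from $+$ to $-$ there; when $\theta = 1$ the larger root is $z=1$, which one checks is a boundary critical point rather than the global maximum, and $z_- < 1$ still holds since $\eta \leq 1$. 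Combined with $h^{\lambda}_{\theta}(z) \to 0$ as $z \to 0^+$ (because $m>0$) and boundedness near $z=1$ (because $\theta \geq 1$), this shows $h^{\lambda}_{\theta}$ increases on $(0,z_-)$ and is nonincreasing on $[z_-,1]$, so $\hat{z}^{\lambda}_{SD} = z_-$. Applying the quadratic formula and then clearing the factor $2(b_S-b_D)$ from numerator and denominator (absorbing it under the square root, using $m+\theta-1+B = \tfrac{\lambda(2\eta-1)}{2(b_S-b_D)}-2$ and $2B = \tfrac{\lambda\eta}{2(b_S-b_D)}$) yields exactly Equation~\eqref{eq:modalSDlambda}.

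Finally, for the limit $\lambda \to \infty$ I would factor $\lambda$ out of the numerator and denominator of \eqref{eq:modalSDlambda}; the quantity under the square root, divided by $\lambda^2$, tends to $(2\eta-1)^2 - 2\eta(\tfrac{3\eta}{2}-1) = (1-\eta)^2$, whose square root is $1-\eta$ for $\eta \in (0,1]$. Hence $\hat{z}^{\infty}_{SD} = \tfrac{(2\eta-1)-(1-\eta)}{\eta} = 3 - \tfrac{2}{\eta}$, which is Equation~\eqref{eq:hatzSDinf}. Comparing with $z^*_{SD} = 2 - \tfrac{1}{\eta}$ from Equation~\eqref{eq:zstarSD} in the regime $\eta > \tfrac{2}{3}$, the difference $\hat{z}^{\infty}_{SD} - z^*_{SD} = 1 - \tfrac{1}{\eta}$ vanishes at $\eta = 1$ and is negative for $\eta \in (\tfrac{2}{3},1)$, giving the stated comparison.

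The only genuinely delicate point—and it is the same one as in Proposition~\ref{prop:FSmodal}—is the bookkeeping that reduces $(h^{\lambda}_{\theta})'$ to the clean quadratic $a_{SD}$ and then verifies the precise sign pattern ($+$ at $0$, $\leq 0$ at $1$, smaller root interior) ensuring the critical point is a maximum rather than a minimum or a boundary point; the reality of the root is then automatic from the intermediate value theorem, and the $\lambda \to \infty$ step is just the observation that the discriminant collapses to the perfect square $(1-\eta)^2$.
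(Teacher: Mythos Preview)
Your proposal is correct and follows essentially the same approach as the paper's proof of Proposition~\ref{prop:FSmodal}, which serves as the template for the modal-composition propositions on each edge of the simplex: differentiate the steady state to obtain a convex quadratic $a_{SD}(z)$, use the hypotheses to get $a_{SD}(0)>0$ and $a_{SD}(1)\le 0$, identify the smaller root as the unique interior maximizer, and then take the $\lambda\to\infty$ limit by observing that the discriminant reduces to the perfect square $(1-\eta)^2$. Your algebraic checks (including the verification that $z_-<1$ in the boundary case $\theta=1$) are accurate.
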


In Figure \ref{fig:slowdimerpeakplot}, we illustrate the difference between the protocell composition $z^*_{SD}$ with maximal collective reproduction rate and the composition $\hat{z}^{\infty}_{SD}$ achieving maximal abundance at steady state for large relative selection strength $\lambda \to \infty$. We see that for $\eta < \frac{1}{2}$, both the optimal protocell composition and modal steady state composition agree upon the all-dimer protocell. For $\eta \in (\frac{1}{2},1)$, we see that the maximum possible level of dimers achieved at steady state is less than what is optimal for the protocell, while the modal composition as $\lambda \to \infty$ coincides with the collective optimum when $\eta = 1$ (and correspondingly the all-dimer composition is optimal for protocell-level competition). In particular, we see that for $\eta \in (\frac{1}{2},\frac{2}{3})$, the compositions with maximal protocell-level replication rate feature dimers, while no dimers are ever actually achieved at steady state. We see that the gap between the optimal composition $z^*_{SD}$ and the modal composition $\hat{z}^{\infty}_{SD}$ for large $\lambda$ is maximized at $\eta = \frac{2}{3}$, when a fifty-fifty mix of slow and dimer replicators is most favored until between-protocell competition. This also be understood by using Equations \eqref{eq:zstarSD} and \eqref{eq:hatzSDinf} to see that, for $\eta  \geq \frac{2}{3}$, the gap is given by $z^*_{SD} - \hat{z}^{\infty}_{SD} = \frac{1}{\eta} - 1$ and is a decreasing function of $\eta$. 

\begin{figure}[ht!]
    \centering
    \includegraphics[width = 0.7\textwidth]{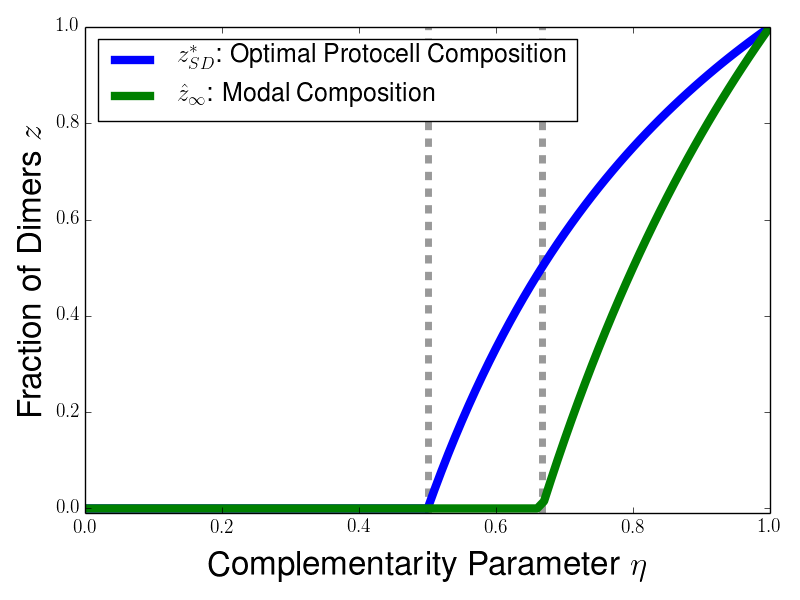}
    \caption{Comparison between protocell composition maximizing $G_{SD}(z)$ and protocell properties at steady state as $\lambda \to \infty$ for dynamics on the slow-dimer edge, plotted as a function of the complementarity parameter $\eta$. Blue line corresponds to group type with maximal group reproduction rate, while green line describes fraction of dimers achieved at steady state as $\lambda \to \infty$. Left dashed vertical line denotes $\eta = \frac{1}{2}$, the threshold above which dimers are present in  the protocell composition with maximal reproductive rate, and right dashed line corresponds to $\eta = \frac{2}{3}$, the threshold needed for dimers to be realized as steady state for any level of $\lambda$.}
    \label{fig:slowdimerpeakplot}
\end{figure}

While we saw in Section \ref{sec:fastdimer} that introducing dimers results in a pairwise fast-dimer multilevel competition in which sufficiently strong between-protocell competition achieves the optimal all-dimer outcome, we see from the dynamics on the slow-dimer edge of the simplex that multilevel protocell competition appears to still fail to achieve optimal collective outcomes unless $\eta = 1$. %
This failure of dimers to achieve optimal abundance on the slow-dimer edge motivates explorations into mechanisms that can further help to improve upon the maximal collective fitness achievable via multilevel selection.

\section{Numerical Approach for Three-Type Dynamics} \label{sec:trimorphicnumerics}

So far, we have only considered the long-time behavior of PDE models for multilevel selection with two types of individuals, focusing on the pairwise competition on the various edges of the fast-slow-dimer simplex. In this section, we look to obtain a preliminary understanding of the multilevel competition that takes places when we allow for protocells featuring a mix of fast, slow, and dimer replicators. As a first attempt to understand these trimorphic dynamics, we will make use of a finite volume schemes to study numerical solutions to an approximation of Equation \eqref{eq:multileveltrimorphic}. The details of this finite volume approach are presented in Section \ref{sec:fvtrimorphic}, and the main equation for our discretized dynamics is given by Equation \eqref{eq:trimorphicfinitevolume}. These numerical simulations serve as an initial suggestion of the ability of the population of protocells to maintain coexistence of fast and slow genes through multilevel competition featuring fast, slow, and dimer replicators, even when no coexistence is possible in pairwise competition between slow and fast replicators (when $\eta = 1$).

In this section, we will first explore how the density of fast, slow, and dimer replicators (Figure \ref{fig:trimorphicnumericssp4}) and the average protocell-level fitness (Figure \ref{fig:Gintime}) evolve over time for our finite volume numerical solutions. These numerical trajectories provide an initial suggestion that our trimorphic multilevel dynamics can support coexistence of fast and slow genes, even in the case in the two genes are perfect complements for protocell-level reproduction ($\eta = 1$) and coexistence cannot occur in pairwise multilevel competition on the fast-slow edge of the simplex. Then, we analyze numerical solutions achieved after a large number of time steps, studying how these long-time densities vary with changes in the relative strength $\lambda$ of between-protocell competition and the degree of complementarity $\eta$ of slow and fast genes for protocell-level replication (Figure \ref{fig:steadystatetrimorphicplots}). Further comparisons are provided between the average protocell-level fitness (Figures \ref{fig:Gplottrimorphic} and \ref{fig:Getatrimporphic}) and the modal and mean compositions of fast genes (\ref{fig:peakandmeantrimorphic}) between the long-time trimorphic numerical solutions and the analytical steady states of the fast-slow and fast-dimer edges of the simplex. The agreement found in these figures suggest the possibility of a tug-of-war between the gene-level and protocell-level reproductive advantages taking place between corners of the simplex similar to what is observed in the dimorphic case through the threshold condition on relative selection strength of Equation \eqref{eq:lambdastargeneral} and the average protocell-level fitness at steady state \eqref{eq:Gaveragethresh}. 

As a baseline scenario for studying the dynamics of our numerical scheme, we will consider a uniform initial distribution of protocell compositions on the fast-slow-dimer simplex. This choice is motivated by the agreement found between the long-time behavior of finite-volume numerical solutions for two-type multilevel dynamics and the analytical solutions for the long-time steady states achieved by solutions to the multilevel PDE of Equation \eqref{eq:generalPDEreplicator} given uniform initial densities \cite{cooney2020pde}. This motivation will be presented further in Section \ref{sec:fvdimorphic} in the context of multilevel dynamics of the fast-slow and fast-dimer edges of the simplex. An interesting question for future work is how the initial configuration of the population of protocells impacts the long-time support for fast and slow genes, and to understand whether there is a fine property of the initial distribution analogous to the H{\"o}lder exponent near the all-slow equilibrium that can be used to characterize the possible long-time behaviors of the trimorphic multilevel dynamics.

In Figure \ref{fig:trimorphicnumericssp4}, we display snapshots at various points in time of the approximate solutions for $\rho(t,x,y)$ as a heatmap on the simplex of possible cell compositions $x + y + z  \leq  1$, with the individual advantage of fast replicators given by $s = 0.4$ {Figure \ref{fig:trimorphicnumericssp4}).  We consider a large value of between-cell competition $\lambda = 300$ and a cell-level tradeoff parameter $\eta = 1$ corresponding to the slow and fast genes serving as perfect complements for between-cell competition. From the snapshots, we see that the density of cell types first concentrates towards cell compositions display close to fifty-fifty mixes of slow and fast genes (as seen in the top-left panels), ranging on the line from the all-dimer cells in the bottom-left to the cell composition with half slow replicators and half fast replicators on the diagonal of the simplex satisfying $1-x-y$. Then we see that the fast and slow replicators tend to take over due to within-cell replication, so the groups closer to the diagonal next increase in frequency relative to the compositions featuring many dimers (as seen in the top-right panels). Next the fast replicators begin to beat out the slow replicators within cells, and so the most frequent cell types tend to feature more fast replicators to slow replicators (as seen in the middle panels of Figure \ref{fig:trimorphicnumericssp4}). Finally, we see that the remaining many-dimer groups are now more competitice under protocell-level reproduction than the groups with a majority of fast replicators, and then the balance between within-cell and between-cell competition results in steady state densities featuring a coexistence of slow and fast monomer replicators with the slow-fast dimers.

\begin{figure}[ht]
    \centering
    \includegraphics[width = 0.48 \textwidth,height = 0.38\textwidth]{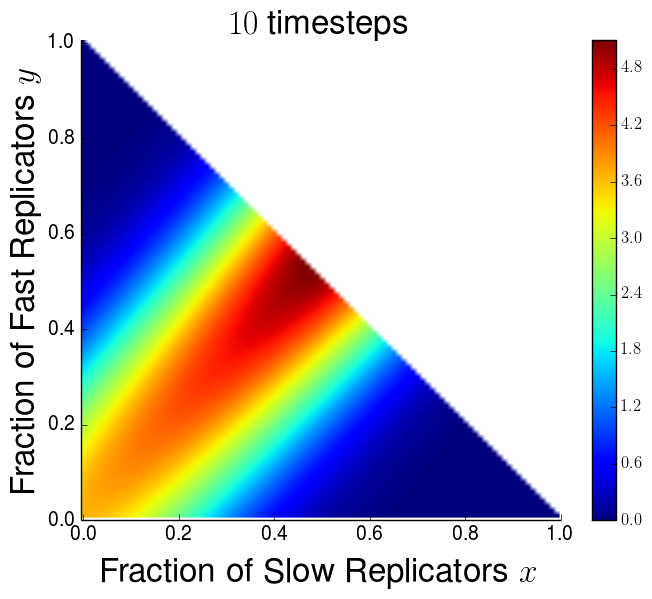}
     \includegraphics[width = 0.48 \textwidth,height = 0.38\textwidth]{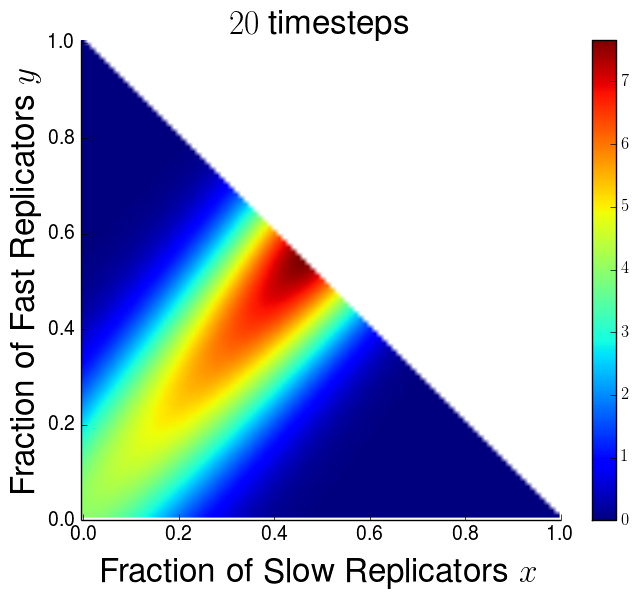}
      \includegraphics[width = 0.48 \textwidth,height = 0.38\textwidth]{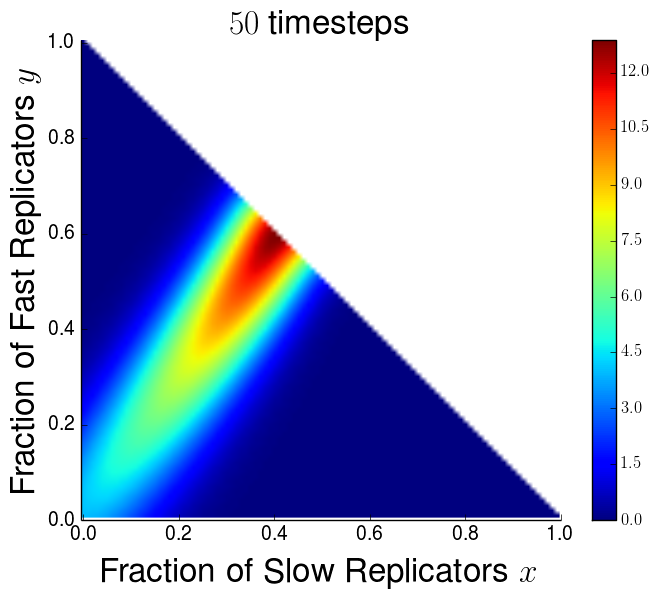}
       \includegraphics[width = 0.48 \textwidth,height = 0.38\textwidth]{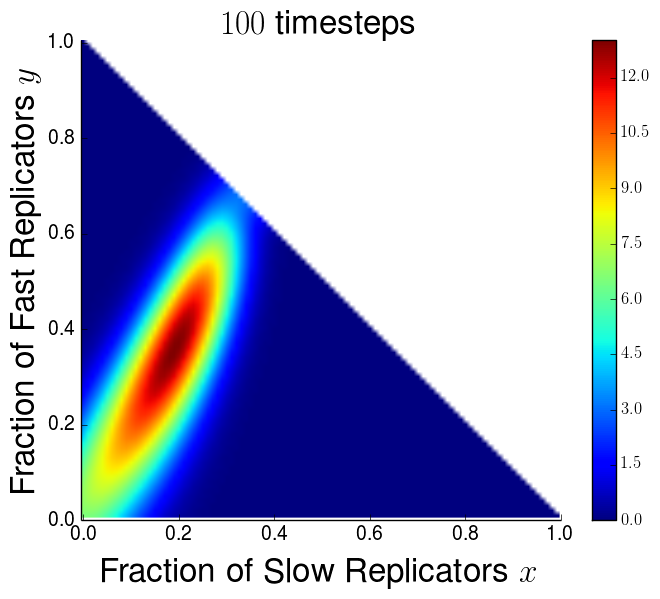}
        \includegraphics[width = 0.48 \textwidth,height = 0.38\textwidth]{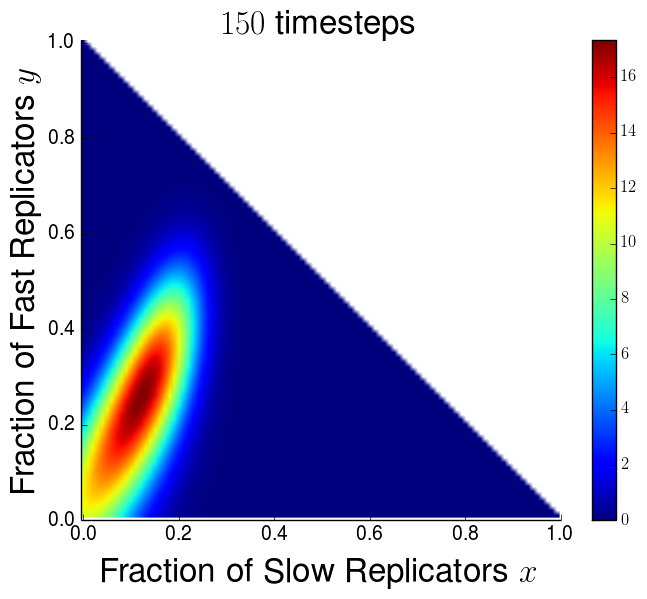}
         \includegraphics[width = 0.48 \textwidth,height = 0.38\textwidth]{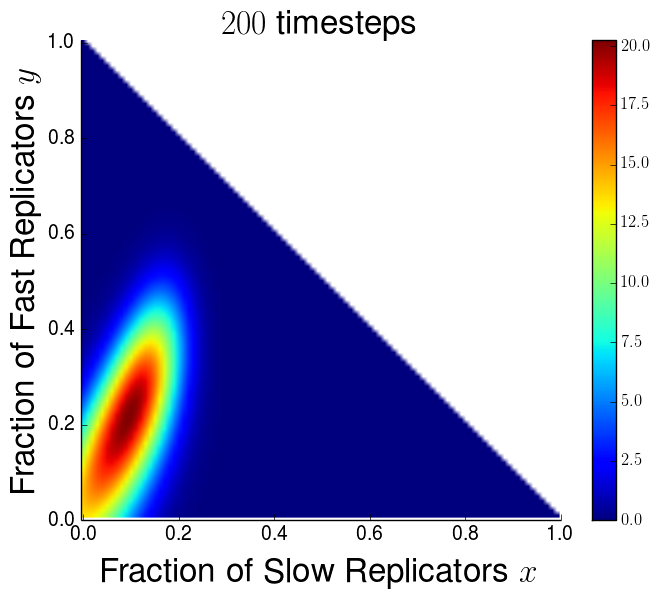}
    \caption[Numerical solutions for the finite volume approximation of the multilevel dynamics of slow-fast-dimer competition for $\eta = 1$ and $s = 0.4$.]{Numerical solutions for the finite volume approximation of the multilevel dynamics of slow-fast-dimer competition with relative selection strength $\lambda = 300$, fast replicator advantage $s = 0.4$ and tradeoff parameter $\eta = 1$. Each panel corresponds the the approximate density of $\rho(t,x,y)$ after 10 (top-left), 20 (top-right), 50 (middle-left), 100 (middle-right), 150 (bottom-left), and 200 (bottom-right) timsteps with a time increment $\Delta t = 0.03$. 
    }
    \label{fig:trimorphicnumericssp4}
\end{figure}

\clearpage

After seeing how the densities evolve in time for a given initial condition, we can also look to understand how quantities like the average protocell-level fitness of the population evolve over time. In Figure \ref{fig:Gintime}, we show time trajectories of the collective fitness $\langle G(x,y) \rangle_{\rho(t,x,y)}$ under the trimorphic multilevel dynamics for complementarities given by $\eta = 1$ (Figure \ref{fig:Gintime}, left) and  $\eta = 0.7$ (Figure \ref{fig:Gintime}, right) and for various values of the between-protocell selection strength $\lambda$. We see that the trajectories of the protocell-level fitness is nonmonotonic in time, and that average protocell-level reproduction rate appears to be an increasing function of $\lambda$. The values of collective reproduction rate $\langle G(x,y) \rangle_{\rho(t,x,y)}$ appear to converge towards fixed values after around several thousand time steps for time difference $\Delta t = 0.0015$ seconds (around $1000$ time steps for $\eta = 1$ and around $3500$ time steps for $\eta = 0.7$). This equilibriation towards a fixed protocell-level fitness suggests the possibility that the densities $\rho(t,x,y)$ solving the trimorphic PDE of Equation \eqref{eq:multileveltrimorphic} with an initial uniform distribution may converge to steady-state densities in a manner reminiscent of the dimorphic multilevel dynamics featured in Section \ref{sec:existingresults}. 

\begin{figure}[ht]
    \centering
    \includegraphics[width = 0.48\textwidth]{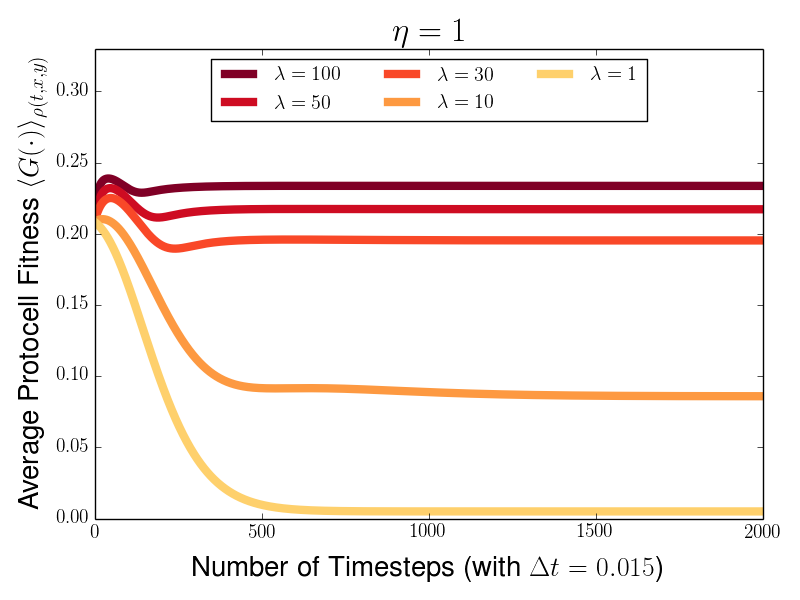}
      \includegraphics[width = 0.48\textwidth]{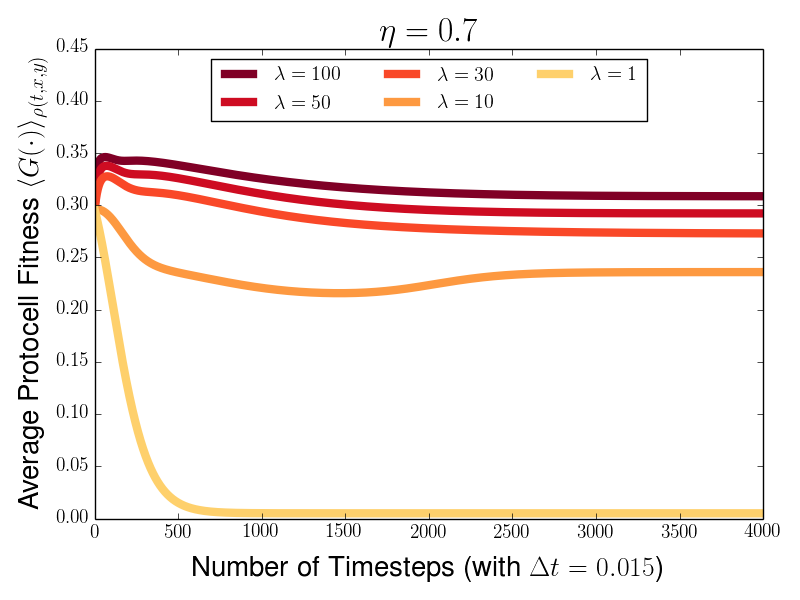}
    \caption{Average protocell-level fitness in the population $\langle G(x,y) \rangle_{\rho(t,x,y)}$ as a function of the number of time steps (with a step-length of $\Delta t = 0.0015$) for various intensities of between-protocell competition $\lambda$ and for the complementarity parameter $\eta = 1$ (left) or $\eta = 0.7$ (right). In both cases, the average protocell-level fitness is decreases in $\lambda$ for all time, and average fitness appears to equilibrate after several thousand time steps.}
    \label{fig:Gintime}
\end{figure}

With this possible convergence towards steady state densities for the trimorphic density $\rho(t,x,y)$, we can also consider the long-time behavior of our finite volume numerical solutions achieved for different parameters for the multilevel dynamics, exploring how changing the relative strength of between-protocell competition $\lambda$ or the complementarity parameter $\eta$ can impact the long-term support for fast, slow, and dimer replicators. In Figure \ref{fig:steadystatetrimorphicplots}, we provide the states achieved after 5000 time steps while varying $\lambda$ between the values $10$, $30$, $50$, and $100$ (from top row to bottom row) and varying $\eta$ between the values $1.0$, $0.9$, and $0.8$ (from left column to right column). For fixed $\eta$, we see that increasing $\lambda$ produces a greater proportion of the slow gene at steady state, with a greater weight of the slow gene carried through slow replicators for $\eta = 0.7$ and fractions carried through dimer replicators for $\eta = 0.9$ and even more for $\eta = 1.0$. For the lowest value of $\lambda$ considered, we see that population is concentrated very close to the all-fast state for $\eta = 1.0$, with slightly more representation of the fast gene for $\eta = 0.9$ and substantially more coexistence of the types for $\eta = 0.7$. This corresponds to the intuition gleaned from the threshold selection strengths $\lambda^*_{FS}$ (Equation \eqref{eq:lambdastarFS}) and $\lambda^*_{FD}$ (Equation \eqref{eq:lambdastarFDbvalues}) for the edges of the simplex, in which increasing the complementarity of the fast and slow genes results in the requirement of stronger between-protocell competition to sustain coexistence of both genes via multilevel selection.

\begin{figure}[ht]
    \centering
    \includegraphics[width = 0.32\textwidth]{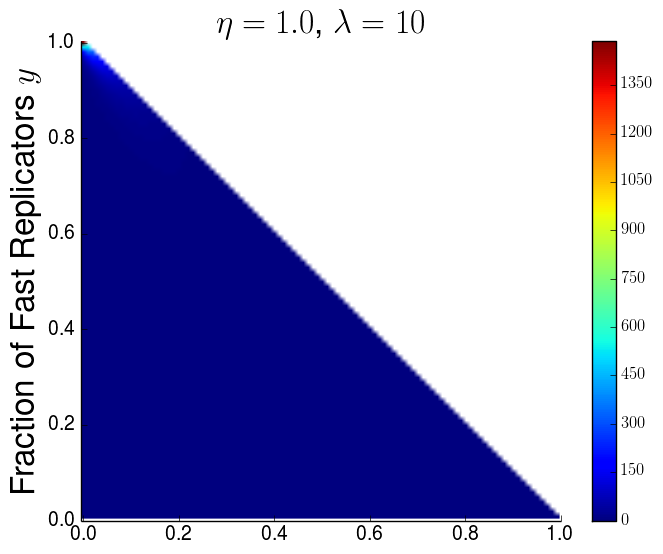}
     \includegraphics[width = 0.31\textwidth]{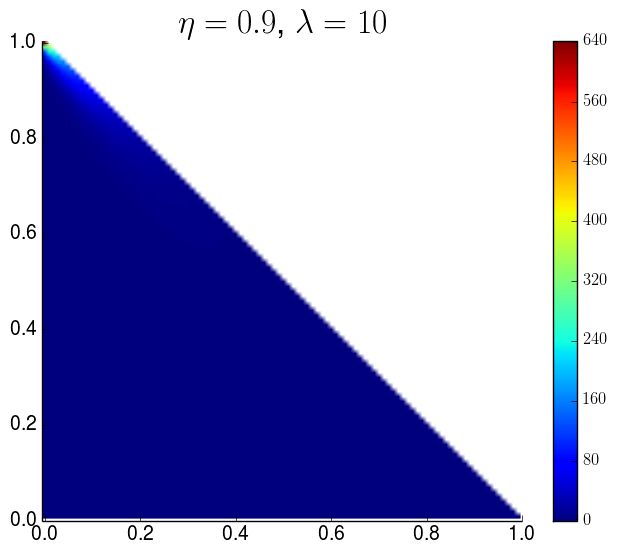}
         \includegraphics[width = 0.31\textwidth]{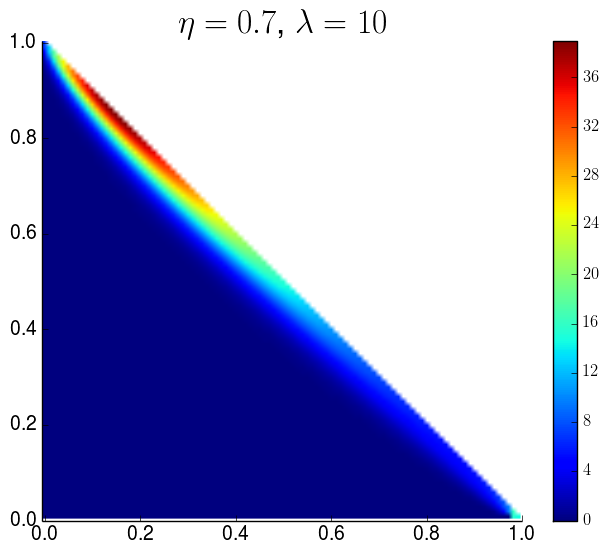}
         
       \includegraphics[width = 0.32\textwidth]{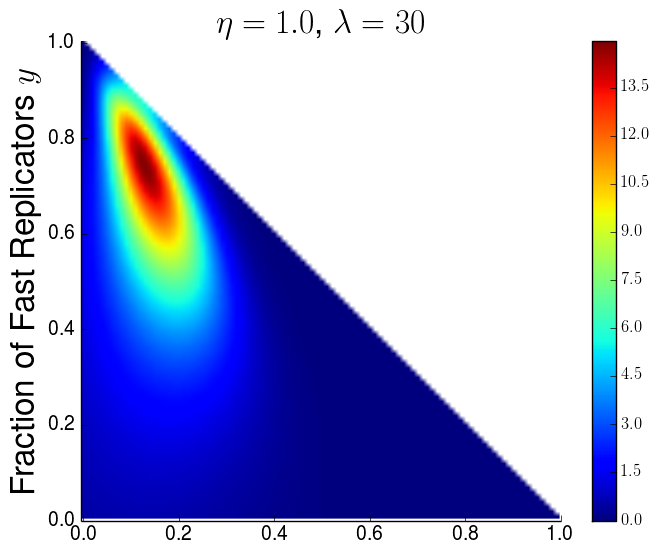}
     \includegraphics[width = 0.31\textwidth]{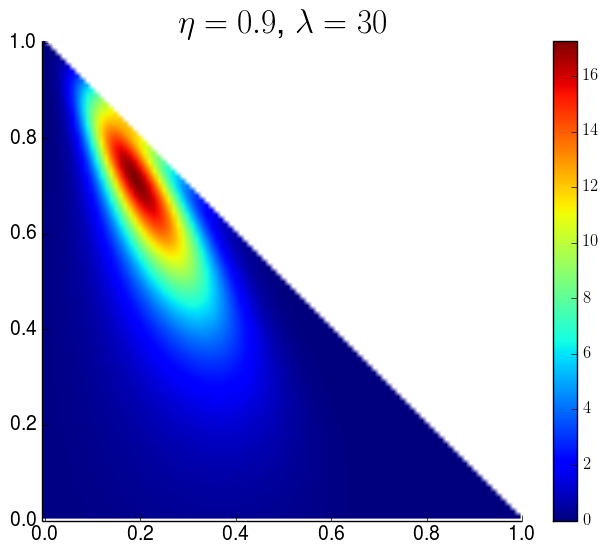}
         \includegraphics[width = 0.31\textwidth]{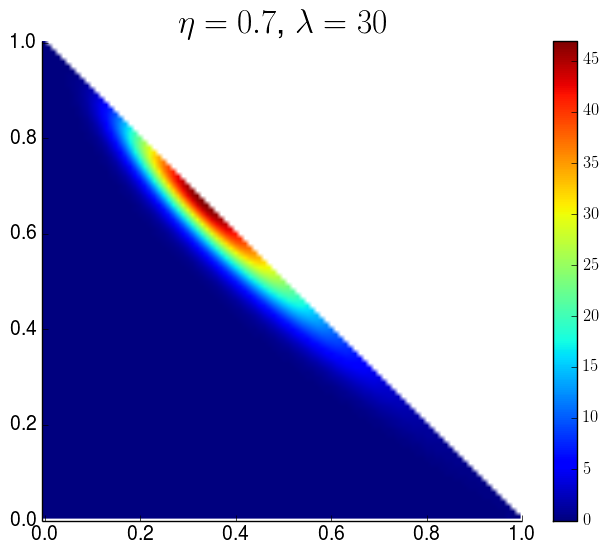} 
         
     \includegraphics[width = 0.32\textwidth]{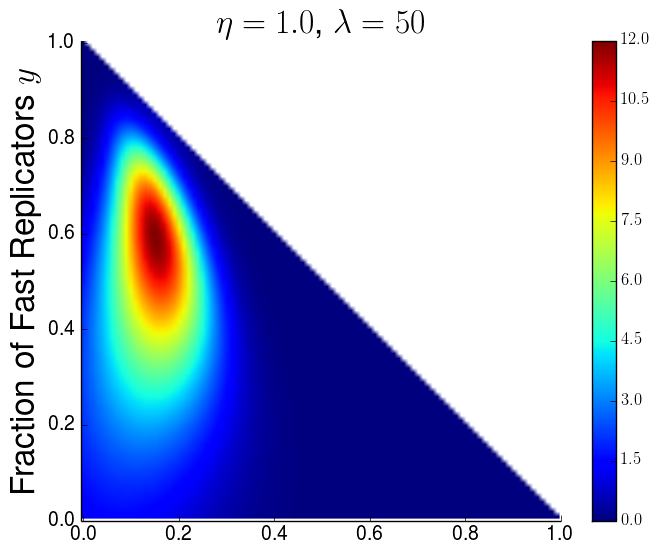}
       \includegraphics[width = 0.31\textwidth]{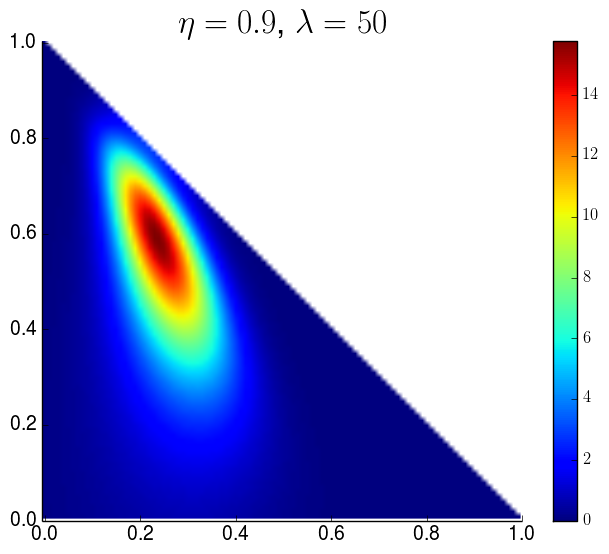}
      \includegraphics[width = 0.31\textwidth]{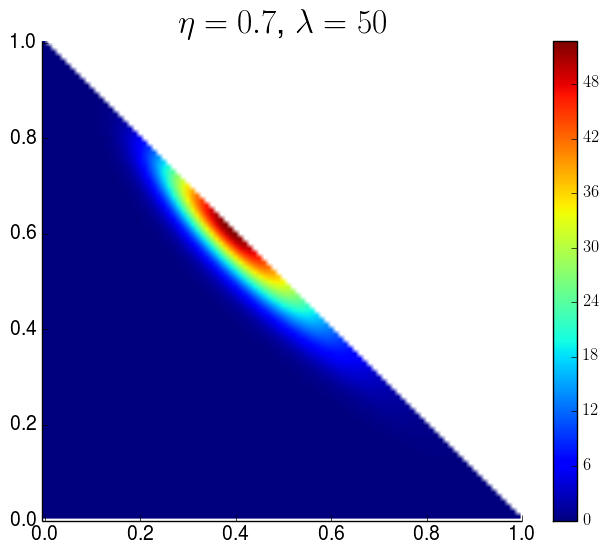}

      \includegraphics[width = 0.32\textwidth]{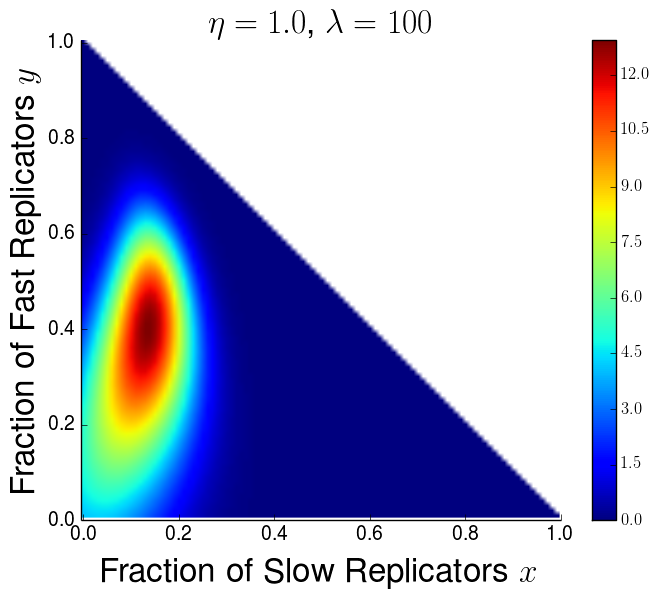}
      \includegraphics[width = 0.31\textwidth]{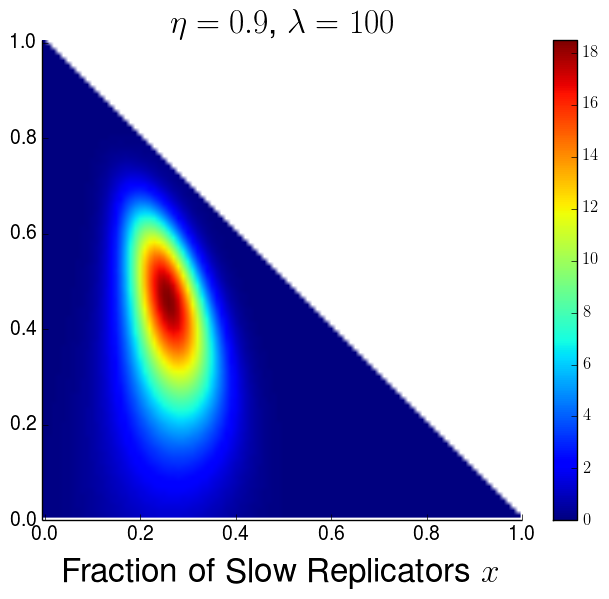}
      \includegraphics[width = 0.31\textwidth]{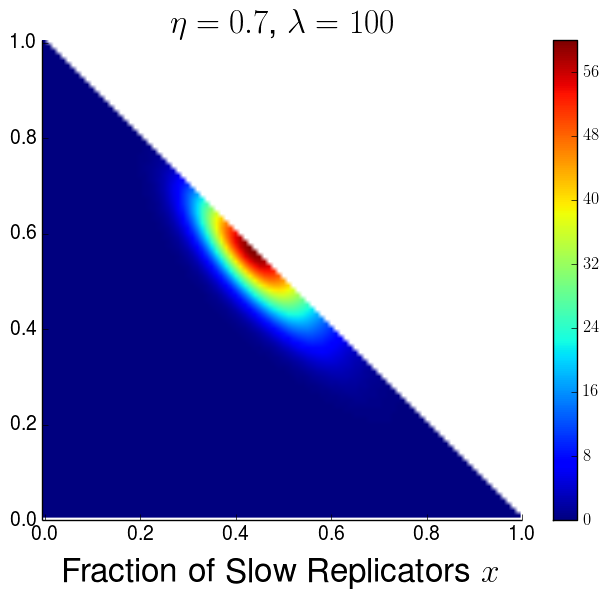}
    \caption{Numerically computed densities after 5000 time-steps with step-length of $\Delta t = 0.0015$ seconds for trimorphic multilevel dynamics with various values of between-protocell selection strength $\lambda$ and complementarity parameter $\eta$. From top to bottom, each row presents a different value of $\lambda$: $10$ (first row), $30$ (second row), $50$ (third row), and $100$ (fourth row). Each column corresponds to a different value of $\eta$: $1.0$ (left column), $0.9$ (center column), and $0.7$ (right column).}
    \label{fig:steadystatetrimorphicplots}
\end{figure}

\clearpage

For the case of $\eta = 1$, we see from Figure \ref{fig:steadystatetrimorphicplots} that a substantial fraction of the protocells are composed primarily of dimers at steady state. Dimers help to ensure coexistence of fast and slow genes when they are perfect complements for a protocell, and within-cell competition would otherwise eliminate slow genes. Having studied the ability for the multilevel dynamics to promote coexistence of the fast and slow genes, we can now quantify the impact of this coexistence upon the overall protocell-level fitness of the population. In Figure \ref{fig:Gplottrimorphic}, we plot the average protocell-level fitness $\langle G(x,y) \rangle_{\rho(x,y)}$ for the numerically computed state for the trimorphic dynamics after $5000$ time steps, and compare this to the protocell-level fitness $\langle G(z) \rangle_{g(z)}$ achieved on the fast-dimer edge of the simplex for a uniform initial condition and the same strength of between-protocell competition. We see that the behavior of the protocell-level fitness is similar for the trimporphic dynamics and for the fast-dimer dimorphic multilevel competition for the cases in which $\eta = 1$ (Figure \ref{fig:Gplottrimorphic}, left) and for $\eta = 0.7$ (Figure \ref{fig:Gplottrimorphic}, right). In both cases, we see that the protocell-level fitness for the trimporphic dynamics appears to tend to $G(0,0)$, the protocell-level fitness of the all-dimer protocell, as $\lambda$ becomes large. Therefore it appears that the long-time collective outcome cannot outperform---achieve higher average protocell-level fitness than---the fitness of an all-dimer protocell. This is true even though protocells featuring a majority of slow genes maximize the protocell-level reproduction rate for the case in which $\eta = 0.7$. This suggests that an analogous shadow of lower-level selection may hold in the case of trimorphic multilevel competition, and that the best outcomes that can be achieved by multilevel selection are the collective replication rates of compositions that are equilibria under within-protocell competition. 

\begin{figure}[ht]
    \centering
    \includegraphics[width = 0.48\textwidth]{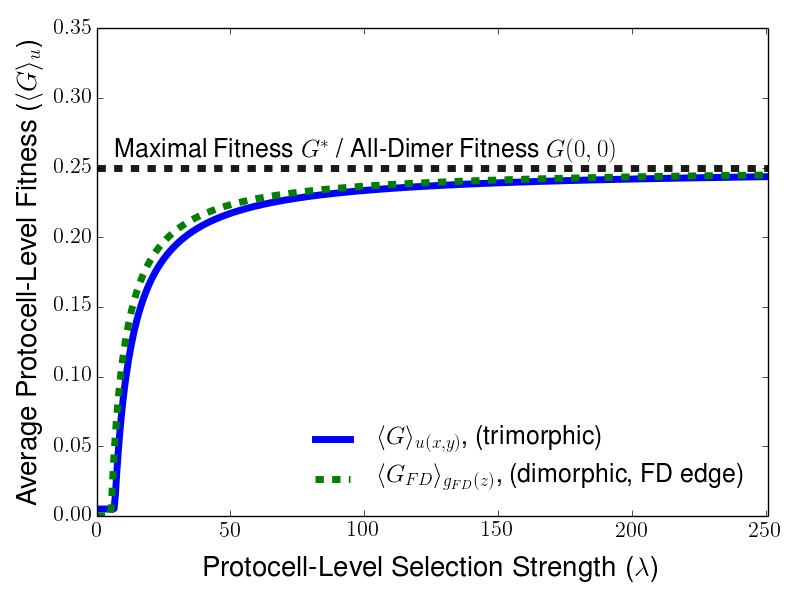}
    \includegraphics[width = 0.48\textwidth]{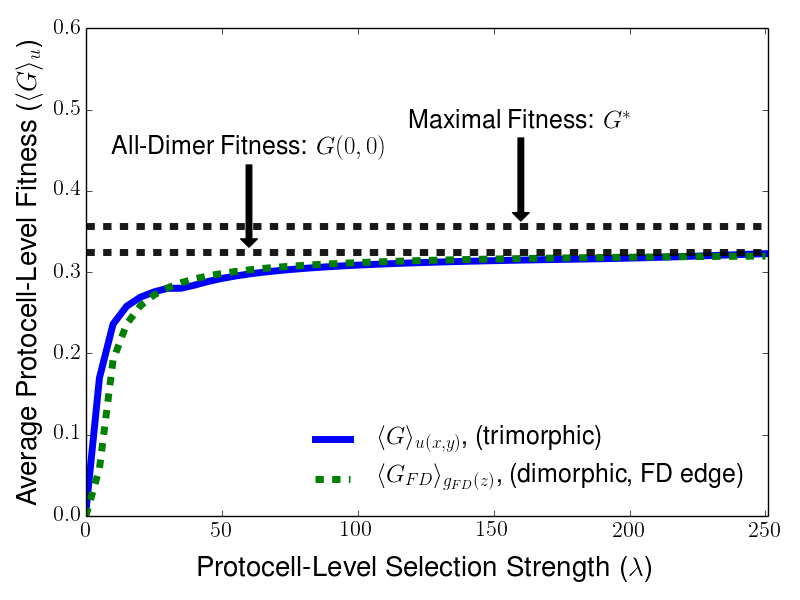}
    \caption{Numerically computed average protocell-level fitness (solid blue line) after 5000 time steps with step-length $\Delta t = 0.0015$ for complementarity parameter $\eta = 1.0$ (left) and $\eta = 0.7$ (right), plotted as a function of the relative strength of between-protocell competition $\lambda$. A comparison is provided with the average protocell-level fitness achieved at steady state (dashed green line) given by Equation \eqref{eq:FSFDavgG} for dimorphic multilevel competition on the fast-dimer edge of the simplex. For both values of $\eta$, the average protocell-level fitness for the trimorphic competition tends to the reproduction rate of the all-dimer protocell as $\lambda \to \infty$. For the case of $\eta = 0.7$, the lower horizontal dashed black line indicates the collective fitness $G(0,0)$ of an all-dimer protocell, while the higher dashed black line indicates the maximal possible collective fitness $G^* := \max_{x + y \leq 1} G(x,y)$ among possible compositions on the simplex. These two dashed lines coincide for the case in which $\eta = 1$ (and the all-dimer composition achieves the maximal protocell-level reproduction rate). }
    \label{fig:Gplottrimorphic}
\end{figure}

We also explore other quantities characterizing the support for coexistence of the fast and slow gene under the trimorphic dynamics. In Figure \ref{fig:peakandmeantrimorphic}, we display the fraction of slow gene $\rm{\%Slow} = x + \frac{z}{2}$ present in numerically computed states after $5000$ time steps of the trimorphic dynamics with $\eta = 1$ in both the most abundant protocell composition (Figure \ref{fig:peakandmeantrimorphic}, left) and averaged across all of the protocells in the population (Figure \ref{fig:peakandmeantrimorphic}, right). We see that the mean and modal fraction of slow genes increases with $\lambda$, and that the modal fraction of slow genes has good agreement with the modal composition of slow genes realized for the dimorphic dynamics on the fast-dimer edge of the simplex for an initial uniform density and the same relative selection strength $\lambda$. We see that there is less agreement between the mean fraction of slow genes under the trimorphic and fast-dimer dynamics, but a similar qualitative picture of increasing mean fraction of slow genes with increasing between-protocell competition in the direction of the optimal fifty-fifty mix of fast and slow is still present for both the dimorphic and trimorphic models.

\begin{figure}[htp!]
    \centering
    \includegraphics[width = 0.48\textwidth]{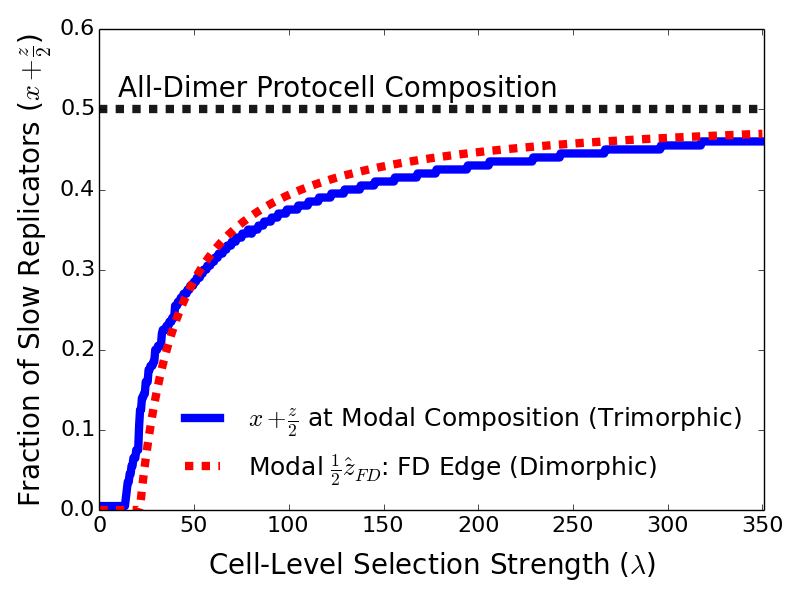}
     \includegraphics[width = 0.48\textwidth]{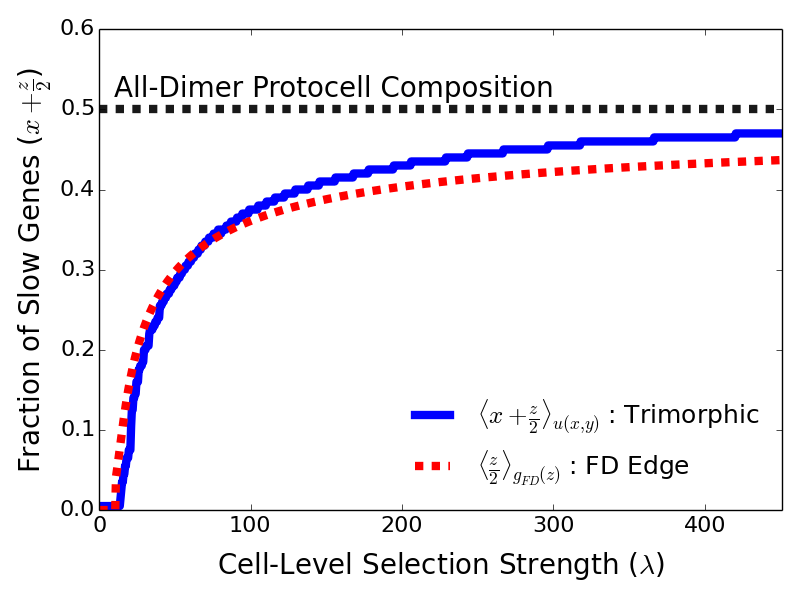}
    \caption{Mean (left) and most abundant (right) fraction of slow replicators in numerically computed trimorphic population after 5000 time steps with step-length $\Delta t = 0.0015$ for complementarity parameter $\eta = 1.0$, plotted as a function of the relative strength of between-protocell competition $\lambda$. A comparison is provided between the numerical trimorphic solutions (solid blue lines) and the mean and modal fractions of the slow gene $\frac{z}{2}$ for the steady state $g^{\lambda}_{\theta}(z)$  on the fast-dimer edge of the simplex (dashed red lines) corresponding to an initial uniform distribution. For fast-dimer dimorphic dynamics, mean fraction of slow genes is calculated numerically using Equation \eqref{eq:glambdatheta} and the modal fraction comes from the analytical formula provided by Equation \eqref{eq:modalFDlambda}. The horizontal dashed line corresponds to the portion $0.5$ of slow genes in an all-dimer protocell.}
    \label{fig:peakandmeantrimorphic}
\end{figure}

Finally, we can consider the impact on varying the complementarity parameter $\eta$ on the trimorphic dynamics for a fixed strength of between-protocell competition. In Figure \ref{fig:Getatrimporphic}, we plot, as a function of $\eta$, the average protocell-level fitness $\langle G(x,y) \rangle_{\rho(x,y)}$ for the trimorphic dynamics after $5000$ time steps (solid blue line) and the equivalent protocell-level fitnesses $\langle G_{FS}(x) \rangle_{f(x)}$ (dash-dot black line) and $\langle G_{FD}(z) \rangle_{g(z)}$ (dash-dot red line)  for dimorphic multilevel competition on the fast-slow and fast-dimer edges of the simplex. For the case with $\lambda = 10$, we see that the average protocell-level fitness for the long-time numerical trimorphic solutions roughly agrees with the average protocell level fitness for the dimorphic steady states on the fast-slow edge (for $\eta < 0.8$0) and on the fast-dimer edge (for $\eta > 0.8$). For $\lambda = 0.75$, we see a similar agreement between the average protocell-level fitness for the trimorphic long-time state and the maximal possible dimorphic collective fitnesses on the fast-slow edge (for $\eta < \frac{2}{3}$) and on the fast-dimer edge (for $\eta > \frac{2}{3}$). In addition, we see, for both values of $\lambda$ and for any complementarity parameter $\eta$, that the protocell-level fitness of the trimorphic populations does not exceed the larger of the collective reproduction rates between the all-dimer protocell and the all-slow protocell, suggesting that there may be a trimorphic analogue of the shadow of lower-level selection in which the protocell-level fitness is limited by the maximal collective fitness among the within-protocell equilibria. In addition, these numerical results suggest some connection between the collective outcomes in the trimorphic dynamics and the collective outcomes of the multilevel dynamics on edges of the simplex, which suggests the possibility that the long-term behavior of our trimorphic PDE model may be determined by a tug-of-war between the gene-level advantage of fast replicators and the collective advantage of all-dimer or all-slow protocells over protocells with an all-fast composition.     

\begin{figure}[ht]
    \centering
    \includegraphics[width = 0.48\textwidth]{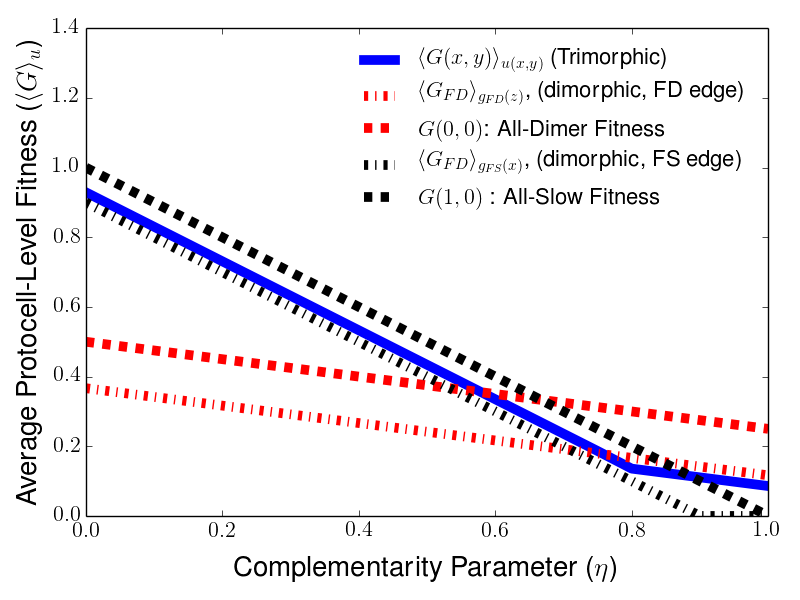}
    \includegraphics[width = 0.48\textwidth]{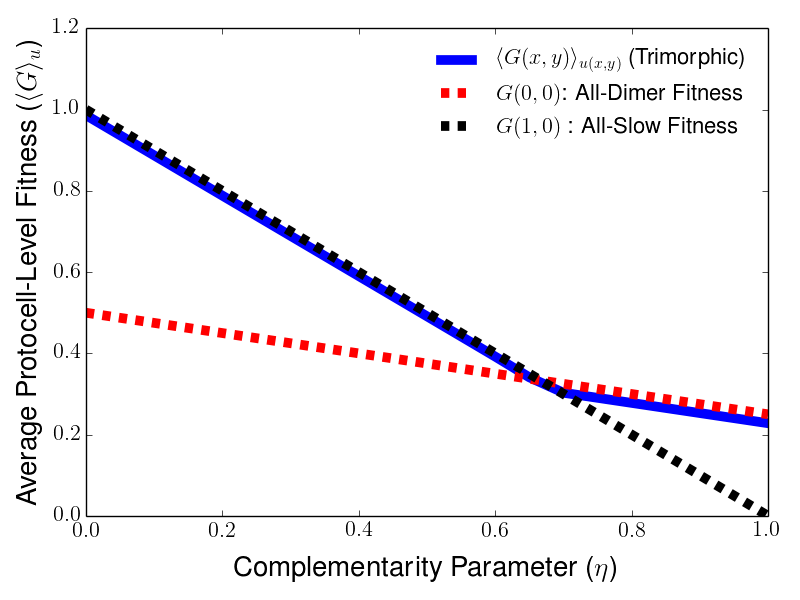}
    \caption{Numerically computed average protocell-level fitness after 5000 time steps with step-length $\Delta t = 0.0015$ (solid blue line) and relative selection strengths $\lambda = 10$ (left) and $\lambda = 75$ (right), plotted as a function of the complementarity parameter $\eta$. Comparison provided with analytical solution for average protocell fitness at steady state under dimorphic model on fast-slow (black dash-dot line) and fast-dimer (red dash-dot line) edges of the simplex for same selection strength $\lambda$ and uniform initial density, as well as the collective reproduction rate of the all-slow protocell (black dashed line) and the all-dimer protocell (red dashed line).}
    \label{fig:Getatrimporphic}
\end{figure}

\section{Discussion} \label{sec:discussion}
In this paper, we introduced a a PDE model for the evolution of protocells under multilevel selection in which there an evolutionary tension between within-protocell competition for replication among genes and the between-protocells competition for replication among the cells themselves. In particular, when there is a genetic template (a ``fast replicator'' that has a selective advantage over another (a ``slow replicator'') for replication within protocells, then no level of between-protocell competition can produce the mix of the two types of replicators that is optimal for protocell-level replication. This shadow of gene-level selection is particularly extreme for the case in which the two genetic templates are perfect complements for protocell fitness: any slight gene-level advantage for one of the two templates prevents coexistence. We then introduce the possibility of linking the two genes together as a dimer (or proto-chromosome) template, formulating a PDE model for the multilevel competition of slow, fast, and dimer replicators. In this extended model, we show through a simplified analytical approach and through numerical simulations of the full model that the presence of dimers can allow for the long-time coexistence of the slow and fast genes. Dimerization can help to overcome the shadow of lower-level selection;  chromosomes can thus play a key role in the major evolutionary transition to cellular life.

In previous work on the evolution of chromosomes, dimerization was primarily presented as a means for overcoming the possibility of stochastic loss of a necessary genetic template \cite{gabriel1960primitive,smith1993origin}. In particular, the simulations of Maynard Smith and Szathmary have shown that dimerization was most effective in protocells with low copy number, but that dimers cannot persist when their are many genetic templates per protocell \cite{smith1993origin}. The stochastic corrector mechanism \cite{szathmary1987group,grey1995re} and the package model \cite{bresch1980hypercycles,niesert1981origin} similarly fail to support the persistence of a dimer replicator with an individual-level disadvantage in the limit of many genes per protocell, and therefore different mechanisms for multilevel selection are required in the case of high copy number. For the kind of nested birth-death model introduced by Luo and coauthors \cite{luo2014unifying,van2014simple}, we have shown that dimers / proto-chromosomes can help to facilitate the evolutionary coexistence of complementary genetic templates in a deterministic, large-population PDE limit and in the presence of a fast replicator with an individual-level selective advantage. As a result, our analysis complements the work of Maynard Smith and Szathmary, illustrating that genetic linkage into proto-chromosomes can help to promote coexistence of complementary genes either as a means to overcome stochastic effects at low copy number or to overcome competitive effects at high copy number.  

\sloppy{Dimerization and further chromosome formation are a way to eliminate the individual-level competition between complementary replicators within a protocell, as sufficiently strong between-cell competition selecting for linked slow-fast dimers can eventually eliminate the need for separate slow and fast dimers whose head-to-head competition casts a long shadow on the needs cell-level viability. The benefits of dimerization bring to mind the mechanisms identified to allow for the cooperative coexistence needed to establish forms of multicellular life \cite{aktipis2015cancer}. %
Furthermore, we note that dimerization has a different impact on promoting persistence of beneficial traits via multilevel selection when compared to the mechanisms of assortment and reciprocity studied to promote cooperation in evolutionary game theory \cite{taylor2007transforming,nowak2006five,cooney2019assortment}. In particular, the game-theoretic mechanisms only decrease the individual-level advantage of defectors over cooperators, but do not improve the collective payoff of the full-cooperator group or the maximal possible long-time collective payoff \cite{cooney2019assortment}. As a result, assortment and reciprocity can decrease the level of between-group competition required for the evolution of cooperation, but cannot actually eliminate the shadow of lower-level selection and promote a collective benefit exceeding that of a full-cooperator group. By contrast, dimerization actually induces a greater individual-level disadvantage of dimer replicators in exchange for an improved collective advantage of all-dimer protocells, and therefore the linkage mechanism provided by dimerization can help to improve the best possible outcome that can be achieved by protocells under multilevel selection. This motivates further work on modeling within-group mechanisms that can work synergistically with multilevel selection to promote beneficial collective outcomes, finding more general approaches for overcoming the shadow of lower-level selection that may apply to a wide variety of biological settings. }   

The focus of this paper was on the evolution of template coexistence in protocells and on the origin of chromosomes, two major evolutionary transitions that took place early on in the evolution of complex life. However, similar problems regarding levels of selection and the coexistence and integration of genetic templates arise in modern biological systems as well. For example, it has been shown theoretically that a costly microbrial trait can evolve via multilevel selection through the benefit conferred to its host and transmission via a mix of vertical and horizontal transmission \cite{van2019role}. Multilevel selection has been attributed as a factor in evolutionary dynamics featuring transposable genetic elements such as plasmids, bacteriophages, transposons, and viruses \cite{brunet2015multilevel,iranzo2017disentangling}. In the context of viruses, public goods dilemmas can arise due to defecting interfering viral particles that ``cheat'' off the replicative mechanisms provided by full viral genomes and thus can modulate the dynamics of viral infections \cite{huang1970defective,manzoni2018defective,szathmary1992natural}, but collective infection of cells by a cohort of viral particles \cite{diaz2017sociovirology} result in either competitive exclusion \cite{turner1999prisoner} or heterotypic cooperation between complementary virus strains \cite{xue2016cooperation,turner2003escape}. The presence of higher levels of selection may also help to solve dilemmas faced by plasmids, from establishing replication control mechanisms to regulate plasmid copy number \cite{paulsson2002multileveled} to limiting horizontal gene transfer to the tragedy of the commons imposed by runaway invasion of parasitic plasmids \cite{smith2012tragedy,lopez2021modeling}.

The themes of multilevel competition and dimerization are particularly present in studying the evolution of antibiotic multi-resistance \cite{summers2006genetic}. Experiment work has shown that low levels of treatment with antibiotics and heavy metals can select for the evolution of plasmids conferring resistance to multiple drugs \cite{gullberg2014selection}. An experiment by Sachs and Bull explored the of coinfection of bacterial cells with phages carrying different antibiotic-resistance genes can result in the evolution of copackaging the two phage genomes into into a single protein coat, mediating conflict between complementary resistance genes \cite{sachs2005experimental,velicer2005benefits}. In both of these cases, treatment with multiple antibiotics created a scenario in which the presence of multiple antibiotic resistance genes is beneficial for between-bacterium competition, while maintaining the resistance gene imposes a cost upon the bacterium. There has been recent simulation work on multilevel models for antibiotic resistance and multi-resistance using a membrane computing approach \cite{campos2019simulating,campos2020simulating}, but PDE models of multilevel selection acting on complementary but competing resistance genes could provide analytical insight.

Our numerical results from Section \ref{sec:trimorphicnumerics} for the trimorphic fast-slow-dimer competition raise potentially interesting mathematical questions about the dynamics of multilevel selection with three types of individuals. In particular, for the case in which fast and slow genes are perfect complements for protocell replication, reasonable agreement was found between the average protocell-level fitness found in the numerically computed trimorphic population after many time steps and the analytically calculated steady state for the dimorphic competition on the fast-dimer edge of the simplex. This agreement suggests that perhaps some of the main characteristics of the long-time behavior of the trimorphic dynamics will resemble the main results for two-type multilevel competition illustrated in Section \ref{sec:existingresults}, such as a threshold level of between-protocell competition required to establish the long-time coexistence of fast and slow genes and a protocell-level fitness tending to the collective replication rate at the all-dimer equilibrium in the limit of infinitely strong between-protocell competition. We also saw that the numerically computed protocell-level fitness was always limited by the maximum of the reproduction rates at the all-dimer or all-slow equilibrium, suggesting perhaps that an intermediate collective optimum may not be achieved by the trimorphic dynamics. In future work, we hope to analytically explore these sorts of claims, and to understand how the long-time behavior of the trimorphic PDE depends on the initial distribution of protocell compositions. 

Another direction for future research would be to further explore and generalize the formulation of PDE models for multilevel selection with more than two types of individuals. In particular, another potential application for this type of PDE would be to study the multilevel competition between individuals playing two-strategy cooperative dilemmas who follow one of three strategies: always cooperate, always defect, or a form of conditional cooperation in which the strategy played depends on either the strategy of one's opponent or the strategic composition of the group. Such a three-type model would allow for the study of the synergy of multilevel selection and mechanisms like direct or indirect reciprocity, showing how strategies such as tit-for-tat (under direct reciprocity) \cite{imhof2005evolutionary} or stern judging (under indirect reciprocity) \cite{nowak2005indirect,pacheco2006stern} can help promote the persistence of cooperation via multilevel selection. Additional generalization of the within-group and between-group replication rates to a broader class of sufficiently regular functions could allow for a flexible model for trimorphic multilevel competition, potentially leading to insight into the tug-of-war between individual incentives and collective benefits of different traits or strategies. The finite volume numerical approach described in Section \ref{sec:finitevolume} could also be extended to incorporate a larger number of types of individuals, or to incorporate ecological constraints to allow exploration of multilevel selection in populations with variable group size \cite{simon2016group,janssen2006dynamic}. 

Between the range of possible mathematical and biological extensions of our modeling approach for deterministic multilevel selection, we see that are a variety of directions for future analytical and numerical work. In addition, we have seen that our protocell model and the mechanism of dimerization provides both solutions and further puzzles for evolutionary competition across levels of biological selection. By further examining the conflict between the interests of the individual and the group, we can further explore the necessary evolutionary mechanisms and design principles required to achieve complex, multilevel population structure, ranging from the evolution of protocells and chromosomes to collective behavior of animal groups and cooperative management of the global commons.     

\renewcommand{\abstractname}{Acknowledgments}
\begin{abstract} 
DBC received support from the National Science Foundation through grants DMS-1514606 and GEO-1211972 and from the Simons Foundation through the Math + X grant awarded to University of Pennsylvania. DBC and SAL received support from the Army Research Office through grant W911NF-18-1-0325. The authors would like to thank Erol Ak\c{c}ay, George Constable, Louis Fan, Yoichiro Mori, Joshua Plotkin, Corina Tarnita, and Carl Veller for helpful discussion. 
\end{abstract}

\bibliography{multilevelselection}
\bibliographystyle{ieeetr}

\appendix

\section{Derivation of PDE Models for Multilevel Selection from Individual-Based Nested Birth-Death Process}
\label{sec:derivation}

In this section, we present derivations of the PDEs describing the dynamics of our baseline protocell model and our model of trimorphic protocell dynamics. Our starting point is a nested birth-death process describing gene-level and protocell-level replication events in a population of $m$ protocells each composed on $n$ replicators. From this finite population stochastic process, we first derive systems of ODEs describing how the composition of protocells evolves in the limit of infinitely protocells ($m \to \infty$) each consisting of finitely many genes. Then, taking the limit of infinitely many genes per protocell ($n \to \infty$), we obtain PDE descriptions of how the density of protocell descriptions evolves in time.

In Section \ref{sec:derivationbaseline}, we derive the baseline protocell model of Equation \eqref{eq:protocellPDEG} for multilevel competition featuring fast and slow replicators. We follow the approach taken by Luo and coauthors \cite{luo2014unifying,van2014simple} and by Cooney \cite{cooney2019replicator} to derive the limiting PDE description. In Section \ref{sec:derivationtrimorphic}, we provide a similar derivation for the trimorphic dynamics described by Equation \eqref{eq:multileveltrimorphic}, extending the prior approach to account for the presence of a third type of replicator under gene-level and protocell-level competition. We note that the derivation of trimorphic dynamics is somewhat more involved than the derivation of the baseline model, particularly due to the increased number of possible gene-level events that can occur in the trimorphic setting.

\subsection{Derivation of Baseline Protocell Model}
\label{sec:derivationbaseline}

For a population of $m$ protocells each composed of $n$ genes, we denote by $\rho_{i}^{m,n}(t)$ the fraction of protocells featuring $i$ slow replicators and $n-i$ fast replicators. For the derivation of our PDE limit, we will first focus on compositions satisfying $i, n-i \geq 1$, which corresponds to the protocell compositions featuring a nontrivial mix of fast and slow replicators. For such compositions, we will now formulate how the probability $\rho_{i}^{m,n}(t)$ evolves under gene-level dynamics. In our finite population model, we assume that gene-level dynamics resemble a continuous-time Moran process. Fast and slow replicators produce copies of themselves with rates $1 + w_I b_F$ and $1 + w_I b_S$, and that these copies replace a randomly chosen replicator in the same group. 

Under these rules, the fraction of protocells $\rho_{i}^{m,n}(t)$ with $i$ slow replicators increases by $\frac{1}{m}$ due to gene-level competition if one of the following two events happens

\begin{itemize}
    \item A slow replicator is born and a fast replicator is replaced in an $(i-1)$-protocell, which occurs with rate
    \[ m \rho_{i-1}^{m,n}(t) \left( i - 1 \right) \left( \frac{n - i + 1}{n} \right) \left( 1 + w_I b_S \right) \]
    
    \item A fast replicator is born and a slow replicator is replaced in an $(i+1)$-protocell, which occurs with rate
    \[m \rho_{i+1}^{m,n}(t) \left(n - i - 1 \right) \left( \frac{i+1}{n} \right) \left( 1 + w_I b_F \right). \]
\end{itemize}
Similarly, the fraction of protocells $rho_{i}^{m,n}(t)$ with $i$ slow replicators decreases by $\frac{1}{m}$ due to gene-level competition if one of the following two events happens
\begin{itemize}
    \item A slow replicator is born and a fast replicator is replaced in an $i$-protocell, which occurs with rate
    \[ m \rho_{i}^{m,n}(t) \left( i \right) \left( \frac{n - i}{n} \right) \left( 1 + w_I b_S \right) \]
    
    \item A fast replicator is born and a slow replicator is replaced in an $i$-protocell, which occurs with rate
    \[m \rho_{i}^{m,n}(t) \left(n - i \right) \left( \frac{i}{n} \right) \left( 1 + w_I b_F \right). \]
\end{itemize}
Finally, the fraction of protocells $\rho_{i}^{m,n}(t)$ is unchanged to gene-level events in which the offspring replicator replaces a replicator of the same type. 

Turning to between-protocell compeition, we now consider a group-level birth-death process that resembles a continuous-time process featuring the $n+1$ possible group competitions. We assume that a protocell featuring a fraction $x$ slow replicators and $1-x$ fast replicators produces a copy of itself with rate $\Lambda \left( 1 + w_G G_{FS}(x) \right)$, and that the resulting offspring protocell replaces a randomly chosen protocell in the population. Under this rule for protocell-level replication events, we see that the fraction of $i$-protocells $\rho_{i}^{m,n}(t)$ increases by $\frac{1}{m}$ due to between-protocell competition when an $i$-protocell is chosen to reproduce and a protocell with a number of slow replicators other than $i$ is chosen to be replaced. This event occurs with rate
\[ \Lambda m \rho_{i}^{m,n}(t) \left[1 + w_G G_{FS}\left(\frac{i}{n}\right)   \right]  \left[ 1 - \rho_{i}^{m,n}(t) \right].\] 
Similar, the fraction $\rho_{i}^{m,n}(t)$ decreases by $\frac{1}{m}$ due to protocell-level competition when a protocell with featuring a number of slow replicators other than $i$ is chosen to reproduce and an $i$-protocell is chosen to be replaced. Such an event occurs with rate
\[\Lambda m \rho_{i}^{m,n}(t) \ds\sum_{\substack{j=1 \\ j \ne i}}^n \rho_{j}^{m,n}(t) \left[ 1 + w_G G\left( \frac{j}{n} \right)\right] .\]

Following the approach of Luo and coauthors \cite{luo2014unifying,van2014simple}, we can use these rates to calculate the infinitessimal mean of the two-level birth-death process, which yields
\begin{align} \label{eq:infmeandimorphic}
\begin{split}
M_{\Delta t} &:= E [f_{i}^{m,n}(t + \Delta t) - f_{i}^{m,n}(t)] \\
 &= \frac{1}{m} P\left(f_{i}^{m,n}(t + \Delta t) - f_{j}^{m,n}(t) = \frac{1}{m} \right) - \frac{1}{m} P\left(f_{i}^{m,n}(t + \Delta t) - f_{i}^{m,n}(t) = -\frac{1}{m} \right) + o\left(\Delta t\right)\\
 &= \frac{1}{m} \left[ m f_{i-1}^{m,n}(t) \left( i - 1 \right) \left( \frac{n - i + 1}{n} \right) \left( 1 + w_I b_S \right) \right] \Delta t \\
 &+ \frac{1}{m} \left[m \rho_{i+1}^{m,n}(t) \left(n - i - 1 \right) \left( \frac{i+1}{n} \right) \left( 1 + w_I b_F \right) \right] \Delta t \\
 &- \frac{1}{m} \left[ m f_{i}^{m,n}(t) \left( i \right) \left( \frac{n - i}{n} \right) \left( 1 + w_I b_S \right)\right] \Delta t \\ 
 &- \frac{1}{m} \left[ m f_{i}^{m,n}(t) \left(n - i \right) \left( \frac{i}{n} \right) \left( 1 + w_I b_F \right) \right] \Delta t \\
 &+ \frac{1}{m} \left\{ \Lambda m \rho_{i}^{m,n}(t) \left[1 + w_G G_{FS}\left(\frac{i}{n}\right)   \right]  \left[ 1 - \rho_{i}^{m,n}(t) \right] \right\} \\ &- \frac{1}{m} \left\{ \Lambda m \rho_{i}^{m,n}(t) \ds\sum_{\substack{j=1 \\ j \ne i}}^n \rho_{j}^{m,n}(t) \left[ 1 + w_G G_{FS}\left( \frac{j}{n} \right)\right] \right\} + o(\Delta t)
 \end{split}
 \end{align}
 Using the forward and backward first-order difference quotients
 \begin{equation} \label{eq:dimorphicfirst} D_1^+(u(\frac{i}{n})) = \frac{u(\frac{i+1}{n}) - u(\frac{i}{n})}{\frac{1}{n}} \: \:, \: \: D_1^-(u(\frac{i}{n})) = \frac{u(\frac{i}{n}) - u(\frac{i-1}{n})}{\frac{1}{n}} \end{equation}
 and the central second-order difference quotient
\begin{equation} \label{eq:dimorphicsecond}
    D_2\left(u(\frac{i}{n})\right) = \frac{u(\frac{i+1}{n}) - 2 u(\frac{i}{n}) + u(\frac{i-1}{n})}{\frac{1}{n^2}},
\end{equation}
we can rearrange Equation \eqref{eq:infmeandimorphic} to write the infinitessimal mean as
\begin{dmath} \label{eq:infmeanrearrangeddimorphic}
\frac{E [f_{i}^{m,n}(t + \Delta t) - f_{i}^{m,n}(t)]}{\Delta t} = \frac{1}{n} D_2\left( \frac{i}{n} \left(1 - \frac{i}{n}\right) f_{i}(t) \right) + \Lambda w_G f_i(t) \left( G_{FS}\left(\frac{i}{n}\right)  - \ds\sum_{j=0}^N G_{FS}\left(\frac{j}{n}\right) f_j(t) \right) + w_I \left[ b_F D_1^+ \left( \frac{i}{n} \left(1 - \frac{i}{n} \right) f_i(t) \right) - b_S D_1^- \left( \frac{i}{n}  \left(1 - \frac{i}{n} \right) f_{i}(t) \right)  \right] + \frac{o(\Delta t)}{\Delta t}.
\end{dmath}
We can also calculate the infinitessimal variance of the two-level birth death process. Noting that all of the transition rates are linear in $m$, we see that 
\begin{align}
\begin{split}
V_{\Delta t} := & E \left[ \left( \rho_{i}^{m,n}(t + \Delta t) - \rho_{i}^{m,n}(t) \right)^2 \right] \\ 
 = & \frac{1}{m^2} P\left(\rho_{i}^{m,n}(t + \Delta t) - \rho_{i}^{m,n}(t) = \frac{1}{m} \right) - \frac{1}{m^2} P\left(\rho_{i}^{m,n}(t + \Delta t) - \rho_{i}^{m,n}(t) = \frac{1}{m} \right) \\ 
 = & \frac{1}{m^2} \left[O(m) \right] + o(\Delta t).
\end{split}
\end{align}
Therefore we can calculate that
\begin{equation}
    \ds\lim_{m \to \infty} E \left[ \left( \rho_{i,j}^{m,n}(t + \Delta t) - \rho_{i,j}^{m,n}(t) \right)^2 \right]  = 0,
\end{equation}
and the infinitessimal variance vanishes as $m \to \infty$. Therefore, in this limit, the distribution of $f_{i}^n(t) := \lim_{m \to \infty} f_{i}^{m,n}(t)$ is a constant equal to its mean $E[f_{i}^n(t)]$. This means that taking the limit of infinitely many protocells removes the randomness from the multilevel dynamics, and, we can look to describe the deterministic evolution of the fraction of protocells $f_{i}^n(t)$ in this limit. Taking the limit of both sides of Equation \eqref{eq:infmeanrearrangeddimorphic} as $m \to \infty$ and using the linearity of expectation, we obtain
\begin{dmath}
    \frac{f^n_i(t+\Delta t) - f^n_i(t)}{ \Delta t} = \frac{1}{n} D_2\left( \frac{i}{n} \left(1 - \frac{i}{n}\right) f_{i}(t) \right) + \Lambda w_G f_i(t) \left( G{FS}\left(\frac{i}{n}\right)  - \ds\sum_{j=0}^N G_{FS}\left(\frac{j}{n}\right) f_j(t) \right) + w_I \left[ b_F D_1^+ \left( \frac{i}{n} \left(1 - \frac{i}{n} \right) f_i(t) \right) - b_S D_1^- \left( \frac{i}{n}  \left(1 - \frac{i}{n} \right) f_{i}(t) \right)  \right] + \frac{o(\Delta t)}{\Delta t}.
\end{dmath}
Furthermore, in a limit as $\Delta t \to 0$, we obtain the following ODE for the evolution of $f^n_i(t)$
\begin{dmath} \label{eq:fniODE}
    \dsddt{f^n_i(t)} = \frac{1}{n} D_2\left( \frac{i}{n} \left(1 - \frac{i}{n}\right) f_{i}(t) \right) + \Lambda w_G f_i(t) \left( G_{FS}\left(\frac{i}{n}\right)  - \ds\sum_{j=0}^N G_{FS}\left(\frac{j}{n}\right) f_j(t) \right) + w_I \left[ b_F D_1^+ \left( \frac{i}{n} \left(1 - \frac{i}{n} \right) f_i(t) \right) - b_S D_1^- \left( \frac{i}{n}  \left(1 - \frac{i}{n} \right) f_{i}(t) \right)  \right]. 
\end{dmath}
The system of ODEs characterized by Equation \eqref{eq:fniODE} for $i \in \{1,\cdots,n-1\}$ can be paired with the following differential equations derived for $f^n_0(t)$ and $f^n_n(t)$, take into effect the absorbing nature of the boundary at the all-slow and all-fast compositions,
\begin{subequations} \label{eq:fniboundary}
\begin{align}
    \dsddt{f^n_0(t)} &= \left(\frac{ n - 1}{n} \right) \left( 1 + w_I b_F \right) f^n_1(t) + \Lambda w_G f^n_0(t) \left(G_{FS}(0)- \sum_{j=0}^n G_{FS} \left(\frac{j}{n} \right) f^n_j(t) \right) \\
    \dsddt{f^n_n(t)} &= \left(\frac{ n - 1}{n} \right) \left( 1 + w_I b_S \right) f^n_{n-1}(t) + \Lambda w_G f^n_n(t) \left(G_{FS}(1) - \sum_{j=0}^n G_{FS} \left(\frac{j}{n} \right)  f^n_j(t) \right).
\end{align}
\end{subequations}

Solutions to Equation \eqref{eq:fniODE} and \eqref{eq:fniboundary} can be used to understand the dynamics of multilevel selection for the our baseline model when there are infinitely many protocells with a finite number $n$ of genes per protocell. 

Finally, we can further take the limit as the number of genes per protocell tends to infinity. In this limit, we will describe the fraction of slow replicators in a protocell by $x$ and the distribution of groups with $x$ slow replicators (and $1-x$ fast replicators) at time $t$ by the probability density $f(t,x)$. Taking the limit on both sides of Equation \eqref{eq:fniODE} as $n \to \infty$ (and correspondingly $\frac{i}{n} \to x$ and $f_i^n(t) \to f(t,x)$), we can use the difference quotients from Equation \eqref{eq:dimorphicfirst} and \eqref{eq:dimorphicsecond} to see the distribution of protocell compositions evolves according to the following PDE
\begin{dmath} \label{eq:protocellPDEprelim}
\dsdel{f(t,x}{t} = w_I \left(b_F - b_S\right) \dsdel{}{x} \left[x (1-x) f(t,x)\right]  + \Lambda w_G f(t,x) \left[G_{FS}(x) - \int_0^1 G_{FS}(y) f(t,y) dy \right].
\end{dmath}
In particular, we notice that the impact of neutral birth-death events encoded by the second-order difference quotient $D_2(\cdot)$ present in Equation \eqref{eq:fniODE} vanishes in the limit as $n \to \infty$. Using Equation \eqref{eq:dimorphicsecond}, we can see that the large-$n$ limit of the second order difference quotient corresponds to the Kimura diffusion operator \cite{kimura1955solution,kimura1984evolution} given by
\begin{equation}
    \ds\lim_{n \to \infty} D_2\left(u(t,\frac{i}{n})\right) = \ds\lim_{n \to \infty} \frac{u(t,\frac{i+1}{n}) - 2 u(t,\frac{i}{n}) + u (t,\frac{i-1}{n})}{\frac{1}{n^2}} = \frac{\partial^2}{\partial x^2} \left[ x(1-x) u(t,x) \right].
\end{equation}
However, the factor of $\frac{1}{n}$ multiplying the second-order difference quotient in Equation \eqref{eq:fniODE} guarantees that $\frac{1}{n} D_2(\frac{i}{n} (1 - \frac{i}{n}) f_i^n(t)) \to 0$ as $n \to \infty$. For other possible scalings of the number of protocells $m$ or the number of genes per protocell $n$ as $m,n \to \infty$ can allow for either deterministic diffusive effects \cite{velleret2019two} or a Fleming-Viot stochastic process \cite{luo2017scaling} as other possible large population limits of our two-level stochastic process. 

Returning to the PDE description of Equation \eqref{eq:protocellPDEprelim}, we can further clarify the relative strength of competition at the protocell levels by dividing both sides of the equation by $w_I$ and rescaling time as $\tau := \frac{t}{w_I}$. Introducing the parameter $\lambda := \frac{\Lambda w_G}{w_I}$, we can that the density $\rho(\tau,x)$ evolves according to the PDE
\begin{equation} \label{eq:baselineprotocellappendix}
    \dsdel{f(\tau,x)}{\tau} = \left(b_F - b_S \right) \dsdel{}{x} \left[ x (1-x) f(\tau,x) \right] + \lambda f(\tau,x) \left[G_{FS}(x) - \int_0^1 G_{FS}(y) f(\tau,y) dy \right].
\end{equation}
By plugging in the gene-level replication rates $b_S = 1$ and $b_F = 1 + s$ and replacing $\tau$ with $t$, we see that Equation \eqref{eq:baselineprotocellappendix} takes the form of the baseline protocell PDE of Equation \eqref{eq:protocellPDEG} studied in Section \ref{sec:protocell}. 

\begin{remark}
The form of the relative selection strength $\lambda = \frac{\Lambda w_G}{w_I}$ suggests that there are multiple routes to having gene-level or protocell-level competition that is dominant under the multilevel dynamics. In particular, having relatively weak between-protocell competition (small $\lambda$) can occur because of strong gene-level selection (large $w_I$) or because between-group competition replication events are rare (small $\Lambda$) or have weak dependence on protocell-level competition (small $w_I$). Similarly, this means that the limit of strong relative between-protocell competition ($\lambda \to \infty$) could arise as the result of extremely weak gene-level selection ($w_I \to 0$), and so the large-$\lambda$ can also be thought of as describing the dynamics of multilevel selection in the limit as gene-level competition becomes neutral. 
\end{remark}

\subsection{Derivation of Trimorphic (Fast-Slow-Dimer) Protocell Model}
\label{sec:derivationtrimorphic}

For a protocell consisting of $i$ slow replicators, $j$ fast replicators, and $n - i -j$ dimer replicators, we denote the state of the protocell by $(i,j)$ and the fraction of the $m$ total protocells with this composition at time $t$ by $\rho_{i,j}^{m,n}(t)$. To derive the behavior in our PDE limit, we will focus on group compositions satisfying $i,j, n-i-j \geq 1$, which describe the dynamics for compositions featuring all three types of replicators. %
For such compositions, we now study the gene-level dynamics that resembles a continuous-time Moran process with three types (fast, slow, and dimer replicators). We assume that slow, fast, and dimer replicators produce copies of themselves with rates $1 + w_I b_S$, $1 + w_I b_F$, and $1 + w_I b_D$, respectively, and that the copy replaces a randomly chosen gene within the same protocell. In particular, we see that $\rho_{i,j}^{m,n}(t)$ increases by $\frac{1}{m}$ if one of the following six events occurs
\begin{itemize}
    \item A slow is born and replaces a dimer in an $(i-1,j)$-protocell, which occurs with rate \vspace{-2mm}
\[m \rho_{i-1,j}^{m,n}(t) \left(i - 1\right) \left( 1 + w_I b_S\right) \left( \frac{n - i -j + 1}{n} \right)\]
    \item  \vspace{-3mm} A slow is born and replaces a fast in an $(i-1,j+1)$-protocell, which occurs with rate \vspace{-2mm} 
    \[m \rho_{i-1,j+1}^{m,n}(t) \left(i-1\right) \left( 1 + w_I b_S \right) \left( \frac{j+1}{n} \right)\]
    \item \vspace{-3mm} A dimer is born and replaces a slow in an $(i+1,j)$-protocell, which occurs with rate \vspace{-2mm} 
    \[ m \rho_{i+1,j}^{m,n}(t) \left( n - i - j - 1\right) \left( 1 + w_I b_D \right) \left( \frac{i+1}{n} \right) \]
     \item \vspace{-3mm} A fast is born and replaces a slow in an $(i+1,j-1)$-protocell, which occurs with rate \vspace{-2mm} 
    \[ m \rho_{i+1,j-1}^{m,n}(t) \left( j - 1\right) \left( 1 + w_I b_F \right) \left( \frac{i+1}{n} \right) \]
    \item \vspace{-3mm} A fast is born and replaces a slow in an $(i,j-1)$-protocell, which occurs with rate \vspace{-2mm} 
    \[ m \rho_{i,j-1}^{m,n}(t) \left( j - 1\right) \left( 1 + w_I b_F \right) \left( \frac{n-i-j+1}{n} \right) \]
    \item \vspace{-3mm} A dimer is born and replaces a fast in an $(i,j+1)$-protocell, which occurs with rate \vspace{-2mm} 
    \[ m \rho_{i,j+1}^{m,n}(t) \left( n - i - j + 1\right) \left( 1 + w_I b_D \right) \left( \frac{j+1}{n} \right). \]
\end{itemize}
For interior protocell compositions, $\rho_{i,j}^{m,n}(t)$ decreases by $\frac{1}{m}$ if one of the following six events happens
\begin{itemize}
    \item \vspace{-3mm} A slow is born and replaces a fast in an $(i,j)$-protocell, which occurs with rate \vspace{-2mm} 
    \[ m \rho_{i,j}^{m,n}(t) i \left(1 + w_I b_S \right) \left(\frac{j}{n} \right)  \]
     \item \vspace{-3mm} A slow is born and replaces a dimer in an $(i,j)$-protocell, which occurs with rate \vspace{-2mm} 
    \[ m \rho_{i,j}^{m,n}(t) i \left(1 + w_I b_S \right) \left(\frac{n-i-j}{n} \right)  \]
     \item \vspace{-3mm} A fast is born and replaces a slow in an $(i,j)$-protocell, which occurs with rate \vspace{-2mm} 
    \[ m \rho_{i,j}^{m,n}(t) j \left(1 + w_I b_F \right) \left(\frac{i}{n} \right)  \]
     \item \vspace{-3mm} A fast is born and replaces a dimer in an $(i,j)$-protocell, which occurs with rate \vspace{-2mm} 
    \[ m \rho_{i,j}^{m,n}(t) j \left(1 + w_I b_F \right) \left(\frac{n-i-j}{n} \right)  \]
    \item \vspace{-3mm} A dimer is born and replaces a slow in an $(i,j)$-protocell, which occurs with rate \vspace{-2mm} 
    \[ m \rho_{i,j}^{m,n}(t) \left( n - i - j \right) \left(1 + w_I b_D \right) \left(\frac{i}{n} \right)  \]
    \item \vspace{-3mm} A dimer is born and replaces a fast in an $(i,j)$-protocell, which occurs with rate \vspace{-2mm} 
    \[ m \rho_{i,j}^{m,n}(t) \left(n-i-j\right) \left(1 + w_I b_D \right) \left(\frac{j}{n} \right).  \]
\end{itemize}
Finally, the fraction of protocells $\rho_{i,j}^{m,n}(t)$ with state $(i,j)$ is unchanged by gene-level birth-death events when the replicator that is born replaces a replicator with the same type.

Now we turn to between-protocell competition, which is modeled as a continuous-time Moran process featuring all of the different possible protocell compositions. We assume that a protocell featuring fractions $x$ slow replicators, $y$ fast replicators, and $1-x-y$ dimer replicators produces a copy of itself with rate $\Lambda \left( 1 + w_I G(x,y) \right)$, with this offspring protocell replacing a randomly chosen protocell. 
Under this rule for protocell-level replication, we see that the fraction of $(i,j)$-protocells $\rho_{i,j}^{m,n}(t)$ increases by $\frac{1}{m}$ due to between-protocell competition when an $(i,j)$-protocell is chosen to reproduce and a protocell with a composition other than $(i,j)$ is chosen to be replaces. This event occurs with rate 
\[ \Lambda m \rho_{i,j}^{m,n}(t) \left[ 1 +  w_G  G_{FS}\left(\tfrac{i}{n},\frac{j}{n}\right) \right] \left[1 - \rho_{i,j}^{m,n}(t) \right].  \]
Similarly, $\rho_{i,j}^{m,n}(t)$ decreases by $\frac{1}{m}$ when a protocell with composition other than $(i,j)$ is chosen to reproduce and an $(i,j)$-protocell is chosen to be replaced. Such an event occurs with rate
\[ \Lambda m \rho_{i,j}^{m,n}(t)  \ds\sum_{\substack{k=1 \\ k \ne i}}^{n} \ds\sum_{\substack{l = 1 \\ l \ne j}}^{n - i}  \rho_{k,l}^{m,n}(t) \left[ 1 + w_G G\left(\tfrac{k}{n},\frac{l}{n} \right) \right]  \]

As in the derivation from Section \ref{sec:derivationbaseline} , we can use these rates to calculate the infinitessimal mean of the two-level birth-death process, which yields
\begin{align} \label{eq:infmean}
\begin{split}
M_{\Delta t} &:= E [\rho_{i,j}^{m,n}(t + \Delta t) - \rho_{i,j}^{m,n}(t)] \\
 &= \frac{1}{m} P\left(\rho_{i,j}^{m,n}(t + \Delta t) - \rho_{i,j}^{m,n}(t) = \frac{1}{m} \right) - \frac{1}{m} P\left(\rho_{i,j}^{m,n}(t + \Delta t) - \rho_{i,j}^{m,n}(t) = -\frac{1}{m} \right) + o\left(\Delta t\right)\\
 &= \frac{1}{m} \left[m \rho_{i-1,j}^{m,n}(t) \left(i - 1\right) \left( 1 + w_I b_S\right) \left( \frac{n - i -j + 1}{n} \right) \right] \Delta t \\ 
 &+ \frac{1}{m} \left[ m \rho_{i-1,j+1}^{m,n}(t) \left(i-1\right) \left( 1 + w_I b_S \right) \left( \frac{j+1}{n} \right) \right] \Delta t \\
 &+ \frac{1}{m} \left[m \rho_{i+1,j}^{m,n}(t) \left( n - i - j - 1\right) \left( 1 + w_I b_D \right) \left( \frac{i+1}{n} \right) \right] \Delta t \\
  &+ \frac{1}{m} \left[m \rho_{i+1,j-1}^{m,n}(t) \left( j - 1\right) \left( 1 + w_I b_F \right) \left( \frac{i+1}{n} \right) \right] \Delta t \\
   &+ \frac{1}{m} \left[ m \rho_{i,j-1}^{m,n}(t) \left( j - 1\right) \left( 1 + w_I b_F \right) \left( \frac{n-i-j+1}{n} \right)\right] \Delta t \\
    &+ \frac{1}{m} \left[m \rho_{i,j+1}^{m,n}(t) \left( n - i - j + 1\right) \left( 1 + w_I b_D \right) \left( \frac{j+1}{n} \right) \right] \Delta t \\
    &- \frac{1}{m} \left[m \rho_{i,j}^{m,n}(t) i \left(1 + w_I b_S \right) \left(\frac{j}{n} \right) \right] \Delta t 
      - \frac{1}{m} \left[m \rho_{i,j}^{m,n}(t) i \left(1 + w_I b_S \right) \left(\frac{n-i-j}{n} \right) \right] \Delta t \\
     &- \frac{1}{m} \left[m \rho_{i,j}^{m,n}(t) j \left(1 + w_I b_F \right) \left(\frac{i}{n} \right) \right] \Delta t 
      - \frac{1}{m} \left[m \rho_{i,j}^{m,n}(t) j \left(1 + w_I b_F \right) \left(\frac{n-i-j}{n} \right) \right] \Delta t \\
       &- \frac{1}{m} \left[m \rho_{i,j}^{m,n}(t) \left( n - i - j \right) \left(1 + w_I b_D \right) \left(\frac{i}{n} \right) \right] \Delta t - \frac{1}{m} \left[m \rho_{i,j}^{m,n}(t) \left(n-i-j\right) \left(1 + w_I b_D \right) \left(\frac{j}{n} \right) \right] \Delta t \\
       &+ \frac{1}{m} \left\{\Lambda m \rho_{i,j}^{m,n}(t) \left[ 1 +  w_G  G\left(\frac{i}{n},\frac{j}{n}\right) \right] \left[1 - \rho_{i,j}^{m,n}(t) \right] \right\} \Delta t \\ 
       &-\frac{1}{m} \left\{ \Lambda m \rho_{i,j}^{m,n}(t)  \ds\sum_{\substack{k=1 \\ k \ne i}}^{n} \ds\sum_{\substack{l = 1 \\ l \ne j}}^{n - i}  \rho_{k,l}^{m,n}(t) \left[ 1 + w_G G\left(\frac{k}{n},\frac{l}{n} \right) \right] \right\} \Delta t + o\left( \Delta t \right). 
       \end{split}
\end{align}
To further study the infinitessimal mean, we can rearrange Equation \eqref{eq:infmean} to group terms based upon the type of birth-death events. Simplifying the terms describing between-protocell competition and denoting by $C_{N}(m,n)$, $C_{S}(m,n)$, $C_{F}(m,n)$, and $C_D(m,n)$ the contributions to infinitessimal mean due to within-protocell events driven by neutral births and selective births of slow, fast, and dimer replicators, we can rewrite Equation \eqref{eq:infmean} as
\begin{dmath} \label{eq:infmeanA}
\frac{E [\rho_{i,j}^{m,n}(t + \Delta t) - \rho_{i,j}^{m,n}(t)]}{\Delta t} = C_N(m,n) + w_I \left[b_S C_S(m,n) + b_F C_F(m,n) + b_D  C_D(m,n) \right] + \Lambda w_G \rho_{i,j}^{m,n}(t) \left[ G\left(\tfrac{i}{n},\tfrac{j,}{n} \right) - \ds\sum_{\substack{k=1 \\ k \ne i}}^{n} \ds\sum_{\substack{l = 1 \\ l \ne j}}^{n - i}  \rho_{k,l}^{m,n}(t) G\left(\tfrac{k}{n},\tfrac{l}{n} \right)  \right] + \frac{o\left(\Delta t\right)}{\Delta t},
\end{dmath}
where 
\begin{subequations} \label{eq:Cexpressions}
     \begin{align} \label{eq:CN}
     \begin{split}
      \frac{C_N(m,n)}{n} := &  \left(\frac{i-1}{n} \right) \left(\frac{n-i-j}{n} \right) \rho_{i-1,j}^{m,n}(t) +  \left(\frac{i-1}{n}\right) \left(\frac{j+1}{n} \right) \rho_{i-1,j+1}^{m,n}(t) \\ 
      + & \left( \frac{i+1}{n} \right) \left(\frac{j-1}{n} \right) \rho_{i+1,j-1}^{m,n}(t) + \left( \frac{n-i-j+1}{n} \right) \left( \frac{j-1}{n} \right) \rho_{i,j-1}^{m,n}(t) \\
      + & \left(\frac{n-i-j-1}{n} \right) \left( \frac{i+1}{n} \right) \rho_{i+1,j}^{m,n}(t) + \left(\frac{n-i-j-1}{n} \right) \left(\frac{j+1}{n} \right) \rho_{i,j+1}^{m,n}(t) \\
      -& 2 \left[\left( \frac{i}{n} \right) \left(\frac{j}{n} \right) + \left( \frac{i}{n} \right) \left(\frac{n-i-j}{n} \right) + \left( \frac{j}{n} \right) \left(\frac{n-i-j}{n} \right) \right] \rho_{i,j}^{m,n}(t)
      \end{split} 
     \end{align}
     \begin{align} \label{eq:CS}
     \begin{split}
         \frac{C_S(m,n)}{n} := & \left(\frac{i-1}{n} \right) \left(\frac{n-i-j+1}{n} \right) \rho_{i-1,j}^{m,n}(t) + \left( \frac{i-1}{n} \right) \left( \frac{j+1}{n} \right) \rho_{i-1,j+1}^{m,n}(t) \\
         -&  \left( \frac{j}{n} \right) \left(\frac{n-i-j}{n} \right)\rho_{i,j}^{m,n}(t) - \left( \frac{n-i-j}{n}\right) \left( \frac{i}{n} \right) \rho_{i,j}^{m,n}(t)
         \end{split}
     \end{align}
    \begin{align} \label{eq:CF}
        \begin{split}
            \frac{C_F(m,n)}{n} := & \left( \frac{i+1}{n} \right) \left(\frac{j-1}{n} \right) \rho_{i+1,j-1}^{m,n}(t) + \left( \frac{n-i-j+1}{n} \right) \left( \frac{j-1}{n} \right) \rho_{i,j-1}^{m,n}(t) \\
            -&  \left( \frac{j}{n} \right) \left(\frac{i}{n} \right) \rho_{i,j}^{m,n}(t) - \left( \frac{j}{n} \right) \left(\frac{n-i-j}{n} \right) \rho_{i,j}^{m,n}(t)
        \end{split}
    \end{align}
    \begin{align} \label{eq:CD}
        \begin{split}
        \frac{C_D(m,n)}{n} := & \left(\frac{n-i-j-1}{n} \right) \left( \frac{i+1}{n} \right) \rho_{i+1,j}^{m,n}(t) + \left(\frac{n-i-j-1}{n} \right) \left(\frac{j+1}{n} \right) \rho_{i,j+1}^{m,n}(t) \\
        - & \left( \frac{n-i-j}{n}\right) \left( \frac{i}{n} \right) \rho_{i,j}^{m,n}(t) - \left(\frac{n-i-j}{n} \right) \left( \frac{j}{n} \right) \rho_{i,j}^{m,n}(t).
        \end{split}
    \end{align}
\end{subequations}
Furthermore, we can take the limit of both sides of Equation \eqref{eq:infmeanA} as $\Delta t \to 0$ to obtain
\begin{dmath} \label{eq:infmeanAdeltat}
\ds\lim_{\Delta t \to 0} \frac{E [\rho_{i,j}^{m,n}(t + \Delta t) - \rho_{i,j}^{m,n}(t)]}{\Delta t} = C_N(m,n) + w_I \left[b_S C_S(m,n) + b_F C_F(m,n) + b_D C_D(m,n) \right] + \Lambda w_G \rho_{i,j}^{m,n}(t) \Bigg[ G\left(\tfrac{i}{n},\tfrac{j,}{n} \right) - \ds\sum_{\substack{k=1 \\ k \ne i}}^{n} \ds\sum_{\substack{l = 1 \\ l \ne j}}^{n - i}  \rho_{k,l}^{m,n}(t) G\left(\tfrac{k}{n},\tfrac{l}{n} \right)  \Bigg].
\end{dmath}
As in Section \ref{sec:derivationbaseline}, we can calculate the infinitessimal variance of the two-level birth death process. Noting that all of the transition rates are linear in $m$, we see that 
\begin{align}
\begin{split}
V_{\Delta t} := & E \left[ \left( \rho_{i,j}^{m,n}(t + \Delta t) - \rho_{i,j}^{m,n}(t) \right)^2 \right] \\ 
 = & \frac{1}{m^2} P\left(\rho_{i,j}^{m,n}(t + \Delta t) - \rho_{i,j}^{m,n}(t) = \frac{1}{m} \right) - \frac{1}{m^2} P\left(\rho_{i,j}^{m,n}(t + \Delta t) - \rho_{i,j}^{m,n}(t) = \frac{1}{m} \right) \\ 
 = & \frac{1}{m^2} \left[O(m) \right] + o(\Delta t).
\end{split}
\end{align}
Therefore we can deduce that 
\begin{equation}
    \ds\lim_{m \to \infty} E \left[ \left( \rho_{i,j}^{m,n}(t + \Delta t) - \rho_{i,j}^{m,n}(t) \right)^2 \right]  = 0,
\end{equation}
and the infinitessimal variance vanishes as $m \to \infty$. Therefore, in this limit, the distribution of $\rho_{i,j}^n(t) := \lim_{m \to \infty} \rho_{i,j}^{m,n}(t)$ is a constant equal to its mean $E[\rho_{i,j}^n(t)]$. This means that taking the limit of infinitely many protocells removes the randomness from the multilevel dynamics, and we can look to describe the evolution of the fraction of protocells $\rho_{i,j}^n(t)$ whose $n$ replicators have the composition $(i,j)$ through a system of ODEs. 

Noting that the righthand side of Equation \eqref{eq:infmeanAdeltat} is independent of $m$, we can take the limit of both sides of the equation and use the fact that $E[\rho_{i,j}^n(t+\Delta t) - \rho_{i,j}^n(t)] = \rho_{i,j}^n(t+\Delta t) - \rho_{i,j}^n(t)$ to obtain the following system of ODEs 
\begin{dmath} \label{eq:ODEA}
\dsddt{\rho_{i,j}^n(t)} = C_N(n) + w_I \left[b_S C_S(n) + b_F C_F(n) + b_D C_D(n) \right] + \Lambda w_G \rho_{i,j}^n(t) \Bigg[ G\left(\tfrac{i}{n},\tfrac{j,}{n} \right) - \ds\sum_{\substack{k=1 \\ k \ne i}}^{n} \ds\sum_{\substack{l = 1 \\ l \ne j}}^{n - i}  \rho_{k,l}^n(t) G\left(\tfrac{k}{n},\tfrac{l}{n} \right)  \Bigg],
\end{dmath}
\sloppy{where the expressions for $C_N(n) = \lim_{m \to \infty} C_N(m,n)$, $C_S(n) = \lim_{m \to \infty} C_S(m,n)$, $C_F(n) = \lim_{m \to \infty} C_F(m,n)$, $C_D(n) = \lim_{m \to \infty} C_D(m,n)$ can be obtained by taking the limit of Equation \eqref{eq:Cexpressions} as $m \to \infty$.}

To derive the large population limits of our nested birth-death process, we look to express the infinitessimal mean in terms of difference quotients taken in both the $x$ and $y$ directions. In particular, we may find forward first-order difference quotients in the $x$ and $y$ direction
\begin{equation} \label{eq:forwarddiff} D_{1,x}^+\left(u(\tfrac{i}{n},\tfrac{j}{n})\right) := \frac{u(\tfrac{i+1}{n},\tfrac{j}{n}) - u(\tfrac{i}{n},\tfrac{j}{n})}{\frac{1}{n}} \: \:, \: \: D_{1,y}^+\left(u(\tfrac{i}{n},\tfrac{j}{n})\right) := \frac{u(\tfrac{i}{n},\tfrac{j+1}{n}) - u(\tfrac{i}{n},\tfrac{j}{n})}{\frac{1}{n}}, \end{equation}
backward first-order difference quotients in the $x$ and $y$ directions
\begin{equation} \label{eq:backwarddiff} D_{1,x}^-\left(u(\tfrac{i}{n},\tfrac{j}{n})\right) := \frac{u(\tfrac{i}{n},\tfrac{j}{n}) - u(\tfrac{i-1}{n},\tfrac{j}{n})}{\frac{1}{n}} \: \:, \: \:  D_{1,y}^-\left(u(\tfrac{i}{n},\tfrac{j}{n})\right) := \frac{u(\tfrac{i}{n},\tfrac{j}{n}) - u(\tfrac{i}{n},\tfrac{j-1}{n})}{\frac{1}{n}}. \end{equation}
We can now use Equations \eqref{eq:CS}, \eqref{eq:forwarddiff}, and \eqref{eq:backwarddiff} to see that $C_S(n)$ can be written as
\begin{subequations} \label{eq:Cfirstorderdiff}
\begin{align} \label{eq:CSdiff}
\begin{split}
  C_S(n) &= - n \left[ \left( \frac{n-i-j}{n} \right) \left(\frac{i}{n} \right) \rho_{i,j}^n(t) - \left( \frac{n-i-j+1}{n} \right) \left( \frac{i-1}{n} \right) \rho_{i-1,j}^n(t) \right] \\ &+  n \left(\frac{i-1}{n} \right) \left[\left(\frac{j+1}{n} \right) \rho_{i-1,j+1}^n(t) - \left(\frac{j}{n} \right) \rho_{i-1,j}^n \right] \\
  &- n \left(\frac{j}{n} \right) \left[\left(\frac{i}{n}\right) \rho_{i,j}^n(t) -\left( \frac{i-1}{n} \right) \rho_{i-1,j}^n(t) \right] \\ 
  &= - D_{1,x}^{-}\left(\left[ \frac{n-i-j}{n} \right]\frac{i}{n} \rho_{i,j}^n(t) \right)  + \left(\frac{i-1}{n} \right) D_{1,y}^+\left(\frac{j}{n} \rho_{i-1,j}^n(t) \right) - \left(\frac{j}{n} \right) D_{1,x}^{-}\left( \frac{i}{n} \rho_{i,j}^n(t)\right) \\ 
  &= -  D_{1,x}^{-}\left(\left[\frac{n-i}{n} \right] \frac{i}{n} \rho_{i,j}^n(t)\right) + \left(\frac{i-1}{n} \right) D_{1,y}^+\left(\frac{j}{n} \rho_{i-1,j}^n(t) \right).
 \end{split}
\end{align}

Using a similar approach, we can obtain the following expressions for $C_F(n)$ and $C_D(n)$ using first-order difference quotients
\begin{align} \label{eq:CFdiff}
\begin{split}
C_F(n) = \left(\frac{j-1}{n} \right) D_{1,x}^+\left(\frac{i}{n} \rho_{i,j-1}^n(t) \right) - D_{1,y}^-\left(\left[\frac{n-j}{n}\right]\frac{j}{n} \rho_{i,j}^n(t) \right) %
\end{split}
\end{align}
\begin{align} \label{eq:CDdiff}
\begin{split}
C_D(n) = D_{1,x}^+\left(\left[\frac{n-i-j}{n} \right] \left[\frac{i}{n}\right]  \rho_{i,j}^n(t)\right) + D_{1,y}^+\left(\left[\frac{n-i-j}{n} \right] \left[\frac{j}{n}\right]  \rho_{i,j}^n(t)\right).
\end{split}
\end{align}
\end{subequations}

To further understand the impact of neutral within-protocell birth events (as described by $C_N$), we will also have to introduce second-order difference quotients. We use central second-order difference quotients in the $x$ and $y$ directions
\begin{subequations} \label{eq:secondorderdifference}
\begin{align} D_{2,xx}^{c}\left(u(\tfrac{i}{n},\tfrac{j}{n})\right) &:= \frac{u(\tfrac{i-1}{n},\tfrac{j}{n}) - 2 u(\tfrac{i}{n},\tfrac{j}{n}) + u(\tfrac{i+1}{n},\tfrac{j}{n})}{\frac{1}{n^2}} \\ D_{2,yy}^{c}\left(u(\tfrac{i}{n},\tfrac{j}{n})\right) &:= \frac{u(\tfrac{i}{n},\tfrac{j-1}{n}) - 2 u(\tfrac{i}{n},\tfrac{j}{n}) + u(\tfrac{i}{n},\tfrac{j+1}{n})}{\frac{1}{n^2}},
\end{align}
\end{subequations}
and also consider a mixed partial derivative taking the form
\begin{equation} \label{eq:mixedpartialdifference}
D_{1,x}^c \left( D_{1,y}^- \left(u(\tfrac{i}{n},\tfrac{j}{n}) \right) \right) := \frac{u(\tfrac{i+1}{n},\tfrac{j}{n}) - u(\tfrac{i-1}{n},\tfrac{j}{n}) - u(\tfrac{i+1}{n},\tfrac{j-1}{n}) + u(\tfrac{i-1}{n},\tfrac{j-1}{n})}{\frac{2}{n^2}}.  
\end{equation}
The choice of neutral difference in the $x$-direction and backward difference in the $y$-direction arises from the terms of $C_N$. Using Equations \eqref{eq:CN}, \eqref{eq:secondorderdifference}, and \eqref{eq:mixedpartialdifference}, we can write $n C_N(n)$ as 
\begin{align} \label{eq:CNdiff}
    \begin{split}
        n C_N(n) &=  D_{2,xx}^c \left( \left[\frac{i}{n} \right] \left[\frac{n-i-j}{n} \right] \rho_{i,j}^n(t) \right) + D_{2,yy}^c \left( \left[\frac{j}{n} \right] \left[\frac{n-i-j}{n} \right] \rho_{i,j}^n(t) \right) \\ 
        &+  \left(\frac{i-1}{n} \right) \left(\frac{j+1}{n} \right) \rho_{i-1,j+1}^n(t) + \left( \frac{i+1}{n} \right) \left( \frac{j-1}{n}\right) \rho_{i+1,j-1}^n(t) - 2 \left( \frac{i}{n} \right) \left( \frac{j}{n} \right) \rho_{i,j}^n(t)  \\
        &=    D_{2,xx}^c \left( \left[\frac{i}{n} \right] \left[\frac{n-i-j}{n} \right] \rho_{i,j}^n(t) \right) + D_{2,xx}^c \left( \left[\frac{i}{n} \right] \left[\frac{j}{n} \right] \rho_{i,j}^n(t) \right) \\
       &+ D_{2,yy}^c \left( \left[\frac{j}{n} \right] \left[\frac{n-i-j}{n} \right] \rho_{i,j}^n(t) \right) + D_{2,yy}^c \left( \left[\frac{j}{n} \right] \left[\frac{i-1}{n} \right] \rho_{i,j}^n(t) \right) + \left(\frac{i-1}{n} \right) \left(\frac{j}{n} \right) \rho_{i-1,j}^n(t) \\ 
       &- \left(\frac{i-1}{n} \right) \left(\frac{j-1}{n} \right) \rho_{i-1,j-1}^n(t) - \left(\frac{i+1}{n} \right) \left(\frac{j}{n} \right) \rho_{i+1,j}(t) + \left(\frac{i+1}{n} \right) \left(\frac{j-1}{n} \right) \rho_{i+1,j-1}^n(t) \\
       &= D_{2,yy}^c \left( \left[\frac{j}{n} \right] \left[\frac{n-i}{n} \right] \rho_{i,j}^n(t) \right) - 2 D_{1,x}^c \left( D_{1.y}^- \left( \left[\frac{i}{n}\right] \left[ \frac{j}{n}\right] \rho_{i,j}^n(t)  \right) \right) \\
       &+  D_{2,yy}^c \left( \left[\frac{j}{n} \right] \left[\frac{n-i-j}{n} \right] \rho_{i,j}^n(t) \right) + D_{2,yy}^c \left( \left[\frac{j}{n} \right] \left[\frac{i-1}{n} \right] \rho_{i,j}^n(t) \right)
    \end{split}
\end{align}
In the limit as $n \to \infty$ (and correspondingly $\frac{i}{n} \to x$, $\frac{j}{n} \to y$, and $\rho_{i,j}^n(t) \to \rho(t,x,y)$) and we obtain the following limiting expression for $n C_N(n)$
\begin{dmath}  \label{eq:CNlimit}
 \lim_{n \to \infty} n C_N(n) = \frac{\partial^2}{\partial x^2} \left(x(1-x) \rho(t,x,y) \right)  + \frac{\partial^2}{\partial y^2} \left(y(1-y) \rho(t,x,y) \right)  - 2 \frac{\partial^2}{\partial x \partial y} \left(xy \rho(t,x,y) \right),
\end{dmath}
which is the trimorphic version of the Kimura diffusion operator that arises to describe the role of individual-level noise in models of population genetics and evolutionary game theory  \cite{kimura1955solution,kimura1964diffusion,epstein2010wright,epstein2020some,souza2009evolution}. As a result, we can think of the term $C_N(n)$ as describing the diffusive effects caused by the background birth rate of $1$ for each type of replicator under within-protocell competition. 

Having obtained expressions for $C_N(n)$, $C_S(n)$, $C_F(n)$, and $C_D(n)$ in terms of first- and second-order difference quotients, we can now apply the formulas from Equations \eqref{eq:CSdiff}, \eqref{eq:CFdiff}, \eqref{eq:CDdiff}, and \eqref{eq:CNdiff} to Equation \eqref{eq:ODEA} to write our ODE for $\rho_{i,j}^n(t)$ as
\begin{align} \label{eq:infmeandiff}
\begin{split}
 \dsddt{\rho_{i,j}(t)}&= 
  \frac{1}{n} \left[ D_{2,yy}^c \left( \left[\frac{j}{n} \right] \left[\frac{n-i}{n} \right] \rho_{i,j}^n(t) \right) - 2 D_{1,x}^c \left( D_{1.y}^- \left( \left[\frac{i}{n}\right] \left[ \frac{j}{n}\right] \rho_{i,j}^n(t)  \right) \right) \right] \\
  &+ \frac{1}{n} \left[ D_{2,yy}^c \left( \left[\frac{j}{n} \right] \left[\frac{n-i-j}{n} \right] \rho_{i,j}^n(t) \right) + D_{2,yy}^c \left( \left[\frac{j}{n} \right] \left[\frac{i-1}{n} \right] \rho_{i,j}^n(t) \right) \right] \\
  &+ w_I b_S \left[\left(\frac{i-1}{n} \right) D_{1,y}^+\left(\frac{j}{n} \rho_{i-1,j}^n(t) \right) -  D_{1,x}^{-}\left( \left[\frac{n-i}{n} \right]\frac{i}{n} \rho_{i,j}^n(t)\right)  \right] \\
  &+ w_I b_F \left[  \left(\frac{j-1}{n} \right) D_{1,x}^+\left(\frac{i}{n} \rho_{i,j-1}^n(t) \right) - D_{1,y}^-\left[\left[\frac{n-i-j}{n}\right]\frac{j}{n} \rho_{i,j}^n(t) \right) \right] \\
  &+ w_I b_D \left[ D_{1,x}^+\left(\left[\frac{n-i-j}{n} \right] \left[\frac{i}{n}\right]  \rho_{i,j}^n(t)\right) + D_{1,y}^+\left(\left[\frac{n-i-j}{n} \right] \left[\frac{j}{n}\right]  \rho_{i,j}^n(t)\right) \right] \\
  &+ \Lambda w_G \rho_{i,j}^n(t) \Bigg[ G\left(\tfrac{i}{n},\tfrac{j,}{n} \right) - \ds\sum_{\substack{k=1 \\ k \ne i}}^{n} \ds\sum_{\substack{l = 1 \\ l \ne j}}^{n - i}  \rho_{k,l}^n(t) G\left(\tfrac{k}{n},\tfrac{l}{n} \right)   \Bigg]
  \end{split}
\end{align}
We can use this system of ODEs, along with equations derived in a similar manner to describing the evolution of $\rho_{i,j}(t)$ for compositions $(i,j)$ location on the boundary of the three-type simplex to describe the evolution of slow-fast-dimer replicator compositions in a population with $n$ protocells that each contain infinitely many replicators. In future work, we will explore the dynamics of this system to understand how the diffusive effects of the term $C_N(n)$ can impact the possibility of coexistence of fast and slow genes in populations with finitely many protocells. 

Next we look to describe the limit as $n \to \infty$, exploring the multilevel dynamics in an infinite population of protocells each having infinitely many replicators. In this limit, we will describe the fraction of replicators within a given protocell of slow, fast, and dimer, and will describe the distribution of replicator compositions within the population of protocells through the density $\rho(t,x,y)$. Then, taking the limit as $n \to \infty$ on both sides as $n \to \infty$ (and correspondingly $\frac{i}{n} \to x$, $\frac{j}{n} \to y$, and $\rho_{i,j}^n(t) \to \rho(t,x,y)$), we can replace our difference quotients with partial derivatives to obtain the following PDE for the evolution of the composition of the population of protocells
\begin{dmath} \label{eq:triPDEfirst}
\dsdel{\rho(t,x,y)}{t} = w_I b_S \left\{ x \dsdel{}{y} \left[ y \rho(t,x,y)\right] -  \dsdel{}{x} \left[ \left( 1 - x \right) x \rho(t,x,y) \right] \right\} 
+ w_I b_F \left\{ y \dsdel{}{x}\left[x \rho(t,x,y)\right] -  \dsdel{}{y} \left[\left(1 - x - y \right) y \rho(t,x,y) \right] \right\}
+ w_I b_D \left\{ \dsdel{}{x} \left[\left(1-x-y \right) x \rho(t,x,y) \right] + \dsdel{}{y}\left[ \left( 1 - y \right) y \rho(t,x,y) \right] \right\}
+ \lambda \rho(t,x,y) \left[ G(x,y) - \int_0^1 \int_0^{1-x} G(u,v) \rho(t,u,v) dv du \right]
\end{dmath}
We note from Equation \eqref{eq:CNlimit} that the terms with second-order difference quotients arising from the neutral birth-death events vanish in the limit as $n \to \infty$, so the limiting hyperbolic PDE of Equation \eqref{eq:triPDEfirst} describes the deterministic multilevel birth-death dynamics from the protocell model obtained by taking the limit as the number of protocells $m \to \infty$ and then taking the limit as the number of replicators per protocell $n \to \infty$. In future work, it may be of interest to consider alternate scaling limits of $m$ and $n$ that either retains random effects (such as a Fleming-Viot process \cite{luo2017scaling}) or retains a Kimura diffusion operator \cite{velleret2019two,velleret2020individual} in the large-population limit. 

We can further divide both sides by $w_I$, rescale time as $\tau := \frac{t}{w_I}$, introduce the new parameter $\lambda := \frac{w_G}{w_I} \Lambda$, and rearrange terms on the righthand side of Equation \eqref{eq:triPDEfirst} to obtain the following PDE for $\rho(\tau,x,y)$
\begin{dmath} \label{eq:triPDEsecond}
\dsdel{\rho(\tau,x,y)}{\tau} = - \dsdel{}{x} \left[x \left\{ b_S - b_D + \left(b_D - b_S \right) x + \left( b_D - b_F\right) y \right\}  \rho(\tau,x,y) \right] \\
- \dsdel{}{y} \left[ y \left\{ b_F - b_D + \left( b_D - b_S \right) x + \left( b_D - b_F \right) y \right\} \rho(\tau,x,y)\right]  \\
+ \lambda  \rho(\tau,x,y) \left[ G(x,y) - \int_0^1 \int_0^{1-x} G(u,v) \rho(\tau,u,v) dv du \right].
\end{dmath}
By replacing our rescaled time variable $\tau$ with $t$, we see that this becomes Equation \eqref{eq:multileveltrimorphic}, the PDE replicator equation for multilevel selection in the trimorphic protocell introduced in Section \ref{sec:trimorphicformulation}.

\section{Derivation of Finite Volume Discretization} \label{sec:finitevolume}

In this section, we present the finite volumes schemes that are used to compute numerical solutions to the multilevel protocell dynamics both for pairwise competition on the edges of the simplex and for the full fast-slow-dimer trimorphic competition on the simplex. In Section \ref{sec:fvdimorphic}, we present the one-dimensional upwind finite volume scheme used to generate the time-dependent solutions illustrated in Figure \ref{fig:fsfdtrajectorycompare} for the fast-slow and fast-dimer edges of the simplex. In that sub-section, we also show that that the long-time behavior for numerical solutions under this scheme starting from an initial uniform distribution feature good agreement with a family of steady state densities presented in Section \ref{sec:existingresults}. In Section \ref{sec:fvtrimorphic}, we present the derivation of the finite volume scheme for the full trimorphic multilevel dynamics, using upwinding and properties of the gene-level replicator dynamics to solve the dynamics using a relatively simple spatial discretization of the three-type simplex. 

\subsection{Finite Volume Scheme for Two-Type Dynamics}
\label{sec:fvdimorphic}

In this section, we present the finite volume scheme used to generate the trajectories presented in Figure \ref{fig:fsfdtrajectorycompare} comparing the dynamics on the fast-slow and fast-dimer edges of the simplex. Such schemes are used to study hyperbolic PDEs \cite{leveque2002finite}, and have been derived for models of multilevel selection in the case in which within-group and between-group birth rates depend on the payoff of a two-strategy evolutionary game \cite[Section 5.5]{cooney2020pde}. We will compute the states achieved by numerical solutions of our finite volume approximation after a large number of time steps, showing that the states achieved from an initial uniform distribution have good qualitative agreement with the density steady states with H{\"o}lder exponent $\theta = 1$ achieved by the long-time behavior of the corresponding PDE models studied in Sections \ref{sec:protocellongtime} and \ref{sec:fastdimer}. In particular, the ability of the finite volume scheme to reproduce the analytically calculated PDE steady states for initial uniform distributions (with corresponding H{\"o}lder exponent $\theta = 1$ near the all-slow or all-dimer equilibrium) for two-type dynamics provides some heuristic motivation for choosing initial uniform densities as the baseline numerical scenario for the trimorphic fast-slow-dimer dynamics studied in Section \ref{sec:trimorphicnumerics}.

For the dynamics of our baseline protocell model on the fast-slow edge of the simplex, we describe the composition of protocells using the discretized density $\{f_{j}(t)\}_{j \in \{0,\cdot,N-1\}}$, where $f_j(t)$ describes the volume-average $f_j(t) = \left(\frac{1}{x_{j+1} - x_j}\right) \int_0^1 f(t,x) dx$ of the density $f(t,x)$ on the volume $[x_j,x_{j+1}] = [\frac{j}{n},\frac{j+1}{n}]$. Using an upwind finite volume scheme, the discretized density $f_j(t)$ evolves according to the following ODE
\begin{dmath} \label{eq:PDEwithgridFS}
   \dsddt{f_j(t)} = s N \left[  x_{j+1} (1 - x_{j+1}) f_{j+1} - x_{j} (1 - x_{j}) f_j \right] + \lambda f_j \left[G^j_{FS} - \ds\sum_{k=0}^{N-1} G^k_{FS} f_k \right], 
\end{dmath}
where the discretized protocell-level reproduction rates $G^j_{FS}$ correspond to the average of the reproduction rate $G_{FS}(x)$ on thevolume $[x_j,x_{j+1}] = [\frac{j}{N},\frac{j+1}{N}]$, which is given by 
\begin{dmath}
    {G^j_{FS} := \left(\frac{1}{x_{j+1} - x_j}\right) \int_{x_j}^{x_{j+1}} G_{FS}(x) dx = N \int_{\frac{j}{N}}^{\frac{j+1}{N}} x \left( 1 - \eta x \right) dx } = %
    \frac{1}{2} \left( \frac{2j + 1}{N} \right) -  \frac{\eta}{3} \left( \frac{3 j^2 + 3 j + 1}{N^2} \right).
\end{dmath}

In Figure \ref{fig:numericalfvfsdensities}, we provide a comparison between the numerical solutions of Equation \eqref{eq:PDEwithgridFS} after 5000 time steps with step-size $\Delta t = 0.003$ seconds and the analytically computed steady states $p^{\lambda}_{\theta}(x)$ from Equation \eqref{eq:plambdathetaFS}. We see good agreement between the long-time numerical solutions achieved from an initial uniform solution with the PDE steady states with H{\"o}lder exponent $\theta = 1$ near $x=1$, both for the case in which between-protocell competition most favors the all-slow composition ($\eta = \frac{1}{3}$, left) and in which between-protocell competition most favors an interior mix featuring a fraction $x^*_{FS} = \frac{3}{4}$ slow replicators and $1 - x^*_{FS} = \frac{1}{4}$ slow replicators. The choice of H{\"o}lder exponent $\theta = 1$ is motivated by the fact that our finite volume method approximates the density $f(t,x)$ by the piecewise contant density characterized by the values $\{f_j(t)\}_{ j \in \{0,1,\cdots,N-1\}}$, and therefore discretized densities with positive weight $f_{N-1}(t) > 0$ on the volume $[\frac{N-1}{N},1]$ behave like a uniform density in terms of H{\"o}lder exponent near $x=1$ \cite[Section 5.5]{cooney2021long}.

\begin{figure}[htp!]
    \centering
    \includegraphics[width=0.48\textwidth]{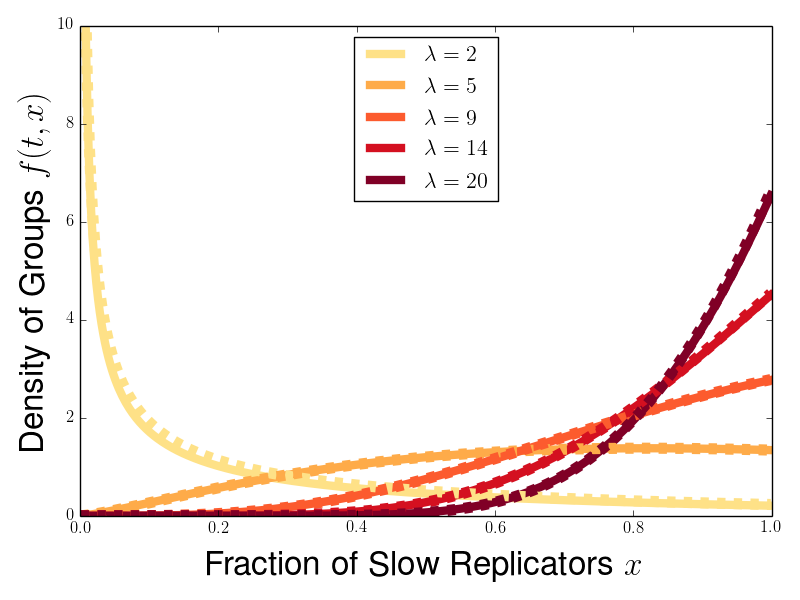}
    \includegraphics[width=0.48\textwidth]{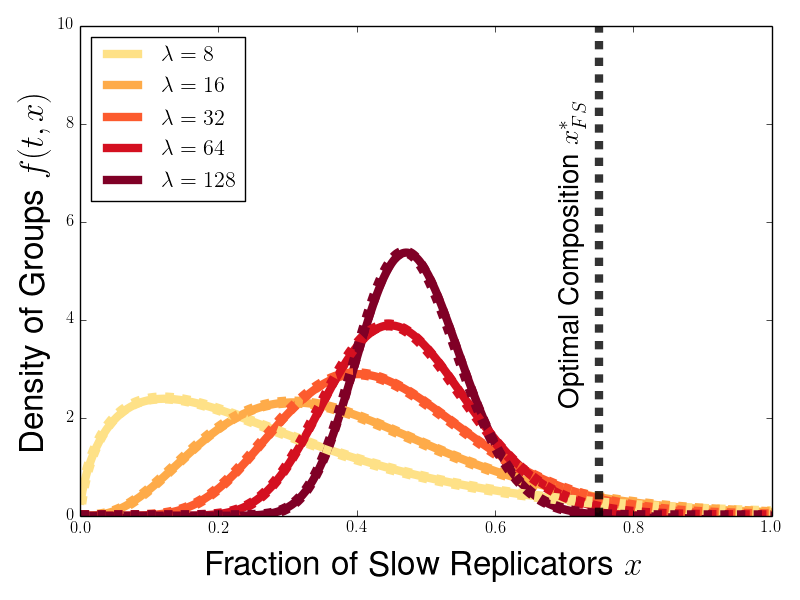}
    \caption{Comparison of numerical solutions for finite volume scheme from Equation \eqref{eq:PDEwithgridFS} after many time steps (solid lines) with steady state densities from Equation \eqref{eq:plambdathetaFS} for the baseline fast-slow protocell model (dashed lines), plotted for $s=1$ and various values of $\lambda$. Numerical solutions $\{f_j(t)\}_{\{j \in 0,\cdots,N-1\}}$ displayed after 5000 time steps of forward Euler method with step-size $\Delta t = 0.003$ starting from uniform initial distribution, while analytical steady states $p^{\lambda}_{\theta}(x)$ are provided for H{\"o}lder exponent $\theta = 1$  near $x=1$. Comparison is provided for complementarity parameters $\eta = \frac{1}{3}$ (left), in which collective reproduction is maximized by all-slow protocell, and $\eta = \frac{2}{3}$ (right), in which protocell-level reproduction is maximized by a mix of 75 percent slow replicators and 25 percent fast replicators. Dashed black vertical line in right panel corresponds to the interior fraction of slow replicators $x^*{FS} = \frac{3}{4}$ which maximizes protocell-level reproduction when $\eta = \frac{2}{3}$.}
    \label{fig:numericalfvfsdensities}
\end{figure}
For the dynamics on the fast-dimer edge of the simplex, we similarly describe the composition of protocells using the discretized density $\{g_{j}(t)\}_{j \in \{0,\cdot,N-1\}}$, through the volume-average $g_j(t) = \left(\frac{1}{z_{j+1} - z_j}\right) \int_0^1 g(t,z) dz$ of the density $g(t,z)$ on the volume $[z_j,z_{j+1}] = [\frac{j}{n},\frac{j+1}{n}]$. In our upwind finite volume method, the $g_j(t)$ evolves according to the following ODE 
\begin{dmath} \label{eq:PDEwithgridFD}
   \dsddt{g_j(t)} = \left(b_F - b_D\right)  N \left[  z_{j+1} (1 - z_{j+1}) g_{j+1} - z_{j} (1 - z_{j}) g_j \right] + \lambda g_j \left[G^j_{FD} - \ds\sum_{k=0}^{N-1} G^k_{FD} g_k \right], 
\end{dmath}
where the discretized protocell-level reproduction rate $G^j_{FD}$ on the fast-dimer edge corresponds to the volume-average of $G_{FD}(z)$ on $[z_j,z_{j+1}] = [\frac{j}{n},\frac{j+1}{n}]$, which is given by
\begin{equation}
    {G^j_{FD} := \left(\frac{1}{z_{j+1} - z_j}\right) \int_{z_j}^{z_{j+1}} G_{FD}(z) dz = N \int_{\frac{j}{N}}^{\frac{j+1}{N}} \frac{1}{2} \left( 1 - \frac{\eta z}{2} \right) dz } = %
    \frac{1}{4} \left( \frac{2j + 1}{N} \right) -  \frac{\eta}{12} \left( \frac{3 j^2 + 3 j + 1}{N^2} \right).
\end{equation}

In Figure \ref{fig:numericalfvfddensities}, we compare the numerical solutions to our finite volume scheme for the fast-dimer competition after a large number of time steps with the $g^{\lambda}_{\theta}(z)$ achieved as the long-time behavior of the fast-dimer PDE dynamics from Equation \eqref{eq:FDmultilevelPDE}. We again see good agreement between the numerical solutions computed from an initial uniform density and the analytically computed steady state with H{\"o}lder exponent $\theta = 1$ near the all-dimer composition $z = 1$. In this figure, we only consider the case $\eta = 1$, because protocell-level competition most favors the all-dimer composition on the fast-dimer edge under any complementarity scenario (corresponding to any $\eta \in [0,1$).

\begin{figure}[ht!]
    \centering
    \includegraphics[width=0.7\textwidth]{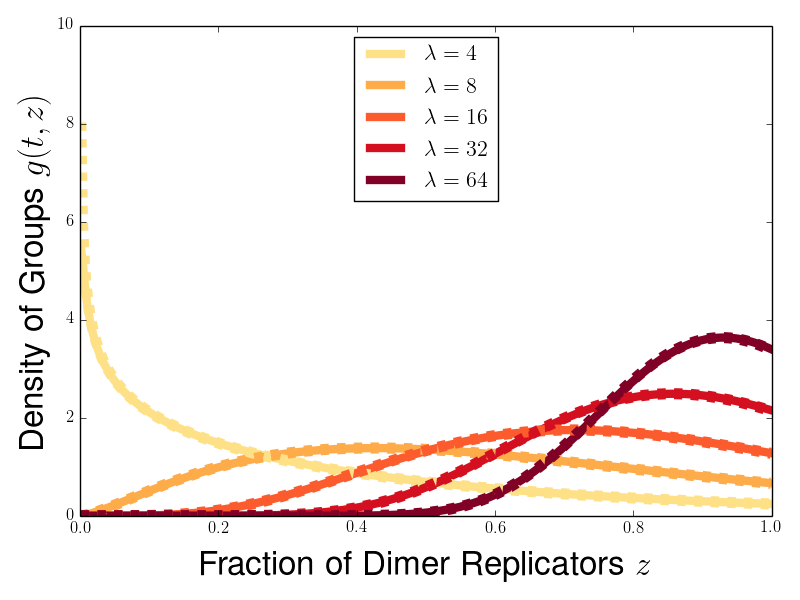}
    \caption{Comparison of numerical solutions for finite volume scheme from Equation \eqref{eq:PDEwithgridFD} after many time steps (solid lines) with steady state densities from Equation \eqref{eq:plambdathetaFS} for the baseline fast-slow protocell model (dashed lines), plotted for $\eta = 1$, $s=1$ and various values of $\lambda$. Numerical solutions $\{g_j(t)\}_{\{j \in 0,\cdots,N-1\}}$ displayed after 5000 time steps of forward Euler method with step-size $\Delta t = 0.003$ starting from uniform initial distribution, while analytical steady states $g^{\lambda}_{\theta}(z)$ are provided for H{\"o}lder exponent $\theta = 1$  near $z=1$.}
    \label{fig:numericalfvfddensities}
\end{figure}

\subsection{Finite Volume Scheme for Three-Type Dynamics}
\label{sec:fvtrimorphic}

As previous work has found qualitative agreement between a two-type finite volume numerical method and analytical predictions for the multilevel dynamics for evolutionary games \cite{cooney2020pde}, we will now extend this finite volume approach to accommodate three types of individuals in groups of fixed size. Because the state space for protocell composition in the slow-fast-dimer system is a three-type simplex, we will adapt our approach to handle within-cell replicator equations on the simplex. 

For a finite volume discretization for the three-type multilevel selection dynamics, we can divide the simplex into cells allowing us to compute flux in the x and y directions. While general two-dimensional domains can require the use of complicated spatialvolumes, we can use a special discretization of the simplex in order to break down the numerical scheme cleanly into fluxes in the x and y direction as is typically implemented for finite-volumes schemes on a 2D rectangular domain (i.e. Levesque \cite{leveque2002finite}). For a given integer $N$, this discretization consists of dividing the simplex into $\frac{N(N-1)}{2}$ squares of side length $\frac{1}{N}$ and $N$ isosceles right triangles with legs of length $\frac{1}{N}$. We illustrate this spatial grid in Figure \ref{fig:simplexvolumes} for side lengths of $\frac{1}{N} = \frac{1}{10}$ (left) and $\frac{1}{N} = \frac{1}{20}$ (right).

\begin{figure}[ht]
\centering
\includegraphics[width = 0.48\linewidth]{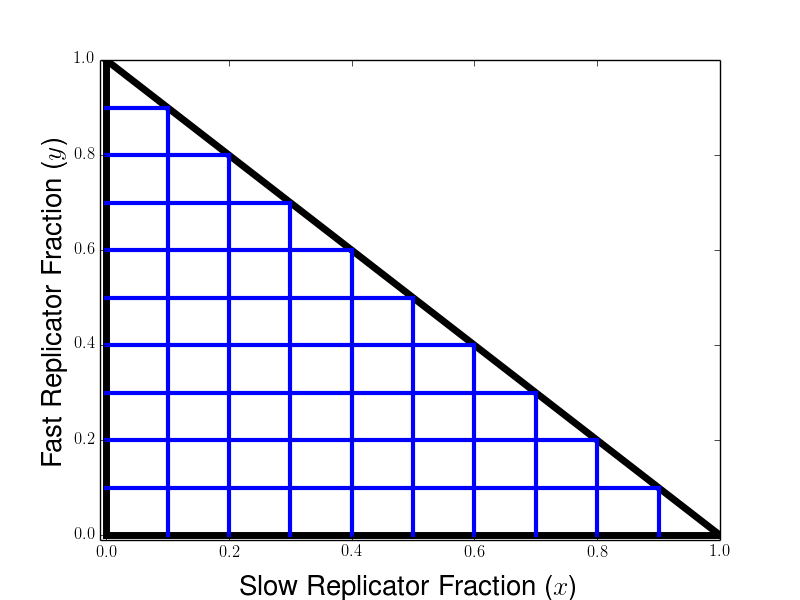}
\includegraphics[width = 0.48\linewidth]{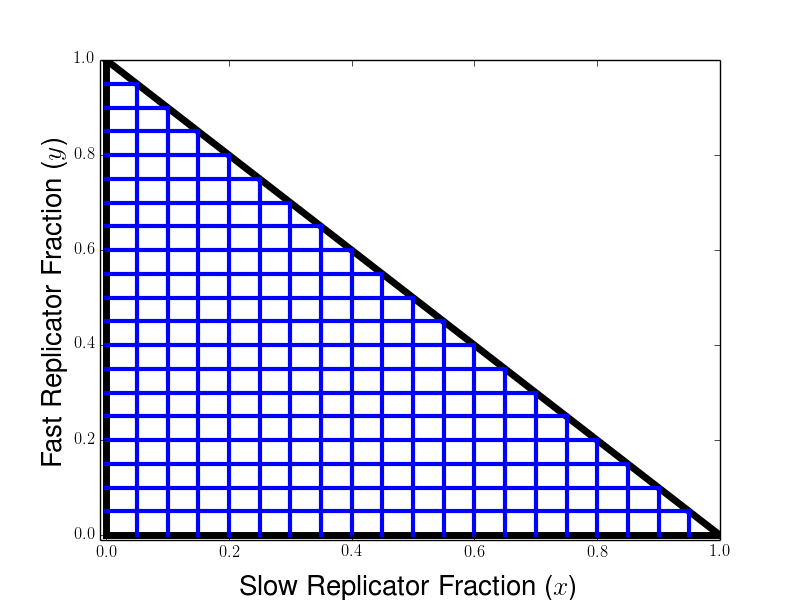}
\caption[Spatial discretization for three-type simplex.]{Spatial discretization for three-type simplex. For a given number $N$, we can divide the unit square into squares of sidelength $1/N$. If we only include the portion of these rectangles that actually fall within the simplex, then there are $N$ triangular volumes with area $\frac{1}{2N^2}$ and $\frac{N^2 - N}{2}$ square volumes with area $\frac{1}{N^2}$. Example grids depicted for $N = 10$ (left) and $N = 20$ (right).}
\label{fig:simplexvolumes}
\end{figure}

Counting from $0$, we label grid points that are $j$th in the $x$-direction and $k$th in the $y$-direction by $(x_j,y_k)$. Then, denoting by $C_{j,k}$ the volume with $(x_j,y_k)$ as its bottom-left corner, we label the edges of square volumes with $u_1$ through $u_4$ and label the edges of $v_1$ through $v_3$ for triangular volumes by starting at the bottom-left and proceeding clockwise. We also note that volumes with $(x_j,y_k)$ as its bottom-left corner are squares when $i + j \leq N-1$, while the volumes are triangles when $i+j = N-1$. 

We can proceed to derive the finite volume discretization for the square volumes by integrating Equation \eqref{eq:multileveltrimorphic}, yielding 
\begin{dmath*}
\int_{x_j}^{x_{j+1}} \int_{y_k}^{y_{k+1}} \dsdel{\rho(t,x,y)}{t} dy dx  \\ = - \int_{x_j}^{x_{j+1}} \int_{y_k}^{y_{k+1}} \left(\dsdel{}{x} \left[x \left( b_S - b_D  + \left( b_D - b_S \right) x + \left( b_D - b_F \right) y \right) \rho(t,x,y) \right]\right) dy dx  - \int_{x_j}^{x_{j+1}} \int_{y_k}^{y_{k+1}} \left(\dsdel{}{y} \left[ y \left( b_F - b_D + \left(b_D - b_S \right)x +  \left( b_D - b_F \right) y \right) \rho(t,x,y) \right]\right) dy dx
+ \int_{x_j}^{x_{j+1}} \int_{y_k}^{y_{k+1}} \lambda \rho(t,x,y) \left[G(x,y,1-x-y) - \int_{0}^1 \int_0^{1-x} G(u,v,1-u-v) \rho(t,u,v) dv du \right] dy dx
\end{dmath*}
For convenience, we will now abbreviate the characteristic ODEs corresponding to the  gene-level dynamics as $\ddt{x} = F_1(x,y)$ and $\ddt{y} = F_2(x,y)$. We can rewrite the integrals for our advection terms using the 2D divergence theorem, which lets us write the above equation as 
\begin{dmath*}
\int_{x_j}^{x_{j+1}} \int_{y_k}^{y_{k+1}} \dsdel{\rho(t,x,y)}{t} dy dx = \oint_{u_1} \left(\begin{smallmatrix} F_1(x,y) \\ F_2(x,y) \end{smallmatrix}\right) \cdot \left(\begin{smallmatrix} 1 \\ 0 \end{smallmatrix}\right) \rho(t,x,y) d(u_1) \\
- \oint_{u_2} \left(\begin{smallmatrix} F_1(x,y) \\ F_2(x,y) \end{smallmatrix}\right) \cdot \left(\begin{smallmatrix} 0 \\ 1 \end{smallmatrix}\right) \rho(t,x,y) d(u_2) \\
- \oint_{u_3} \left(\begin{smallmatrix} F_1(x,y) \\ F_2(x,y) \end{smallmatrix}\right) \cdot \left(\begin{smallmatrix} 1 \\ 0 \end{smallmatrix}\right) \rho(t,x,y) d(u_3) \\
+ \oint_{u_4} \left(\begin{smallmatrix} F_1(x,y) \\ F_2(x,y) \end{smallmatrix}\right) \cdot \left(\begin{smallmatrix} 0 \\ 1 \end{smallmatrix}\right) \rho(t,x,y) d(u_4)
+ \int_{x_j}^{x_{j+1}} \int_{y_k}^{y_{k+1}} \lambda \rho(t,x,y) \left[G(x,y,1-x-y) - \int_{0}^1 \int_0^{1-x} G(u,v,1-u-v) \rho(t,u,v) dv du \right] dy dx,
\end{dmath*}
where we chose the appropriate outward normal vectors for each edge $u_i$. Evaluating the dot products between the characteristic curves and the unit normals, we can now rewrite our equation as 
\begin{dmath*}
\int_{x_j}^{x_{j+1}} \int_{y_k}^{y_{k+1}} \dsdel{\rho(t,x,y)}{t} dy dx  = \int_{y_k}^{y_{k+1}} F_1(x_j,y) \rho(t,x,y) dy - \int_{x_k}^{x_{k+1}} F_2(x,y_{k+1}) \rho(t,x,y) dx 
- \int_{y_k}^{y_{k+1}} F_1(x_{k+1},y) \rho(t,x,y) dy 
+  \int_{x_j}^{x_{k+1}} F_2(x,y_{k}) \rho(t,x,y) dx
+ \int_{x_j}^{x_{j+1}} \int_{y_k}^{y_{k+1}} \lambda \rho(t,x,y) \left[G(x,y,1-x-y) - \int_{0}^1 \int_0^{1-x} G(u,v,1-u-v) \rho(t,u,v) dv du \right] dy dx
\end{dmath*}
This is the analogue of Equations \eqref{eq:PDEwithgridFS} or \eqref{eq:PDEwithgridFD} from the two-type finite volume method studied on the fast-slow and fast-dimer edges of the simplex. The main difference is that the fluxes on the edge of the volume must now be competed as integrals over the edges. When we take the piecewise constant approximation for $\rho(t,x,y)$, we see that an ambiguity arises for the definition of $\rho(t,x,y)$ when either $x$ or $y$ lies on a grid point, and therefore we need to employ a 2D version of the upwinding scheme to resolve these ambiguities. 

We can use the expressions from Equation \eqref{eq:withincelldimorphic} to calculate the flux across the edges. Along vertical edges starting at point $(x_j,y_k)$, the flux is
\begin{align*}
\ds\int_{y_k}^{y_{k+1}} F_1(x_j,y) dy &= \ds\int_{y_k}^{y_{k+1}} \left[ \left(b_S - b_D \right) x_j \left(1 - x_j \right) - \left(b_F - b_D\right) x_j y_k \right] dy \\ &= x_j \left(y_{k+1} - y_k \right) \left[ \left( b_S - b_D \right) \left(1 - x_j \right) - \left( b_F - b_D \right) \left( \frac{y_{k+1} + y_k}{2} \right) \right]
\end{align*}
This flux can potentially take either sign. For sufficiently small $x_j$ and $y_k$, the flux will be positive because slow replicators will reproduce more quickly than the dimers who occupy fraction $1 - x_j - y_k$ of the group. When either $x_j$ or $y_k$ is sufficiently large, the fast replicators will replicate quickly enough to cause a net negative flux in slow replicators across the edge. For the flux along horizontal edges starting at $(x_j,y_k)$, we use the fact that $x_j \leq x_{j+1}$, $x_{j+1} + y_k \leq 1$, and $b_F \geq b_S$ to see that
\begin{align*}
    \int_{x_j}^{x_{j+1}} F_2(x,y_k) dx &=  \int_{x_j}^{x_{j+1}} \left[ \left(b_F - b_D\right) \left(1 - y_k \right) - \left( b_S - b_D \right) x y_k \right] dx  \\
    &= y_k  \left(x_{j+1} - x_j \right) \left[ \left(b_F - b_D \right) \left( 1 - y_k \right) - \left(b_S - b_D \right) \left( \frac{x_{j+1} + x_j}{2} \right)  \right] \\
    & \geq \left( b_F - b_D \right) y_k  \left(x_{j+1} - x_j \right) \underbrace{\left[1 - x_{j+1} - y_k \right]}_{\geq 0} \\ &\geq 0
\end{align*}
For any horizontal edge, the flux will be downward because the within-group dynamics so strongly favor reproduction of the fast replicators.

We now consider the net change in probability for triangular volumes. Using a similar argument with the divergence theorem, we obtain the following expression for the probability on triangular volumes with bottom-left corners $(x_j,x_{N-j-1})$ 
\begin{dmath*} 
\int_{x_j}^{x_{j+1}} \int_{y_{N-j-1}}^{1-x} \dsdel{\rho(t,x,y)}{t} dy dx  = \int_{y_{N-j-1}}^{y_{N-j}} F_1(x_j,y) \rho(t,x,y) dy
+   \int_{x_k}^{x_{k+1}} F_2(x,y_{N-j-1}) \rho(t,x,y) dx
- \frac{1}{\sqrt 2}\oint_{v_2} \left(\begin{smallmatrix} F_1(x,y) \\ F_2(x,y) \end{smallmatrix} \right) \cdot \left(\begin{smallmatrix} 1 \\ 1\end{smallmatrix} \right) \rho(t,x,y) dx dy
+ \int_{x_j}^{x_{j+1}} \int_{y_{N-j-1}}^{1-x} \lambda \rho(t,x,y) \left[G(x,y,1-x-y) - \int_{0}^1 \int_0^{1-x} G(u,v,1-u-v) \rho(t,u,v) dv du \right] dy dx.
\end{dmath*}
Here we still need to figure out how to deal with the flux across the hypotenuse edge of triangular volume, which lies along the boundary line of the simplex given by $y = 1 - x$. Noting that the within-group dynamics of Equation \eqref{eq:withincelldimorphic} satisfy $\ddt{}\left(x+y+z\right) = 0$, we can see that 
\begin{align*} x
\begin{pmatrix} F_1(x,y) \\ F_2(x,y) \end{pmatrix} \cdot \begin{pmatrix} \frac{1}{\sqrt{2}} \\ \frac{1}{\sqrt{2}} \end{pmatrix}  &= \frac{1}{\sqrt{2}}  \left( F_1(x,y) + F_2(x,y) \right) = \ddt{} \left(x + y \right) = - \ddt{z} \\ 
&= z \left[ b_D - \left( b_S x +  b_F y +  b_D z \right)  \right] \\ &= 0 \textnormal{ on $v_2$ edge of volume (because $z = 0$ on this edge).}
\end{align*}
Therefore we see that the flux vanishes everywhere on the hypotenuse edge $v_2$ of the triangular volume, and therefore we only need to worry about flux along the horizontal edge on the bottom of the volume and vertical edge on the left of the volume.

Now that we have discussed how to compute flux along volume boundaries, we can introduce our piecewise-constant approximation to $\rho(t,x,y)$ in which we assume that $\rho(t,x,y)$ takes a constant value $\rho_{j,k}(t)$ for all $(x,y)$ in the volume $C_{j,k}$ whose bottom-left corner is $(x_j,y_k)$. Because all of our volumes have left and bottom edges, but only the rectangular ones have top and right edges within the simplex, we use the following coefficient to handle our flux terms across both cases
 \begin{equation} \label{eq:edgetypevariable}
   \alpha_{j,k} = \left\{
     \begin{array}{lr}
       1 & : j + k < N - 1\\
       0 & : j + k = N - 1
     \end{array}
   \right.
\end{equation} 

For the a volume with bottom-corner $(x_j,y_k)$ we know that the net flux pushes towards higher levels of fast replicator $y$, and therefore we will use upwinding to discretize our advection terms in the $y$-direction as
\begin{equation} \label{eq:ydiscreteadvection}
\left( \int_{x_j}^{x_{j+1}} F_2(x,y_k) dx \right) \rho_{j,k-1} - \alpha_{j,k} \left( \int_{x_j}^{x_{j+1}} F_2(x,y_{k+1}) dx \right) \rho_{j,k} 
\end{equation}
For the advection in the $x$-direction, we know that the flux across vertical edges can take either sign, so we can introduce the following notation to describe our upwinding rule for a vertical edge starting at $(x_j,y_k)$
 \begin{equation} \label{eq:xupwinding}
    U\left(x_j,y_k\right) = \left\{
     \begin{array}{lr}
       \rho_{j-1,k} & :  \ds\int_{y_k}^{y_{k+1}} F_1(x_j,y) dy > 0 \\
       \rho_{j,k} & : \ds\int_{y_k}^{y_{k+1}} F_1(x_j,y) dy < 0
     \end{array} 
   \right. .
\end{equation} 
Then we see that the contribution of flux in the $x$-direction is given by 
\begin{equation} \label{eq:xdiscreteadvection}
\left( \int_{y_k}^{y_{k+1}} F_1(x_j,y) dy \right) U\left(x_j,y_k\right) -  \alpha_{j,k} \left( \int_{y_k}^{y_{k+1}} F_1(x_{j+1},y) dy \right) U\left(x_{j+1},y_k\right).
\end{equation}
Next we can consider the discretized version of between-group competition. We note that the average payoff on square and triangular volumes are given by
\begin{subequations} \label{eq:Gavgcells}
     \begin{align}
        j + l < N - 1 \: : \: G_{j,k} &:= N^2 \int_{x_j}^{x_{j+1}} \int_{y_k}^{y_{k+1}} G(x,y,1-x-y) dx dy \\ 
        j + k = N - 1 \: : \: G_{j,k} &:= 2 N^2 \int_{x_j}^{x_{j+1}} \int_{y_k}^{1 - x} G(x,y,1-x-y) dx dy
     \end{align}
\end{subequations}
Then considering the term describing the gain in probability density through between-group competition, we see for our piecewise constant approximation $\rho_{j,k}(t)$ on square volumes that 
\begin{dmath} \label{eq:Gjksquare}
 \int_{x_j}^{x_{j+1}} \int_{y_k}^{y_{k+1}} G(x,y,1-x-y) \rho(t,x,y) dy dx =  \rho_{j,k}(t) \int_{x_j}^{x_{j+1}} \int_{y_k}^{y_{k+1}} G(x,y,1-x-y) dy dx =  \left(\frac{1}{N^2}\right) G_{j,k} \rho_{j,k}(t).
\end{dmath}
Similarly, we see that the equivalent term on triangular volumes is given by
\begin{dmath} \label{eq:Gjktriangle}
 \int_{x_j}^{x_{j+1}} \int_{y_k}^{1-x} G(x,y,1-x-y) \rho(t,x,y) dy dx =   \left(\frac{1}{ 2 N^2}\right) G_{j,k} \rho_{j,k}(t).
\end{dmath}
Because of the different coefficients for the rectangular and triangular grid volumes, it will be convenient to introduce the following shorthand to handle the cases together
 \begin{equation} \label{eq:edgetypegroupshorthand}
   \beta_{j,k} = \left\{
     \begin{array}{lr}
       1 & : j + k < N - 1\\
       2 & : j + k = N - 1
     \end{array}
   \right.
\end{equation} 
and we will denote integration over our cell with bottom-left corner at $C_{j,k} = (x_j,y_k)$ by $\iint_{C_{j,k}}$. By applying the same reasoning, we can see that the loss of probability density due to between-group competition is given by 
\begin{dmath} \label{eq:aveGcell}
 - \left(\int_0^1 \int_{0}^{1-x} G(u,v,1-u-v) \rho(t,u,v) du dv \right)  \iint_{C_{j,k}} \rho(t,x,y) dx dy = - \rho_{j,k}(t) \left(\frac{1}{\beta_{j,k} N^2} \right) \left( \ds\sum_{l,m \geq 0}^{l+m \leq N-1} G_{l,m} \rho_{l,m}(t) \right).
\end{dmath}
Our last ingredient for putting together a finite-volume approximation for Equation \eqref{eq:multileveltrimorphic} is to describe how the piecewise constant approximation $\rho_{j,k}(t)$ changes in time. From its definition as the average value on the volume $C_{j,k}$, we can see that 
\begin{equation} \label{eq:piecwisetime}
    \dsddt{} \iint_{C_{j,k}} \rho(t,x,y) dy = \frac{1}{\beta_{j,k} N^2} \dsddt{\rho_{j,k}(t)}
\end{equation}
Putting together the terms we calculation in Equations \eqref{eq:ydiscreteadvection}, \eqref{eq:xdiscreteadvection}, \eqref{eq:Gjksquare}, \eqref{eq:Gjktriangle}, \eqref{eq:aveGcell}, and \ref{eq:piecwisetime}, we can finally describe finite-volume approximation of our multilevel slow-fast-dimer dynamics by the system of ODEs given by 
\begin{dmath} \label{eq:trimorphicfinitevolume}
\frac{1}{\beta_{j,k}} \dsddt{\rho_{j,k}(t)} = \left( \int_{y_k}^{y_{k+1}} F_1(x_j,y) dy \right) U\left(x_j,y_k\right) -  \alpha_{j,k} \left( \int_{y_k}^{y_{k+1}} F_1(x_{j+1},y) dy \right) U\left(x_{j+1},y_k\right)  + \left( \int_{x_j}^{x_{j+1}} F_2(x,y_k) dx \right) \rho_{j,k-1} - \alpha_{j,k} \left( \int_{x_j}^{x_{j+1}} F_2(x,y_{k+1}) dx \right) \rho_{j,k} 
+ \frac{\lambda}{\beta_{j,k}} \rho_{j,k}(t) \left[ G_{j,k} -  \ds\sum_{l,m \geq 0}^{l+m \leq N-1} G_{l,m} \rho_{l,m}(t)  \right]
\end{dmath}

\subsubsection{Calculation of Average Reproduction Rate on volumes}

A final step needed to study numerical solutions for our finite volume approximation from Equation \eqref{eq:trimorphicfinitevolume} is to evaluate the values for the average protocell-level reproduction rate $G_{j,k}$ on the volume $C_{j,k}$.

To integrate $G(x,y)$ over our volumes, we first use Equation \eqref{eq:Gtrixy} to rewrite the protocell-level replication rate in the form
\begin{equation} \label{eq:Gxyrewritten}
    G(x,y) = \frac{1}{4} \left( 2 - \eta\right) + \frac{1}{2} \left( 1 - \eta\right) \left( x - y \right) - \frac{\eta}{4} \left( x^2 - 2xy + y^2 \right).
\end{equation}
This tells us that we can can obtain an expression for $\int_{C_{j,k}} G(x,y) dx dy$ by computing the volume-averages of each of the monomials in $x$ and $y$ with degree at most $2$. 

Starting with the fraction of slow replicators $x$, we first look to calculate the mass of slow replicators on rectangularvolumes $C_{j,k}$ with $j+k < N-1$. Integrating in $x$, we see that
\begin{subequations} \label{eq:xmeancells}
\begin{align*}
\iint_{C_{j,k}} x dx dy &= \int_{\frac{k}{N}}^{\frac{k+1}{N}}  \int_{\frac{j}{N}}^{\frac{j+1}{N}} x dx dy =  
 \frac{1}{2} \int_{\frac{j}{N}}^{\frac{j+1}{N}} \left[ \left( \frac{j+1}{N} \right)^2 - \left(\frac{j}{N}\right)^2 \right] dy.
\end{align*}

Integrating with respect to $y$ now allows us to see that
\begin{equation}
    \iint_{C_{j,k}} x dx dy = \frac{2j + 1}{2 N^3} \: \: \mathrm{for} \: \: j + k < N -1.
\end{equation}
Next, we can calculate the mass of slow replicators on the triangular volumes $C_{j,k}$ for $j + k = N - 1$. Noting that $k = N - j - 1$, we can use Equation \eqref{eq:Gavgcells} to see that, for the triangular volumes, 
\begin{equation*}
\int_{C_{j,k}} x dx dy =  \int_{\frac{j}{N}}^{\frac{j+1}{N}} \int_{\frac{N - j - 1}{N}}^{1-x} x dy dx = \int_{\frac{j}{N}}^{\frac{j+1}{N}}  x \left[ \frac{j+1}{N} - x\right] dx = \left( \frac{j+1}{2N} x^2 - \frac{1}{3} x^3 \right) \bigg|_{\frac{j}{N}}^{\frac{j+1}{N}}.
\end{equation*}
Simplifying the righthand side tells us that the mass of slow replicators on triangularvolumes is given by
\begin{equation}
    \int_{C_{j,k}} x dx dy  = \frac{1}{N^3} \left[ \frac{j}{2} + \frac{1}{6} \right] \: \: \mathrm{for} \: \: j + k = N - 1.
\end{equation}
\end{subequations}
Using an analogous calculation, we can see that the mass of fast replicators on rectangular volumes is given by
\begin{subequations} \label{eq:ymeancells}
\begin{align}
    \iint_{C_{j,k}} y dx dy &= \frac{1}{2N^3} \left[ 2 k + 1\right].
\end{align}
For triangular volumes, we can apply Equation \eqref{eq:Gavgcells} and Fubini's theorem to see that
\begin{equation*}
    \iint_{C_{j,k}} y dx dy = \int_{\frac{N-j-1}{N}}^{\frac{N-j}{N}} \int_{\frac{j}{N}}^{1-y} y dx dy = \int_{\frac{N-j-1}{N}}^{\frac{N-j}{N}} y \left[ \frac{N-j}{N} - y \right] dy = \left(\frac{N-j}{2N} y^2 - \frac{1}{3} y^3 \right) \bigg|_{\frac{N-i-1}{N}}^{\frac{N-j}{N}},
\end{equation*}
and simplifying the righthand side tells us that the mass of fast replicators on triangular volumes is given by 
\begin{equation}
     \iint_{C_{j,k}} y dx dy = \frac{1}{N^3} \left[\frac{1}{2} \left( N - j \right) - \frac{1}{3} \right]
\end{equation}
\end{subequations}

\begin{remark}
In addition to their use in calculating volume-averages pf the collective replication rate, the expressions from Equations \eqref{eq:xmeancells} and \eqref{eq:ymeancells} arise in quantifying the mena fraction of slow genes in the population as studied in Figure \ref{fig:peakandmeantrimorphic}. Noting that the fraction of dimers is given by $z = 1 - x - y$ on the simplex, the fraction of slow genes in a replicator can be written as $x + \frac{z}{2} = \frac{x-y + 1}{2}$. Therefore the average fraction of slow genes across the population of protocells under the finite-volume approximation is given by
\begin{dmath}
\int_0^1 \int_0^{1-x} \frac{1}{2} \left[x - y + 1\right] \rho(t,x)  dx dy \approx \frac{1}{2} \ds\sum_{l,m \geq 0}^{l + m \leq N - 1} \iint_{C_{l,m}} \left[x - y + 1\right] \rho_{l,m}(t) dx dy =  \frac{1}{2} \ds\sum_{l,m \geq 0}^{l + m \leq N - 1} \left[ \rho_{l,m}(t) \left(   \iint_{C_{l,m}} \left( x - y \right) dx dy -  \frac{1}{\beta_{l,m} N^2} \right)   \right].
\end{dmath}
Applying Equations \eqref{eq:xmeancells} and \eqref{eq:ymeancells} to the integral on the righthand side then yields the average fraction of the slow gene present over the population of protocells as shown in Figure \ref{fig:peakandmeantrimorphic} for long-time states of the multilevel dynamics.
\end{remark}
Next, we can integrate the quadratic monomials in $x$ and $y$ over the rectangular and triangular volumes. Using the same approach as in the case of the linear terms, we can see that the integrals of $x^2$, $xy$, and $y^2$ over the volumes $C_{j,k}$ are given by 
\begin{subequations} \label{eq:x2meancells}
\begin{align}
   \iint_{C_{j,k}} x^2 dx dy  &= \frac{1}{N^4} \left[ j^2 + j + \frac{1}{3} \right]  \: \: \mathrm{for} \: \: j + k < N - 1\\
    \iint_{C_{j,k}} x^2 dx dy  &= \frac{1}{N^4} \left[ \frac{j^2}{2} + \frac{j}{3} + \frac{1}{12} \right]   \: \: \mathrm{for} \: \: j + k = N - 1,
\end{align}
\end{subequations}
\begin{subequations} \label{eq:y2meancells}
\begin{align}
   \iint_{C_{j,k}} y^2 dx dy  &= \frac{1}{N^4} \left[ k^2 + k + \frac{1}{3} \right]   \: \: \mathrm{for} \: \: j + k < N - 1 \\
    \iint_{C_{j,k}} y^2 dx dy  &= \frac{1}{N^4} \left[ \frac{(N-j)^2}{2} - \frac{2(N-j)}{3} + \frac{1}{4} \right]   \: \: \mathrm{for} \: \: j + k = N - 1,
\end{align}
\end{subequations}
and
\begin{subequations} \label{eq:xymeancells}
\begin{align}
   \iint_{C_{j,k}} xy dx dy  &= \frac{1}{N^4} \left[ jk + \frac{j}{2} + \frac{k}{2} + \frac{1}{4} \right]   \: \: \mathrm{for} \: \: j + k < N - 1 \\
    \iint_{C_{j,k}} xy dx dy  &= \frac{1}{N^3} \left[ \frac{j}{2} + \frac{1}{6} \right] - \frac{1}{N^4} \left[ \frac{j^2}{2} + \frac{j}{2} + \frac{1}{8} \right]   \: \: \mathrm{for} \: \: j + k = N - 1.
\end{align}
\end{subequations}
In addition, we see from Equation \eqref{eq:Gxyrewritten} that the integral of the constant term in $G(x,y)$ is given by
\begin{subequations} \label{eq:constantmeancells}
\begin{align}
   \iint_{C_{j,k}} \frac{1}{4} \left( 2 - \eta\right) dx dy  &= \frac{1}{4 N^2} \left(2 - \eta\right)  \: \: \mathrm{for} \: \: j + k < N - 1 \\
    \iint_{C_{j,k}} \frac{1}{4} \left( 2 - \eta\right) dx dy  &=  \frac{1}{8 N^2} \left( 2 - \eta \right) \: \: \mathrm{for} \: \: j + k = N - 1.
\end{align}
\end{subequations}
Putting all of this together, we can now calculate the integral of the protocell-level replication rate over a given volume by applying the integrals obtained in Equations \eqref{eq:xmeancells}, \eqref{eq:ymeancells}, \eqref{eq:x2meancells}, \eqref{eq:y2meancells}, \eqref{eq:xymeancells}, and \eqref{eq:constantmeancells} and the formula for $G(x,y)$ from Equation \eqref{eq:Gxyrewritten}. We see that the integral of $G(x,y)$ over the rectangular volumes $C_{j,k}$ for $j + k < N -1$ is given by
\begin{subequations}
\label{eq:Gxyintegral}
\begin{dmath}
 \iint_{C_{j,k}} G(x,y) dx dy =  \int_{\frac{k}{N}}^{\frac{k+1}{N}}  \int_{\frac{j}{N}}^{\frac{j+1}{N}} G(x,y) dx dy 
 =\frac{1}{2 N^2} \left[ 1 - \frac{\eta}{2} \right] + \frac{1}{2 N^3} \left( 1 - \eta \right) \left( j - k \right) + \frac{\eta}{ N^4} \left[- \frac{j^2}{4} - \frac{k^2}{4} + \frac{jk}{2} - \frac{1}{24} \right].
\end{dmath}
For triangular volumes, corresponding to indices $j + k = N = 1$, we can similarly compute that
\begin{equation}
  \iint_{C_{j,k}} G(x,y) dx dy = \int_{\frac{j}{N}}^{\frac{j+1}{N}} \int_{\frac{N - j - 1}{N}}^{1-x} G(x,y) dy dx 
  = \frac{1}{N^3} \left[ \frac{j}{2} + \frac{1}{4} \right] - \frac{\eta}{N^4} \left[ \frac{j^2}{2} + \frac{j}{2} + \frac{7}{48} \right].
\end{equation}
\end{subequations}
Furthermore, we can calculate the average protocell-level replication rate $G_{j,k}$ over $C_{j,k}$ from the integrals of Equation \eqref{eq:Gxyintegral} and dividing through by the area of $C_{j,k}$ (equal to $\frac{1}{N^2}$ for the rectangular volumes and $\frac{1}{2 N^2}$ for the triangular volumes). For the rectangular volumes (with indices $j+k < N-1$), we therefore see that average protocell-level replication rates are given by
\begin{subequations}
\label{eq:Gjkmean}
\begin{dmath}
G_{j,k} =  N^2 \int_{\frac{k}{N}}^{\frac{k+1}{N}}  \int_{\frac{j}{N}}^{\frac{j+1}{N}} G(x,y) dx dy 
 =\frac{1}{2 N^2} \left[ 1 - \frac{\eta}{2} \right] + \frac{1}{2 N} \left( 1 - \eta \right) \left( j - k \right) + \frac{\eta}{ N^2} \left[- \frac{j^2}{4} - \frac{k^2}{4} + \frac{jk}{2} - \frac{1}{24} \right],
\end{dmath}
and the average protocell-level replication rates over triangular volumes (whose indices satisfy $j+k = N -1$) are given by
\begin{equation}
  G_{j,k} = 2 N^2 \int_{\frac{j}{N}}^{\frac{j+1}{N}} \int_{\frac{N - j - 1}{N}}^{1-x} G(x,y) dy dx 
  = \frac{1}{N} \left[ j + \frac{1}{2} \right] - \frac{\eta}{N^2} \left[ j^2 + j + \frac{7}{24} \right].
\end{equation}
\end{subequations}
Applying the formulas from Equation \eqref{eq:Gjkmean} for $G_{j,k}$ to the finite-volume approximation of Equation \eqref{eq:trimorphicfinitevolume}, we have now specified the full numerical scheme used in Section \ref{sec:trimorphicnumerics} to study the trimorphic multilevel dynamics. 

\end{document}